\documentclass[a4paper,11pt]{article}

\usepackage{fullpage}
\usepackage{times}
\usepackage{soul}
\usepackage{url}
\usepackage[utf8]{inputenc}
\usepackage[small]{caption}
\usepackage{graphicx}
\usepackage[dvipsnames]{xcolor}
\usepackage{amsmath}
\usepackage{booktabs}
\usepackage{algorithm}
\usepackage{enumitem}
\usepackage{dsfont}
\usepackage[capitalize]{cleveref}
\usepackage{tikz}
\usepackage{multirow}

\usepackage{amsthm}
\usepackage{amssymb}
\usepackage[noend]{algorithmic}

\usepackage{bbm}
\usepackage{physics}
\usepackage{thm-restate}
\usepackage{pgfplots}
\usepackage{mathtools}

\usetikzlibrary{calc}
\usetikzlibrary{plotmarks}
\pgfplotsset{compat=1.3}
\pgfplotsset{
    legend image code/.code={
        \draw[%
                /pgfplots/mesh=false,
                bar width=3pt,%
                bar shift=0pt,%
                mark repeat=2,%
                mark phase=2]
                plot coordinates {
                    (0cm,0cm)
                    (0.15cm,0cm)
                    (0.3cm,0cm)%
                };%
    },   
}

\usepackage{etoolbox}
\makeatletter
  \patchcmd\pgfplots@assign@default@tick@foraxis@compute{\advance\MAX by0.5\H}{\advance\MAX by1.5\H}{}{}
\makeatother

\newtheorem{theorem}{Theorem}[section]

\newtheorem{proposition}[theorem]{Proposition}
\newtheorem{lemma}[theorem]{Lemma}

\theoremstyle{definition}
\newtheorem{definition}[theorem]{Definition}
\newtheorem{example}[theorem]{Example}

\crefname{claim}{Claim}{Claims}

\crefname{figure}{Figure}{Figures}

\newcommand{\R}{\mathbb{R}}

\newcommand{\UW}{\textup{UW}}
\newcommand{\EW}{\textup{EW}}
\newcommand{\MUW}{\textup{MUW}}

\DeclareMathOperator*{\argmax}{arg\,max}

\usepackage{natbib}
\usepackage{authblk}

\allowdisplaybreaks

\title{\bf Optimizing the Envy Cycle Elimination Algorithm}

\author{Karen Frilya Celine}
\author{Warut Suksompong}

\affil{National University of Singapore}

\date{\vspace{-10mm}}

\begin{document}

\maketitle

\begin{abstract}
In the fair allocation of indivisible goods, a widely used notion of fairness is \emph{envy-freeness up to one good (EF1)}.
A classical way to compute an EF1 allocation is the \emph{envy cycle elimination (ECE) algorithm}, which iteratively assigns a good to an unenvied agent and, after each assignment, resolves any resulting envy cycle.
Although the ECE algorithm always produces an EF1 allocation, it leaves considerable freedom in choosing both the next good to allocate and the agent to receive it.
We investigate natural heuristics that exploit this flexibility to improve welfare guarantees.
For example, we show that if the heuristic jointly selects the good and the receiving agent maximizing the utility, the worst-case utilitarian welfare loss is significantly lower than that of the vanilla algorithm.
By contrast, restricting the heuristic to select only one of these two dimensions does not yield comparable improvements.
We also complement our theoretical results with empirical average-case analysis.
\end{abstract}

\section{Introduction}
\label{sec:intro}

The problem of fairly allocating limited resources arises throughout society and has been extensively studied in the vibrant area of \emph{fair division} \citep{Moulin03,Moulin19}.
An application that has received substantial attention recently is the allocation of \emph{indivisible goods}, such as books, clothing, electronics, and artwork \citep{DemkoHi88,BramsEdFi03,AmanatidisAzBi23}.
A prominent fairness notion is \emph{envy-freeness (EF)}, which stipulates that no agent strictly prefers another agent's bundle to her own, that is, the former agent does not \emph{envy} the latter agent.
Unfortunately, EF is not always satisfiable when allocating indivisible goods---for example, if there are two agents but only one valuable good.
In light of this, a common relaxation is \emph{envy-freeness up to one good (EF1)}, which requires that if an agent strictly prefers another agent's bundle to her own, this envy can be resolved by removing some good from the other agent's bundle.

A simple and elegant way of obtaining an EF1 allocation is via the \emph{envy cycle elimination (ECE)} algorithm introduced by \citet{LiptonMaMo04}.
Due to its wide applicability, the ECE algorithm constitutes one of the most fundamental methods for fairly allocating indivisible goods.\footnote{The work of \citet{LiptonMaMo04} received the ACM SIGecom Test of Time Award in 2025.}
Whenever there is an ``envy cycle'' among agents, the algorithm eliminates it by passing bundles upstream along the cycle, so that each agent in the cycle receives the bundle of an agent that she envied.
Repeating this procedure leads to an allocation with an acyclic ``envy graph'', which means that at least one agent is unenvied.
The algorithm then chooses an available good and allocates it to an unenvied agent---this ensures that even if any envy is introduced, the envy can be resolved by removing at most one good.
That is, the allocation at any stage of the algorithm, and therefore also the final allocation, is EF1.

The EF1 guarantee of the ECE algorithm holds for any choice of available good and unenvied agent in each iteration.
Thus, for a given instance, the ECE algorithm can produce different allocations depending on these choices, with some allocations yielding higher welfare than others.
This raises the question of whether there exist heuristics for selecting the good and agent in each round that produce allocations with better welfare guarantees than the vanilla ECE algorithm.
If so, which heuristics exhibit strong performance with respect to each welfare objective?
We shall consider both utilitarian welfare (the sum of all agents' utilities) and egalitarian welfare (the minimum utility across all agents).

In this paper, we propose several natural ECE heuristics and study their welfare guarantees in terms of their \emph{strong price of fairness}.
This notion was introduced by \citet{BeiLuMa21} as a more robust variant of the well-established \emph{price of fairness} notion \citep{BertsimasFaTr11,CaragiannisKaKa12}, which serves to quantify the welfare loss caused by requiring allocations to satisfy a specific (fairness) property.
The (weak) price of a property is defined as the worst-case ratio between the maximum welfare over all allocations and the \emph{maximum} welfare over allocations satisfying the given property.
In other words, it measures the worst-case welfare loss of an \emph{optimal} allocation satisfying a certain property.
However, when it comes to evaluating the welfare guaranteed by algorithms, this measure is too optimistic---for example, the welfare loss of an arbitrary ECE allocation can be far worse than that of an optimal ECE allocation.
On the other hand, the strong price of a property is defined as the worst-case ratio between the maximum welfare over all allocations and the \emph{minimum} welfare over allocations satisfying the given property.
Consequently, the strong price provides a welfare guarantee for \emph{any} allocation satisfying a particular property.

As we shall see, there is a non-trivial gap between the weak and strong prices of the ECE algorithm, both in terms of utilitarian and egalitarian welfare.
This means that for some instances, the best and worst ECE allocations admit a significant welfare difference.
We demonstrate that several of our proposed heuristics lead to lower strong utilitarian or egalitarian prices compared to the vanilla ECE algorithm, thereby guaranteeing better welfare in the worst case.
In addition, we complement our theoretical results with experimental analysis showing how these heuristics fare in the average case.

\subsection{Our results}
\label{subsec:results}

We consider instances with $n \geq 2$ agents and $m \geq 1$ goods, and assume that all agents have additive utilities.\footnote{While the ECE algorithm guarantees EF1 for the more general class of monotone utilities, we demonstrate in Appendix~\ref{subsec:monotone} that the welfare guarantee of the ECE algorithm cannot be further improved under monotone utilities.
In particular, we show that under monotone utilities, the weak utilitarian and egalitarian prices of ECE are both~$\infty$.
On the other hand, most of our results can be extended to subadditive utilities---see Appendix~\ref{subsec:subadditive} for details.}
A more detailed description of our model is given in \cref{sec:prelim}, where we also formally define the envy cycle elimination (ECE) algorithm (\cref{alg:ece}).

In \cref{sec:no_heuristic}, we analyze the weak and strong prices of ECE.
We show that the weak utilitarian price of the ECE algorithm is at most $n$, while its strong utilitarian price is exactly~$n^2$.
Similarly, we prove that the weak egalitarian price of ECE is at most $2n-1$, while its strong egalitarian price is $\infty$.
Hence, for both utilitarian and egalitarian welfare, there is a significant gap between the weak and strong prices of ECE, which motivates our study of heuristics.

We then propose several heuristics for the ECE algorithm and study their strong utilitarian and egalitarian prices.
We start with simple greedy heuristics, which are categorized based on whether they choose only the agents, only the goods, or both.
We first assume in \cref{subsec:agent_heuristic} that the ordering of goods is prespecified, and the heuristic can only decide which unenvied agent to allocate each good to.
In \cref{subsec:good_heuristic}, we address the opposite case where the heuristic can only choose which good to allocate, while the receiving unenvied agent is selected according to a prespecified ordering of agents.
The remaining scenario where  both the goods and agents are chosen by the heuristic is studied
in \cref{subsec:agent_good_heuristic}.
For each of the three cases, we investigate the natural greedy heuristic where an unenvied agent, an unallocated good, or both are chosen to maximize the utility respectively.
Furthermore, in \cref{sec:max_min_matching}, we examine a non-greedy heuristic based on \emph{max-min matching}, where the allocation of the first $n$ goods is chosen to maximize the egalitarian welfare.

The strong utilitarian and egalitarian prices of these heuristics are displayed in \cref{tab:summary}.
In particular, both the heuristic that chooses an unenvied agent maximizing the utility and the heuristic that chooses an unallocated good maximizing the utility have similar strong utilitarian and egalitarian prices to the vanilla ECE algorithm.
On the other hand, merely choosing an unenvied agent and an unallocated good maximizing the utility in the first round (and choosing the agents and goods arbitrarily in subsequent rounds) already reduces the strong utilitarian price to strictly less than\footnote{Note that the weak utilitarian price of round-robin is $n$~\citep{BeiLuMa21}. 
Hence, this heuristic has a comparable utilitarian welfare guarantee to the round-robin algorithm with optimal agent ordering and tie-breaking mechanism.} $n + \sqrt{n/2}$, compared to $n^2$ without any heuristic.
Choosing such a pair of unenvied agent and unallocated good in every round does not yield much further improvement in terms of the strong utilitarian price, which is at least $n$;
however, the strong egalitarian price when $n = 2$ reduces to at most $3$, as opposed to~$\infty$ when optimizing only the first round.
For $n \geq 3$, none of our greedy heuristics offers a finite strong egalitarian price, even for fixed $n$ and $m$.
On the other hand, the non-greedy heuristic based on max-min matching notably improves the strong egalitarian price from $\infty$ to $m-n+1$.

In \cref{sec:experiments}, we complement our theoretical results with experimental analysis.
We generate instances by drawing each agent's utility for each good from some distribution, and normalizing the sum of each agent's utilities for all goods to~$1$.
In line with the theoretical results, the heuristic that chooses a pair of unenvied agent and good maximizing the utility outperforms all other heuristics (including the vanilla ECE algorithm) on average, in terms of both utilitarian and egalitarian welfare, for most tested values of $n$ and $m$.

\begin{table}[t]
\centering
\caption{Summary of results}
\begin{tabular}{|l|c|c|}
    \hline
    \multirow{2}*{Heuristic} & \multicolumn{2}{|c|}{Strong price} \\ \cline{2-3}
    & Utilitarian & Egalitarian \\ \hline
    \hline
    No heuristic & $n^2$ & $\infty$ \\ \hline
    Agent with maximum utility & $n^2$ & $\infty$ \\ \hline
    Good with maximum utility & $n^2-n+1$ & $\infty$ \\ \hline
    \hangindent=0.4cm Agent \& good with maximum utility (first round only) & $\Theta(n)$ & $\infty$ \\ \hline
    \multirow{2}{*}{\hangindent=0.4cm Agent \& good with maximum utility} & \multirow{2}{*}{$\Theta(n)$} & $n = 2$: $2$ to $3$ \\
    & & $n\geq3$: $\infty$ \\ \hline
    Max-min matching & $n^2$ & $m-n+1$ \\ \hline
\end{tabular}
\label{tab:summary}
\end{table}

\subsection{Further related work}
\label{subsec:related}

While ECE heuristics have not been studied systematically to our knowledge, certain heuristics have been used to obtain allocations that satisfy desirable properties on top of EF1.
For example, an ECE heuristic that chooses an unenvied agent with an empty bundle whenever possible, and assigns to the chosen agent an available good that she values the most, guarantees both $1/2$-EFX~\citep[Theorem 3.1]{MarkakisSa23} as well as $1/2$-MMS~\citep[Theorem 6]{AmanatidisAzBi23}.\footnote{We refer to those papers for the definitions of EFX and MMS.}
Furthermore, if all agents have the same rankings over the goods, the ECE heuristic that allocates the most valuable good in every round guarantees both EFX \citetext{\citealp[Lemma 3.5]{BarmanKr20}; \citealp[Theorem 6.2]{PlautRoughgarden20}} and $2/3$-MMS \citep[pages 3--4]{BarmanKr20}.
Concurrently with our work, \citet{AmanatidisBiRe26} investigated ECE from the perspective of equilibria and proved that for certain versions of ECE, agents who play best responses are guaranteed approximate EF1 when there are two or three agents.

\citet{BhaskarSrVa21} showed that a variant of the ECE algorithm can be used to obtain EF1 allocations for \emph{chores} by resolving not just any envy cycle, but specifically the ``top-trading'' envy cycle.
Choosing which envy cycle to resolve can also be seen as a type of heuristic, which is a potential direction for future work.
\citet{FeldmanMaNa24} considered randomizing the ECE algorithm to obtain an ex-ante $1/2$-EF guarantee.
Their randomized heuristic determines which unallocated good to assign to which unenvied agent in the first $n$ rounds, and decides which envy cycle to resolve in subsequent rounds.

\section{Preliminaries}
\label{sec:prelim}

Let $N = \{1, \dots, n\}$ be a set of $n \ge 2$ agents and $M = \{g_1, \dots, g_m\}$ be a set of $m \ge 1$ goods.
Each agent $i \in N$ has a utility function $u_i: 2^M \to \R_{\geq 0}$, which is \emph{normalized},\footnote{This normalization is common in research on the price of fairness \citep{CaragiannisKaKa12,BeiLuMa21,CelineDzKo23}.} that is, $u_i(M) = 1$.
Furthermore, we assume that the utility function is \emph{additive},\footnote{Some inequalities in our upper bound proofs are in fact equalities when utilities are additive, but we leave them as inequalities so that they also hold even when utilities are subadditive.
See Appendix~\ref{subsec:subadditive} for further discussion.}
that is, for each subset $S \subseteq M$, $u_i(S) = \sum_{g \in S} u_i(g)$, where $u_i(g)$ is a shorthand for $u_i(\{g\})$.
In particular, $u_i(\emptyset) = 0$.
An instance $\mathcal{I}$ is a tuple $(N, M, (u_i)_{i \in N})$.

A \emph{partial allocation} of an instance $\mathcal{I}$ is a tuple $\mathcal{A} = (A_1, \dots, A_n)$ such that $A_i \subseteq M$ for each $i \in N$ and $A_1, \dots, A_n$ are pairwise disjoint.
We say that $A_i$ is the bundle belonging to agent $i$ and $u_i(A_i)$ is the utility of agent $i$.
If $A_1 \cup \dots \cup A_n = M$, then we call $\mathcal{A}$ a \emph{complete allocation}; otherwise, $\mathcal{A}$ is said to be an \emph{incomplete allocation}.
Unless specified otherwise, we use the term \emph{allocation} to refer to a complete allocation.
The \emph{utilitarian welfare} of a partial allocation $\mathcal{A}$, denoted by $\UW(\mathcal{A})$, is the sum of all agents' utilities, $\sum_{i \in N} u_i(A_i)$.
The \emph{egalitarian welfare} of a partial allocation $\mathcal{A}$, denoted by $\EW(\mathcal{A})$, is the minimum across all agents' utilities, $\min_{i \in N} u_i(A_i)$.

For a property~$P$, let $P(\mathcal{I})$ denote the set of all allocations of instance $\mathcal{I}$ satisfying $P$.
The \emph{weak utilitarian price of $P$ for instance $\mathcal{I}$} is defined as
$
    \flatfrac{\max_{\mathcal{A}} \UW(\mathcal{A})}{\max_{\mathcal{A} \in P(\mathcal{I})} \UW(\mathcal{A}).}
$
Then, the \emph{weak utilitarian price of $P$} is the supremum of the weak utilitarian price of $P$ for instance $\mathcal{I}$ taken over all instances.
On the other hand, the \emph{strong utilitarian price of a property $P$ for instance $\mathcal{I}$} is defined as
$
    \flatfrac{\max_{\mathcal{A}} \UW(\mathcal{A})}{\min_{\mathcal{A} \in P(\mathcal{I})} \UW(\mathcal{A}).}
$
Similarly, the \emph{strong utilitarian price of $P$} is the supremum of the strong utilitarian price of $P$ for instance $\mathcal{I}$ taken over all instances.
We define the \emph{weak} and \emph{strong egalitarian prices} analogously, by replacing $\UW$ with $\EW$.
For division by zero, we assume that $0/0 = 1$ and $x/0 = \infty$ if $x > 0$.
Following the convention in this line of work, we generally consider the (weak or strong) price in terms of the number of agents $n$.
However, we will consider the price in terms of both $n$ and $m$ when the dependence on $m$ is important.

We next define several fairness properties.
Given a partial allocation $\mathcal{A}$, an agent $i \in N$ is said to \emph{envy} another agent $j \in N$ if $u_i(A_i) < u_i(A_j)$.
A partial allocation $\mathcal{A}$ is \emph{envy-free up to one good (EF1)} if for every pair of agents $i, j \in N$, either $A_j = \emptyset$ or there is a good $g \in A_j$ such that $u_i(A_i) \geq u_i(A_j \setminus \{g\})$.

One way to obtain an EF1 allocation is via the \emph{round-robin algorithm}, which works as follows:
We fix an ordering of agents, for example, $1, 2, \dots, n$.
In each round, the agents take turns picking an available good with the highest utility according to the fixed order, breaking ties arbitrarily.
We repeat the rounds until all goods have been allocated---note that if $m$ is not divisible by $n$, then some agents will not get to pick in the last round.
An allocation that is obtained through an execution of the round-robin algorithm, with some ordering of agents, is called a \emph{round-robin allocation}.

Another way to obtain an EF1 allocation is via the \emph{envy cycle elimination (ECE)} algorithm, which is described as \cref{alg:ece}.
Note that the envy relationship between agents under a partial allocation can be represented by a directed graph where the vertices correspond to the agents and there is an edge from an agent~$i$ to another agent~$j$ if and only if agent~$i$ envies agent~$j$; we call this graph the \emph{envy graph}.
An \emph{envy cycle} then refers to a cycle in the envy graph, i.e., a sequence of agents $(i_1, i_2, \dots, i_k, i_1)$ for some $k\ge 2$ such that for each $t \in \{1, 2, \dots, k-1\}$, agent $i_t$ envies agent $i_{t+1}$, and agent $i_k$ envies agent $i_1$.
An allocation that can be produced by the ECE algorithm is called an \emph{ECE allocation}.

\begin{algorithm}
\caption{Envy Cycle Elimination (ECE) Algorithm \citep{LiptonMaMo04}}
\label{alg:ece}
\begin{algorithmic}
\WHILE{there is an unallocated good}
    \STATE Choose an unallocated good~$g$ and an unenvied agent~$i$.
    \STATE Allocate good $g$ to agent $i$.
    \WHILE{there is an envy cycle $(i_1, i_2, \dots, i_k, i_1)$}
        \STATE Let $(A_1, \dots, A_n)$ be the current (partial) allocation.
        \STATE Allocate bundle $A_{i_1}$ to agent $i_k$.
        \FOR{$t \in \{1, 2, \dots, k-1\}$}
            \STATE Allocate bundle $A_{i_{t+1}}$ to agent $i_t$.
        \ENDFOR
    \ENDWHILE
\ENDWHILE
\end{algorithmic}
\end{algorithm}

We call an iteration of the outer while loop a \emph{round} of the ECE algorithm.
Hence, the algorithm consists of $m$ rounds. 
Observe that each agent's utility never decreases as the algorithm progresses, a fact that we will use multiple times in this paper.

We remark that our definition of the ECE algorithm differs slightly from the original one of \citet{LiptonMaMo04}.
Specifically, in our version, there is an extra execution of the envy cycle elimination step after the allocation of the last good.
While this last step is unnecessary for guaranteeing EF1, by ensuring that the final allocation does not admit an envy cycle, it significantly improves the utilitarian welfare guarantee in the worst case---see \cref{ex:ec_strong_price} for details.

Furthermore, note that the allocation obtained via the ECE algorithm is EF1 regardless of the choice of unenvied agent and unallocated good in each round.
We call the function that makes this choice an \emph{ECE heuristic}, as formally defined next.

\begin{definition}[Envy Cycle Elimination (ECE) Heuristic]
\label{def:heuristic}
A \emph{heuristic} for the ECE algorithm is a function $\mathcal{F}$ that takes as input an instance $\mathcal{I} = (N, M, (u_i)_{i \in N})$ and an incomplete allocation $\mathcal{A}' = (A'_i)_{i \in N}$ of $\mathcal{I}$, and outputs a pair of unenvied agent $i \in N$ and unallocated good $g_j \in M$.

Let $\mathcal{I}$ be an instance with $m$ goods and $\mathcal{A}$ be a complete allocation of this instance produced by the ECE algorithm.
We say that allocation $\mathcal{A}$ is obtained with the ECE heuristic~$\mathcal{F}$ if it is obtained through an execution of the ECE algorithm such that at the start of each round $r \in \{1, \dots, m\}$, if the partial allocation is $\mathcal{A}^{r-1}$, then an unallocated good $g_j$ is allocated to an unenvied agent $i$, where $(i, g_j)$ is the output of the function~$\mathcal{F}$ when given the input $(\mathcal{I}, \mathcal{A}^{r-1})$.
\end{definition}

While the ECE algorithm can often produce several EF1 allocations depending on how the unallocated good and the unenvied agent are chosen, it cannot always produce \emph{all} EF1 allocations.
In fact, as we show in \Cref{ex:ef1_not_ece}, there exists an instance with a unique allocation that maximizes the utilitarian welfare among all EF1 allocations, such that this allocation cannot be returned by ECE.

\section{No heuristic}
\label{sec:no_heuristic}

In this section, we examine the weak and strong prices of the ECE algorithm with no heuristic.
Recall that the weak price of ECE represents the worst-case welfare loss of the \emph{best} ECE allocation, while the strong price of ECE represents the corresponding loss of the \emph{worst} ECE allocation.
Therefore, the gap between the weak and strong prices of ECE represents the welfare difference between the best and worst ECE allocations.
We begin with the strong prices.

\begin{restatable}{proposition}{propstrongprice}
\label{prop:strong_price}
The strong utilitarian price of ECE is $n^2$.
The strong egalitarian price of ECE is~$\infty$.
\end{restatable}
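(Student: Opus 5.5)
Three things have to be shown: an upper bound of $n^2$ on the strong utilitarian price, a matching lower-bound family, and an instance on which some ECE allocation has egalitarian welfare $0$ while the optimum is positive.

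\textbf{Upper bound.} Since $\max_{\mathcal{A}} \UW(\mathcal{A}) \le n$ under normalization, it suffices to show that every ECE allocation has utilitarian welfare at least $1/n$. Consider the final allocation $\mathcal{A}$. Because of the extra cycle-elimination step after the last good, its envy graph is acyclic, so some agent $i$ envies nobody. Then $u_i(A_i) \ge u_i(A_j)$ for every $j$, and summing over $j$ with additivity gives
\[
n\,u_i(A_i) \ge u_i(M) = 1 .
\]
Hence $\UW(\mathcal{A}) \ge u_i(A_i) \ge 1/n$, and the ratio is at most $n/(1/n) = n^2$.

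\textbf{Lower bound for utilitarian.} I want an instance where the optimum is close to $n$ but some ECE run ends with every agent at about $1/n^2$. I would take $m = n^2$ goods, grouped so that agent $i$ values a private block $B_i$ of $n$ goods at almost $1$ in total. Every agent also gets a small value $\varepsilon$ on the goods outside her block, spread so that bundles she does not want still look equally valuable to her. The adversarial run assigns goods so that each agent ends up holding one good from each block: $n$ goods, exactly one of them her own. Her utility is then about $1/n$ from that one own good, which is not yet small enough. To reach $1/n^2$, I would instead make each agent's valuable mass concentrate on a single good, $u_i(h_i) \approx 1$, and arrange the run so that the $h_i$'s end up in the "wrong" hands without creating cycles.

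The catch is that once agent $i$ sees $h_i$ in someone else's bundle she envies that bundle, and pairs of such envies form cycles that get eliminated. The construction therefore has to make the envy chain-like, $i$ envying $i+1$, with the top agent not envying anyone. An acyclic configuration of this kind caps the welfare at about $\sum_i u_i(A_i)$, where each $u_i(A_i)$ is bounded by what the unenvious agent has. This is consistent with the tight bound $1/n$ for that agent, with all the others also near $1/n^2$.

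\textbf{Main obstacle.} Designing valuations where the final allocation is acyclic yet gives welfare about $1/n$ is the hard part. My first attempt would be: one agent $1$ values all $n$ bundles equally at $1/n$ each, and every other agent $i$ values one good (sitting in agent $1$'s bundle or in a bundle of an agent she does not envy cyclically) at nearly $1$ while holding only $\varepsilon$. Every step of the run must be checked to ensure the receiving agent is unenvied. I would then verify that the optimum, which gives each agent her favourite goods, is about $n$, and take $\varepsilon \to 0$.

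\textbf{Egalitarian.} Take $n = 2$ and $m = 2$ with $u_1(g_1) = u_1(g_2) = 1/2$, $u_2(g_1) = 1$, $u_2(g_2) = 0$. Giving $g_1$ to agent $1$ and $g_2$ to agent $2$ yields egalitarian welfare $1/2 > 0$. In the adversarial run, agent $1$ receives $g_1$ first; agent $2$ now envies agent $1$, so agent $1$ is envied and agent $2$ is the unique unenvied agent. Giving $g_2$ to agent $2$ leaves him at utility $0$, though this may not be the worst possible. Instead, give both goods to agent $1$: after $g_1$, agent $1$ is envied, so this sequence is not allowed. I would therefore use $m = 3$ instead, adding goods that agent $2$ values $0$ but which must be handed to him, while agent $1$ retains $g_1$. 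This forces $\EW = 0$ for that ECE allocation while the optimum is positive, so the price is $\infty$; the construction extends to general $n$ by padding with agents having private goods.
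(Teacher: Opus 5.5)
There is a genuine gap. Your upper bound is the same as the paper's (\cref{lem:no_envy_cycle} together with $\max_{\mathcal{A}}\UW(\mathcal{A}) \le n$). The missing piece is the matching lower bound for the utilitarian price, which is what makes the answer exactly $n^2$. You never specify an instance or an ECE run, and you verify nothing: neither who is unenvied when each good is handed out, nor that the final envy graph is acyclic. Your stated targets are also wrong. ``Every agent at about $1/n^2$'' is impossible, because the agent who envies nobody has utility at least $1/n$. ``All the others near $1/n^2$'' would give $\UW \approx (2n-1)/n^2$, so a ratio of only about $n^2/2$. What you need is one agent at exactly $1/n$ and all others at essentially $0$.

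The missing idea is twofold. The agent who supplies the $1/n$ floor should have a \emph{spread-out} valuation, with one of her goods placed in every bundle. The other agents' valuable goods should be routed along a path that ends at her. The paper's instance does this with $m = 2n-1$: agent $n$ values $g_1,\dots,g_n$ at $1/n$ each, and each agent $i<n$ values only $g_{n+i}$, at $1$.
\begin{itemize}
\item First give $g_r$ to agent $r$ for $r \le n$. Each receiver has an empty bundle, and during these rounds only agent $n$ envies anyone. Afterwards agent $n$ values every bundle at $1/n$ and envies nobody.
\item Then give $g_{n+i}$ to agent $i+1$ for $i = 1,\dots,n-1$. Agent $i+1$ is unenvied at that moment, since the existing envy edges are $j \to j+1$ for $j<i$.
\item The final envy graph is the path $1 \to 2 \to \dots \to n$, so no cycle is ever eliminated.
\end{itemize}
This run gives $\UW = 1/n$ and $\EW = 0$. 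Assigning each good to the unique agent who values it gives $\UW = n$ and $\EW = 1$, so this single instance settles both lower bounds.

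Separately, your $n=2$ egalitarian example already works, but you compared against the wrong allocation. $(\{g_1\},\{g_2\})$ gives agent $2$ utility $0$; the allocation with positive egalitarian welfare is the swap $(\{g_2\},\{g_1\})$, with $\EW = 1/2$. Your run ends at $(\{g_1\},\{g_2\})$ with $\EW = 0$, and no swap occurs because agent $1$ is indifferent between the two bundles. So the unspecified $m=3$ detour is unnecessary. To cover every $n$, pad with agents $3,\dots,n$ who each value a private good at $1$, allocated to them first.
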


For the upper bound on the strong utilitarian price, we will use the following lemma.

\begin{lemma}
\label{lem:no_envy_cycle}
If a complete allocation of an instance does not admit an envy cycle, then its utilitarian welfare is at least~$1/n$.
\end{lemma}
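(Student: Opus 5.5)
Our plan is to exploit acyclicity to find an agent whose own bundle she values at least as much as every other bundle, and then use normalization to conclude that this agent alone contributes at least $1/n$ to the utilitarian welfare.

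\emph{Step 1: find a source.} Let $\mathcal{A} = (A_1,\dots,A_n)$ be a complete allocation whose envy graph has no cycle. A finite directed acyclic graph has a vertex with no outgoing edges; otherwise, following outgoing edges from any vertex would eventually revisit a vertex and produce a cycle. So there is an agent $i$ who envies nobody, that is, $u_i(A_i) \ge u_i(A_j)$ for every $j \in N$.

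\emph{Step 2: sum over bundles.} Since $\mathcal{A}$ is complete, the bundles $A_1,\dots,A_n$ partition $M$. By additivity (subadditivity would suffice) and normalization,
\[
1 = u_i(M) \le \sum_{j \in N} u_i(A_j) \le n \, u_i(A_i),
\]
so $u_i(A_i) \ge 1/n$.

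\emph{Step 3: conclude.} All utilities are nonnegative, so
\[
\UW(\mathcal{A}) = \sum_{j \in N} u_j(A_j) \ge u_i(A_i) \ge 1/n.
\]

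The only step needing care is Step 1: we need a sink of the envy graph (an agent who envies no one), not a source (an agent whom no one envies). Acyclicity guarantees both, so this is routine. Nothing else here is an obstacle.
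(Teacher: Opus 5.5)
Your proof is correct and matches the paper's argument: acyclicity gives an agent who envies no one, summing her utility over the partition gives $u_i(A_i) \ge 1/n$, and nonnegativity of the other agents' utilities finishes it. The only blemish is cosmetic: your Step~1 heading says ``find a source,'' but the vertex you actually use (no outgoing envy edges) is a sink, as your closing remark correctly notes.
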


\begin{proof}
Let $\mathcal{A} = (A_1, \dots, A_n)$ be a complete allocation with an acyclic envy graph.
There must exist an agent who does not envy any other agent; we call her agent~$i$.
Then, for any agent~$j \in N$, we have $u_i(A_i) \geq u_i(A_j)$.
Summing this up over all agents $j \in N$ gives
    $n\cdot u_i(A_i) \geq \sum_{j \in N} u_i(A_j) \geq u_i(M) = 1$,
or $u_i(A_i) \ge 1/n$.
Hence, the utilitarian welfare of~$\mathcal{A}$ is at least $1/n$.
\end{proof}

We now prove \cref{prop:strong_price}.

\begin{proof}[Proof of \cref{prop:strong_price}]
We begin by proving the upper bound for the strong utilitarian price.
Since the utility functions are normalized, each agent can have a total utility of at most $1$.
Hence, the maximum utilitarian welfare is at most $n$.
On the other hand, by \cref{lem:no_envy_cycle}, the utilitarian welfare of an ECE allocation is at least $1/n$.
Therefore, the strong utilitarian price is at most $n/(1/n) = n^2$.

We prove the lower bound for both the strong utilitarian and egalitarian prices using an instance with $n$ agents, $m = 2n-1$ goods, and the following utilities:
\begin{itemize}
    \item $u_n(g_j) = 1/n$ for each $j \in \{1, \dots, n\}$.
    \item For each $i \in N \setminus \{n\}$, $u_i(g_{n+i}) = 1$.
    \item $u_i(g_j) = 0$ for all other pairs $(i, j)$.
\end{itemize}

Consider the following ECE allocation:
In each round~$r \in \{1, \dots, n\}$, good~$g_r$ is allocated to agent~$r$.
There is no envy cycle since in the first $n-1$ rounds, only agent~$n$ envies other agents, and at the end of round $n$, agent~$n$ no longer envies any other agent.
Next, in each round $r \in \{{n+1}, \dots, 2n-1\}$, good~$g_r$ is allocated to agent~$r-n+1$.
Again, there is no envy cycle since for each $i \in N \setminus \{n\}$, agent~$i$ only envies agent~$i+1$, and agent~$n$ envies no other agent.
In the final allocation, each agent~$i \in N \setminus \{n\}$ gets a utility of $0$, while agent~$n$ gets a utility of $1/n$. 
Hence, the utilitarian welfare of the allocation is $1/n$, and the egalitarian welfare is $0$.

Now, observe that for every good, there is a unique agent who values it positively; giving each good to such an agent results in an allocation where every agent gets a utility of $1$.
Thus, the maximum utilitarian welfare of this instance is $n$ and the maximum egalitarian welfare is $1$.
It follows that the strong utilitarian price of ECE for this instance is $n/(1/n) = n^2$, and its strong egalitarian price is $1/0 = \infty$.
\end{proof}

We next show that the weak prices of ECE are significantly lower than the strong prices.

\begin{proposition}
\label{prop:weak_price}
The weak utilitarian price of ECE is at most $n$.
The weak egalitarian price of ECE is at most $2n-1$.
\end{proposition}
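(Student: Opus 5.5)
The plan for both bounds is to fix an arbitrary instance and exhibit one particular execution of \cref{alg:ece} (that is, a specific choice of good and unenvied agent in each round) whose final allocation has welfare at least a $1/n$, respectively $1/(2n-1)$, fraction of the optimum. Throughout I will use a strengthening of the monotonicity remark after \cref{alg:ece}: every agent's final utility is at least the total value, to her, of the goods she was directly handed by the algorithm. I will prove this by induction over the rounds. When an agent receives a good, her utility increases by exactly its value. When she takes part in a cycle elimination, she swaps to a bundle she strictly prefers, so her utility only goes up.

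For the utilitarian bound, fix a welfare-maximizing allocation $(O_1,\dots,O_n)$, so that $\max_{\mathcal{A}}\UW(\mathcal{A})=\sum_i u_i(O_i)$. I would run ECE with the following rule in each round: if some unenvied agent $j$ still has an unallocated good in $O_j$, give one such good to $j$; otherwise make an arbitrary legal choice. By the strengthening above, every good handed to its ``owner'' in this way contributes its full optimal value to the final welfare. It then remains to control the goods $g\in O_j$ that end up with someone else. Such a good is allocated at a moment when every unenvied agent $\ell$ has already exhausted $O_\ell$. Since the envy graph is acyclic, there is an unenvied agent $\ell$, and she has exhausted $O_\ell$, so by the strengthening her utility is already at least $u_\ell(O_\ell)$. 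The aim is to combine this with the fact that $\sum_i u_i(O_i)\le n\cdot\max_i u_i(O_i)$, and to show that the final welfare is at least $\max_k u_k(O_k)$ or at least $\sum_i u_i(O_i)/n$. For instance, I would try to show that each agent $k$ is eventually unenvied after having exhausted $O_k$, or that the process ensures some agent reaches her own optimal value. Failing that, a fallback is to start with agent $k=\arg\max_i u_i(O_i)$ and prioritise her: allocate all of $O_k$ to $k$ whenever she is unenvied, and otherwise give goods to other unenvied agents, which are exactly those envying nobody that would form cycles with $k$.

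For the egalitarian bound, let $v=\max_{\mathcal{A}}\EW(\mathcal{A})$. I would use the first $n$ rounds to give each agent a single good she values highly, via a matching chosen so that agent $i$ gets a good of value at least her share. Giving a good to an agent with an empty bundle is always legal, because an empty-bundle agent is unenvied. After that, I would run ECE arbitrarily. In the final EF1 allocation, summing the EF1 inequalities over the $n-1$ other bundles gives
\begin{equation*}
n\,u_i(A_i)\ \ge\ 1-\sum_{j\ne i} u_i(g_j),
\end{equation*}
where $g_j$ is the removed good from $A_j$. The key refinement is that the removed good can be taken to be a good added to $A_j$ after $i$'s envy last started. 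I then want to bound each $u_i(g_j)$ by $u_i(A_i)$, using that agent $i$ already held a good at least as valuable as these after the initial matching phase; the matching should be chosen greedily, e.g.\ in the style of picking sequences, to guarantee this. This would give $(2n-1)\,u_i(A_i)\ge 1\ge v$ in the case where all goods are ``small'', while the case of a large good is handled directly by the matching.

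I expect the main obstacle to be the utilitarian step: showing rigorously that the owner-priority rule (or the prioritisation of agent $k$) really captures a $1/n$ fraction. Goods can be diverted to non-owners repeatedly, and the simple bound $\UW\ge 1/n$ from \cref{lem:no_envy_cycle} only yields $n^2$.
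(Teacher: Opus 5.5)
Your proposal does not yet prove either bound. In the utilitarian part, the decisive inequality is stated only as an aim: final welfare at least $\max_k u_k(O_k)$, or at least $\frac1n\sum_i u_i(O_i)$. The one fact you do establish is that the unenvied agent $\ell$ receiving a diverted good already has utility at least $u_\ell(O_\ell)$. That fact is useless when $u_\ell(O_\ell)$ is small while an owner with large $u_k(O_k)$ stays envied and keeps losing goods of $O_k$. The fallback has the same problem: once $k$ is envied she cannot receive anything, and nothing in your rule stops the rest of $O_k$ from being diverted in the meantime. Direct arguments of this kind also fall short of $n$. Giving each agent her most valuable good of $O_i$ first and summing EF1 inequalities gives only a factor $2n-1$, and \cref{lem:no_envy_cycle} gives $n^2$.

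The idea you are missing is the paper's \cref{lem:rr_implies_ece}: every round-robin allocation is an ECE allocation. To see this, hand the goods to ECE in the reverse of the round-robin picking order, each to the agent who picked it. Ordering agents by bundle size, and then by position in the reversed agent order, is a topological order of every intermediate envy graph, and each recipient is unenvied. Hence the best ECE allocation is at least as good as the best round-robin allocation, and the known weak prices of round-robin, $n$ and $2n-1$ (\cref{lem:weak_price_rr}), carry over.

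In the egalitarian part, the bound $u_i(g_j)\le u_i(A_i)$ fails for goods handed out in the matching phase before agent $i$'s turn, and you never say how the order is chosen. For example, take $n=m=2$ with $u_1(g)=\tfrac12+\epsilon$, $u_1(h)=\tfrac12-\epsilon$, $u_2(g)=0.9$, $u_2(h)=0.1$. If agent $1$ picks first in the greedy sequence, the result is $(\{g\},\{h\})$ with egalitarian welfare $0.1$. Meanwhile $(\{h\},\{g\})$ achieves $\tfrac12-\epsilon$, a ratio near $5>2n-1$. Your ``large good'' case also breaks when several agents need the same good.

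Your approach can be repaired by letting an optimal allocation guide the matching. Fix $(O_1,\dots,O_n)$ with $\min_i u_i(O_i)=v$, and first give each agent her most valuable good of $O_i$. This is legal: agents with empty bundles are unenvied and never lie on an envy cycle, and the $O_i$ are disjoint. In $1\le n\,u_i(A_i)+\sum_{j\ne i}u_i(g_j)$, each removed good inside $O_i$ is worth at most $u_i(A_i)$. The removed goods outside $O_i$ are together worth at most $1-u_i(O_i)$. Hence $(2n-1)\,u_i(A_i)\ge u_i(O_i)\ge v$. This is a self-contained alternative to citing \citet{CelineDzKo23}. However, summed over agents it only yields a utilitarian factor of $2n-1$, so the utilitarian bound of $n$ still needs the round-robin reduction or a genuinely new idea.
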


To prove \cref{prop:weak_price}, we need the following lemma, which we find interesting in its own right.

\begin{restatable}{lemma}{lemrrimpliesece}
\label{lem:rr_implies_ece}
Any round-robin allocation can be obtained with the ECE algorithm.
\end{restatable}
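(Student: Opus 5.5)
The plan is to run ECE so that it hands out the round-robin picks \emph{in reverse order}. We then check that no envy cycle ever forms, so no bundles are exchanged, and that the agent receiving each good is unenvied at that moment.

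\textbf{Setup.} Fix a round-robin allocation, produced with agent ordering $1,2,\dots,n$ and some tie-breaking. List the goods in the order they were picked as a sequence $h_1,h_2,\dots,h_m$. For an agent $i$, write $i_1,i_2,\dots$ for the goods $i$ picked, in picking order. ECE will allocate $h_m,h_{m-1},\dots,h_1$ in this order, each to the agent who picked it in round-robin. If no envy cycle ever appears, the bundles are never permuted, and the output is exactly the round-robin allocation. It therefore suffices to prove two things about every intermediate state, in which exactly a suffix $h_p,\dots,h_m$ has been allocated:
\begin{itemize}
\item the envy graph is acyclic;
\item the agent who picked $h_{p-1}$ is unenvied.
\end{itemize}

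\textbf{The basic pairing fact.} If agent $j$ picked some good $x$ at a time when good $y$ was still available, then $u_j(x)\ge u_j(y)$. Only this fact is used, together with additivity.

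\textbf{Step 1: earlier agents never envy later agents.} Let $j<i$ in the ordering. In each round, $j$ picks before $i$. Hence $j_s$ precedes $i_s$, and $u_j(j_s)\ge u_j(i_s)$ for every $s$. In a suffix state, $i$ holds $i_{a},\dots,i_{K_i}$ for some $a$. Agent $j$ holds $j_{b},\dots,j_{K_j}$ with $b\le a$ and $K_j\ge K_i$, because $j$'s $s$-th pick comes after $i$'s $(s-1)$-th pick but before $i$'s $s$-th pick. Pairing $j_s$ with $i_s$ over the indices $i$ holds gives $u_j(A_j)\ge u_j(A_i)$. So every envy edge goes from a later agent to an earlier one, the envy graph is acyclic, and no cycle elimination is ever triggered, including after the last round.

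\textbf{Step 2: the receiver is unenvied.} Suppose the next good to be allocated is $i_t$. Agent $i$ currently holds $i_{t+1},\dots,i_{K_i}$.
\begin{itemize}
\item Earlier agents $j<i$ do not envy $i$, by Step 1.
\item Take a later agent $j>i$. Since $j_t$ was picked after $i_t$, it has already been allocated, so $j$ holds $j_t,j_{t+1},\dots$. For each $s\ge t$, $j_s$ was picked in round $s$, before $i$'s pick in round $s+1$. So $u_j(j_s)\ge u_j(i_{s+1})$. Pairing $j_s$ with $i_{s+1}$ for $s=t,\dots,K_i-1$ covers all of $A_i$, which gives $u_j(A_j)\ge u_j(A_i)$.
\end{itemize}
Thus $i$ is unenvied, and ECE may legitimately give it $i_t$.

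\textbf{Where the difficulty lies.} The main obstacle, which the reverse ordering is designed to get around, is that round-robin allocations can contain envy from a later agent $j$ toward an earlier agent $i$. That envy is removable only by deleting $i_1$. Consequently, any forward simulation fails, because $i$ may already be envied when it is due to receive more goods. Allocating in reverse makes $i_1$ the last good $i$ receives. Until then, the one-step index shift $j_s\leftrightarrow i_{s+1}$ fully dominates $i$'s bundle. The remaining work is the careful index bookkeeping in Steps 1 and 2, in particular handling the last, incomplete round when $n\nmid m$. There, the inequalities $K_j\ge K_i$ for $j<i$ and $K_j\ge K_i-1$ for $j>i$ must be checked; the argument above only needs that the paired goods exist.
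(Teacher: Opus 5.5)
Your overall strategy is the same as the paper's: ECE hands out the round-robin picks in reverse order, and you check that no envy cycle ever arises and that each receiver is unenvied. Step~2 is correct. There is a genuine gap in Step~1, however, because its claim is false: in an intermediate state, an earlier agent \emph{can} envy a later one.

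The faulty inference is ``$b\le a$''. Your justification only shows that $j_{a+1}$, which is picked after $i_a$, lies in the allocated suffix. That gives $b\le a+1$, not $b\le a$. When $h_p$ lies strictly between $j_a$ and $i_a$ in the picking sequence, you get $b=a+1$, and $i_a$ is left unmatched.

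Concretely, take two agents and goods $x_1,x_2,x_3,x_4$ with $u_1=(0.4,0.3,0.2,0.1)$ and $u_2=(0.2,0.4,0.2,0.2)$. Round-robin with order $1,2$ has agents $1,2,1,2$ pick $x_1,x_2,x_3,x_4$ in turn. After ECE has allocated $x_4,x_3,x_2$, agent~$1$ holds $\{x_3\}$, worth $0.2$ to her, and agent~$2$ holds $\{x_2,x_4\}$, worth $0.4$ to agent~$1$. So agent~$1$ envies the later agent~$2$. Your stated reason for acyclicity, that every edge points from a later to an earlier agent, therefore fails, and acyclicity at every intermediate state is exactly what rules out bundle swaps. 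The first bullet of Step~2 cites Step~1, but it happens to survive: just before $i_t$ is allocated, every $j<i$ holds exactly $j_{t+1},j_{t+2},\dots$, so the unshifted pairing works at that moment. You should argue it directly.

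The missing idea is that acyclicity needs a two-level order, not the plain index order. Consider a state whose suffix begins with $q_t$, agent $q$'s round-$t$ pick. Let $S=\{i: i\ge q\}$ be the agents already holding their round-$t$ good, and $T=\{i: i<q\}$ the rest.
\begin{itemize}
\item For $j<i$ both in $S$, the pairing $j_s\leftrightarrow i_s$ from $s=t$ shows that $j$ does not envy $i$. For $j<i$ both in $T$, the same pairing from $s=t+1$ does this.
\item For $j\in S$ and $i\in T$, agent $j$ holds $j_t,j_{t+1},\dots$ and agent $i$ holds $i_{t+1},\dots$. Your shifted pairing $j_s\leftrightarrow i_{s+1}$ then shows that $j$ does not envy $i$.
\end{itemize}
Hence every envy edge goes from a later to an earlier agent within $S$ or within $T$, or from $T$ to $S$. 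Listing $T$ and then $S$, each in decreasing index, is therefore a topological order. This is essentially the paper's argument. It sorts agents by bundle size and then by position in the reversed round-robin ordering, using exactly these two pairings (``same number of goods'' and ``one more good'').
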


\begin{proof}
Consider any allocation obtained with the round-robin algorithm.
Without loss of generality, assume that the agent ordering in each round\footnote{Note that in other parts of this paper, we use the term ``round'' to refer to an iteration of the ECE algorithm. However, for this proof specifically, we shall only use the term ``round'' to refer to a round of the round-robin algorithm, which consists of $n$ turns (except the last round, which may consist of fewer than $n$ turns).} is $(1, 2, \dots, n)$, and that the goods are allocated in the order $g_1, g_2, \dots, g_m$. 
In the ECE algorithm, we allocate the goods to the same agents as in the round-robin algorithm, but in reverse order.
Let $k = m - {(\lceil m/n \rceil - 1)n}$ be the number of turns in the last (possibly incomplete) round of the round-robin algorithm.
Then, the goods $g_m, \dots, g_1$ are given to agents $$k, k-1, \dots, 1,\, n, n-1, \dots, 1,\, \dots,\, n, n-1, \dots, 1$$ respectively, in that order starting from $g_m$.
To prove that this is indeed a valid allocation produced by the ECE algorithm, it suffices to show that every intermediate allocation (including the final allocation) has an acyclic envy graph, and each good is allocated to an unenvied agent.

We begin my making some observations about the envy graph of any intermediate allocation in this execution of the ECE algorithm.
Note that the reverse order of the original round-robin ordering is itself a round-robin ordering, with the ordering $$(a_1, \dots, a_n) \coloneq (k, {k-1}, \dots, 1, n, {n-1}, \dots, {k+1})$$ in each round.
Furthermore, for each $i\in\{1,\dots,n\}$, the good allocated to agent~$a_i$ in round~$r$ of the reverse round-robin ordering is the $r$-th last good received by agent~$a_i$ in the original round-robin allocation.

Our first claim is that, in any intermediate allocation, if two agents $a_i, a_j \in N$ have the same number of goods and $i > j$, then agent~$a_i$ does not envy agent~$a_j$.
To see this, observe that in each round $r \in \{1, \dots, \lceil m/n \rceil\}$ of the reverse round-robin ordering, agent~$a_i$ receives her $r$-th good~$g_{m+1-(r-1)n-i}$ after agent~$a_j$ receives her $r$-th good~$g_{m+1-(r-1)n-j}$.
Since the ordering is reversed in the original round-robin algorithm, agent~$a_i$ must prefer her good~$g_{m+1-(r-1)n-i}$ to the good~$g_{m+1-(r-1)n-j}$ of agent~$a_j$.
Since both agents have the same number of goods, agent~$a_i$ does not envy agent~$a_j$. 

Next, we claim that if two agents have different numbers of goods, the agent with more goods does not envy the agent with fewer goods.
The reasoning is similar to that of the previous claim.
In particular, for each~$r$, agent~$a_i$ receives her $(r+1)$-th good after agent~$a_j$ receives her $r$-th good in the reverse round-robin ordering.
Since the ordering is reversed in the original round-robin algorithm, agent~$a_i$ must prefer her good to agent $a_j$'s good.
Therefore, agent~$a_i$ does not envy agent~$a_j$ overall.

Combining the two claims above, it follows that the envy graph is always acyclic.
Indeed, we can find a topological ordering of the envy graph as follows.
First, sort the agents in increasing order of their bundle size.
Then, among agents with bundles of the same size, sort the agents $a_i \in N$ in increasing order of their indices $i$.
By our two claims, no agent envies an agent placed earlier in the topological order.

Finally, since the reverse of the round-robin ordering is still a round-robin ordering, each good is allocated to an agent with the fewest goods, and among such agents, the agent~$a_i$ with the lowest index $i$.
It therefore follows from our two claims that the agent who receives the good must be unenvied before the allocation of the good.
\end{proof}

\cref{prop:weak_price} then follows from \cref{lem:rr_implies_ece} as well as the following known result.

\begin{lemma}
[\citealp{BeiLuMa21,CelineDzKo23}]
\label{lem:weak_price_rr}
The weak utilitarian price of round-robin is $n$. The weak egalitarian price of round-robin is at most $2n-1$.
\end{lemma}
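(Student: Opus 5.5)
Since \cref{lem:weak_price_rr} is quoted from \citet{BeiLuMa21} and \citet{CelineDzKo23}, the paper can simply cite it. If I were to reprove it, the plan would be as follows. The weak price only asks for the \emph{best} round-robin allocation, so both upper bounds reduce to exhibiting one good agent ordering, with suitable tie-breaking, for the given instance. The main tool is a per-agent charging argument. Fix an optimal allocation $\mathcal{O}=(O_1,\dots,O_n)$. Sort the goods of $O_i$ by $u_i$ in decreasing order as $o^i_1\ge o^i_2\ge\cdots$. Suppose agent~$i$ has position $p_i$ in the ordering. Before her $k$-th pick, at most $(k-1)n+p_i-1$ goods have been removed. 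Since she picks greedily, her $k$-th good is worth at least $o^i_{(k-1)n+p_i}$ to her. Summing over $k$ gives
\[
u_i(A_i)\;\ge\;\sum_{k\ge 1} o^i_{(k-1)n+p_i}\;\ge\;\frac{u_i(O_i)-\sum_{t<p_i}o^i_t}{n}.
\]
So every agent secures a $1/n$ fraction of her optimal bundle, except possibly for the few top goods that earlier pickers can take.

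For the utilitarian bound, I would first note that $\max_{\mathcal{A}}\UW(\mathcal{A})\le n$. I would then choose the ordering and tie-breaking so that the losses $\sum_{t<p_i}o^i_t$ are absorbed. The goods lost by agent $i$ are picked up by earlier agents, and each such agent values her own pick at least as much as $i$'s top good. Combining this exchange with the estimate above, and summing over agents, should give $\UW(\mathcal{A})\ge \max_{\mathcal{B}}\UW(\mathcal{B})/n$.

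For matching tightness, I would take an instance in which every agent values every good equally, together with a variant that forces some welfare loss. Then every round-robin allocation has welfare about $\max_{\mathcal{B}}\UW(\mathcal{B})/n$.

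For the egalitarian bound, I would use the EF1-type property of round-robin. Agent $i$ does not envy later agents, and envies earlier agents by at most their first pick. Summing over all bundles gives
\[
n\,u_i(A_i)\ge 1-(p_i-1)\max_g u_i(g).
\]
Separately, the greedy first pick gives $u_i(A_i)\ge \max_g u_i(g)$. Combining the two yields $(n+p_i-1)\,u_i(A_i)\ge 1\ge \max_{\mathcal{B}}\EW(\mathcal{B})$, which is at worst $2n-1$. The ordering can even be arbitrary here.

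The main obstacle is the utilitarian exchange step, namely making the lost top goods $o^i_t$ chargeable to the welfare of earlier pickers without double counting. I would first try ordering the agents by decreasing $\max_g u_i(g)$ and charging each lost good to the agent who took it. If that fails, I would fall back on averaging the welfare over all $n$ cyclic shifts of the ordering: every agent is first in one shift, so some shift attains at least the average.
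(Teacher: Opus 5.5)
The paper does not reprove this lemma; it only cites \citet{BeiLuMa21} and \citet{CelineDzKo23}. Your opening sentence matches that, but the reproof you sketch has genuine gaps. The clearest is in the egalitarian part. Your inequality $n\,u_i(A_i)\ge 1-(p_i-1)\max_g u_i(g)$ is fine. However, the step ``the greedy first pick gives $u_i(A_i)\ge\max_g u_i(g)$'' holds only for the agent in position~$1$, or when $i$'s favorite good survives to her turn. Take $n=2$, $u_1(g_1)=0.51$, $u_1(g_2)=0.49$, $u_2(g_1)=1$. The ordering $(1,2)$ leaves agent~$2$ with utility~$0$, although the optimal egalitarian welfare is $0.49$. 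So the ordering cannot be arbitrary (the strong egalitarian price of round-robin is $\infty$). Comparing with $1$ also cannot work; the bound must be relative to $E=\max_{\mathcal{B}}\EW(\mathcal{B})$.

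A repair goes as follows. Fix an egalitarian-optimal $(O_1,\dots,O_n)$, and let $F_i$ be the goods taken before $i$'s first turn. The greedy windows give $n\,u_i(A_i)\ge u_i(M\setminus F_i)\ge u_i(O_i)-u_i(O_i\cap F_i)$, with $|O_i\cap F_i|\le n-1$. If the ordering guarantees that a $u_i$-best good of $O_i$ is still available at $i$'s first turn, then $u_i(A_i)\ge\max_{g\in O_i}u_i(g)$, and hence $(2n-1)\,u_i(A_i)\ge u_i(O_i)\ge E$. The existence of such an ordering is the real content and needs its own argument. For example, build the order one agent at a time. If every remaining agent's favorite remaining good is another remaining agent's designated best good, rotate these goods along a cycle; this keeps $O$ egalitarian-optimal and lets the first agent on the cycle be placed.

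The utilitarian part is missing the same core step. Your per-agent bound is correct, but it discards the top $p_i-1$ goods of $O_i$ whether or not anyone took them. The proposed compensation, that each earlier agent ``values her own pick at least as much as $i$'s top good'', is false. Greedy picking only gives $u_j(\text{pick})\ge u_j(g)$ for goods $g$ still available, in $j$'s own valuation. A static order by decreasing $\max_g u_i(g)$ does not help once an agent's favorite is gone, and averaging over cyclic shifts is likewise unsupported.

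What works is a per-good charging with a dynamically chosen order:
\begin{itemize}
\item At each position of the first round, place the remaining agent with the largest value for some remaining good, and let her take it.
\item Take a welfare-maximizing allocation, with owner $o(h)$ for each good $h$. Charge each $h$ picked at or after $o(h)$'s first turn to $o(h)$'s latest pick at or before it. A pick then absorbs at most the $n$ goods taken from that turn until the agent's next one, each worth to her at most what she picked.
\item A good $h$ taken in the first round before $o(h)$'s turn lies in its picker's first window but outside the picker's optimal bundle. So that window absorbs at most $n-1$ goods and has slack equal to the picker's value of $h$. By the choice of order, this is at least $u_{o(h)}(h)$.
\end{itemize}
Summing gives $\MUW\le n\cdot\UW$.

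Finally, your tightness instance fails. If all agents value all goods equally, every allocation has utilitarian welfare $1$, so the ratio is $1$. The lower bound needs a common strict ranking of the goods, so that every ordering gives each agent exactly one good from each block of $n$ consecutive goods. This must be combined with value concentrated on nested prefixes of rapidly growing length. A strictly perturbed version of the instance in the proof of \cref{thm:agent_good_with_max_util}(a) does the job.
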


As our results in this section demonstrate, there are non-trivial gaps between the weak and strong prices of ECE.
This means that despite the existence of an ECE allocation with relatively high welfare for every instance, the worst-case guarantees if one runs the ECE algorithm arbitrarily are much weaker.
In the next two sections, we examine heuristics for enhancing the ECE algorithm by exploiting its flexibility.

\section{Greedy heuristics}

In this section, we investigate simple greedy heuristics that always choose an available option yielding the highest utility in each round of ECE.
We categorize the heuristics based on whether they have the freedom to choose only the agents, only the goods, or both.

\subsection{Heuristic for agents}
\label{subsec:agent_heuristic}

We start by assuming that the heuristic has the liberty to choose only the agents, while the ordering of goods is fixed in advance.
Without loss of generality, suppose that in each round $r \in \{1, \dots, m\}$, good $g_r$ is allocated to an unenvied agent chosen by the heuristic.
In particular, the heuristic selects an unenvied agent who has the highest utility for the good among all unenvied agents.
Perhaps surprisingly, we show that this heuristic does not give better guarantees than the ECE algorithm with no heuristic.

\begin{restatable}{theorem}{thmagentwithmaxutil}
\label{thm:agent_with_max_util}
Consider the ECE algorithm using the heuristic where each specified good (chosen according to a predetermined ordering) is allocated to an unenvied agent with the highest utility for the good.
Its strong utilitarian price is~$n^2$, and its strong egalitarian price is $\infty$.
\end{restatable}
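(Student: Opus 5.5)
The upper bound for the utilitarian price comes directly from earlier results. Any allocation produced with this heuristic is in particular an ECE allocation, so its envy graph is acyclic, and \cref{lem:no_envy_cycle} gives utilitarian welfare at least $1/n$. The optimum is at most $n$, hence the strong utilitarian price is at most $n^2$. The whole task is therefore the lower bound: a family of instances, together with a fixed good ordering, on which the heuristic can end with utilitarian welfare tending to $1/n$ and egalitarian welfare $0$, while the optimum tends to $n$ and $1$ respectively. Since we measure the worst allocation obtainable with the heuristic, ties among unenvied agents of equal maximum utility may be broken adversarially. I will use that freedom, but I will try to make the key choices strict by small perturbations.

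I will start from the instance in the proof of \cref{prop:strong_price}: agent~$n$ values $g_1,\dots,g_n$ at $1/n$ each, and agent~$i<n$ values $g_{n+i}$ at (about) $1$. The bad execution there gives $g_r$ to agent~$r$ and then $g_{n+i}$ to agent $i+1$, which the greedy heuristic does not do directly. Agent~$n$ has the highest utility for $g_1,\dots,g_n$ and is initially unenvied, and agent~$i$ is the unique agent valuing $g_{n+i}$. The repair is to make the ``intended'' recipients preferred by the greedy choice, and the agents we want to avoid envied at the right moments. Concretely:
\begin{itemize}
\item agent~$i<n$ gets a tiny value $\varepsilon$ on the good she should receive early;
\item agent $i-1$ gets a tiny value $\varepsilon$ on agent~$i$'s early good, so that agent~$i$ is envied along the chain $1\to2\to\dots$;
\item each agent~$i<n$ puts the remaining mass $1-O(\varepsilon)$ on $g_{n+i}$.
\end{itemize}
When $g_{n+i}$ arrives, agent~$i$ is envied and hence ineligible. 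Every eligible agent then values $g_{n+i}$ at $0$, so the heuristic may hand it to agent $i+1$ (or whichever agent keeps the chain intact). I also need an initial step that makes agent~$n$ envied or otherwise ineligible for goods she values. Either agent~$n$ takes $g_1$ and an agent valuing it at $\varepsilon$ then envies her, or the goods are reordered so that agent~$n$'s goods arrive only when she is the sole unenvied agent, in which case she collects exactly $1/n$.

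Once the execution is fixed, I will check round by round that no envy cycle arises, or that any cycle elimination keeps the welfare small. I will also check that each chosen agent is unenvied and attains the maximum utility among unenvied agents. In the final allocation agents $1,\dots,n-1$ have utility $O(\varepsilon)$ and agent~$n$ has $1/n+O(\varepsilon)$. The welfare-maximizing allocation still gives each agent roughly $1$, so the ratios tend to $n^2$ as $\varepsilon\to0$, giving supremum $n^2$. For the egalitarian part I will make the early $\varepsilon$-goods go to agents other than those valuing them whenever possible, so that some agent~$i<n$ ends with utility exactly $0$. If that fails, a separate small instance suffices for the infinite egalitarian price, e.g.\ two agents where the only good valued by agent~$1$ is forced to agent~$2$ while agent~$1$ is envied.

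The main obstacle is the bookkeeping of the envy graph. The $\varepsilon$-values must create exactly the envy edges that block the natural recipients, without creating an envy cycle whose elimination hands agents their valuable bundles. They also must not make agent~$n$ the strict greedy choice for goods she values at $1/n$. I expect to need to adjust the good order and the placement of the $\varepsilon$'s by trial on $n=2,3$ before generalizing.
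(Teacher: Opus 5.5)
\textbf{There is a genuine gap: the lower-bound construction, which is the whole content of this theorem, is never supplied.} Your upper bound is the paper's (it is \cref{lem:no_envy_cycle} together with $\MUW\le n$, as in \cref{prop:strong_price}). For the lower bound you list desiderata and defer the instance to ``trial on $n=2,3$''. The mechanism you sketch also does not close, for two reasons.

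\begin{itemize}
\item \emph{Agent~$1$ is never blocked.} The late-phase chain propagates itself: handing agent $i$'s big good to agent $i+1$ makes agent $i$ envy agent $i+1$, which blocks agent $i+1$ in the next round. So your early $\varepsilon$-chain (agent $i-1$ envies agent $i$) is superfluous for $i\ge 2$. It also does nothing for agent~$1$, who is exactly the agent that must be envied when her big good arrives; nothing in your sketch makes anyone envy agent~$1$. In addition, an $\varepsilon$ equal to agent $i-1$'s value for her own early good gives no lasting envy, while a strictly larger one makes agent $i-1$ the greedy recipient of that good whenever she is unenvied.
\item \emph{The final envy cycle is unresolved.} The agent who envies agent~$1$ is eventually reached by the chain $1\to2\to\cdots$. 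Unless that envy disappears in time, the chain closes into a cycle through agent~$1$, and eliminating it hands each agent $i<n$ the bundle containing her big good. You name this obstacle but give no device for it.
\end{itemize}

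The paper resolves both points by making agent~$n$ the blocker.
\begin{itemize}
\item Every agent $i<n$ values $g_1$ at $\epsilon$ and agent~$n$ values it at $2\epsilon$. So agent~$n$ takes $g_1$ and stays envied until each agent $r-1$ has received $g_r$, which is worth $\epsilon$ to her and about $1/n$ to agent~$n$.
\item Agent~$n$ is then the only unenvied agent and receives $g_{n+1}$, worth $1/n-3\epsilon$, which brings her to $1/n-\epsilon$. This is below her value $1/n$ for agent~$1$'s good $g_2$ but equal to her value for every other agent's good. Hence exactly agent~$1$ is blocked when $g_{n+2}$ arrives, and the chain runs.
\item The last good $g_{2n}$ is agent $n-1$'s big good, but agent~$n$ values it at $(n-1)\epsilon$. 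Receiving it lifts her to $1/n+(n-2)\epsilon$ in exactly the round the chain reaches her, erasing her envy of agent~$1$, so no cycle forms.
\end{itemize}
The outcome has utilitarian welfare $1/n+(2n-3)\epsilon$ and egalitarian welfare $\epsilon$, against an optimum of roughly $n$ and $1$.

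On the egalitarian part, you do not need an agent at utility exactly $0$: egalitarian welfare $\epsilon$ against an optimum near $1$ already drives the ratio to $\infty$. Your two-agent fallback cannot work as stated. If agent~$1$ values only a single good $h$, she envies nobody before $h$ appears, so agent~$2$ is always unenvied. Agent~$1$ therefore only ever receives goods that agent~$2$ values at $0$, so nobody envies agent~$1$ when $h$ arrives, and the greedy rule gives $h$ to her. The only exception is when agent~$2$ also values $h$ at $1$, and then the optimal egalitarian welfare is already $0$.
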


\begin{proof}
Let $0 < \epsilon < 1/(3n)$.
We construct an instance with $n$ agents, $m = 2n$ goods, and the following utilities:
\begin{center}
\setlength{\tabcolsep}{4pt}
\begin{tabular}{c|cccccccc}
    $g$      & $g_1$ & $g_2$ & $g_3$ & $\dots$ & $g_{n}$ \\
    \hline
    $u_1(g)$ & $\epsilon$ & $\epsilon$ & $0$ & $\dots$ & $0$ \\
    $u_2(g)$ & $\epsilon$ & $0$ & $\epsilon$ & $\dots$ & $0$ \\
    $\vdots$ & $\vdots$ & $\vdots$ & $\vdots$ & $\ddots$ & $\vdots$ \\
    $u_{n-1}(g)$ & $\epsilon$ & $0$ & $0$ &  $\dots$ & $\epsilon$ \\
    $u_n(g)$ & $2\epsilon$ & $\frac{1}{n}$ & $\frac{1}{n}-\epsilon$ & $\dots$ & $\frac{1}{n}-\epsilon$ \\
\end{tabular}
\qquad
\begin{tabular}{c|cccccccc}
    $g$      & $g_{n+1}$ & $g_{n+2}$ & $\dots$ & $g_{2n-1}$ & $g_{2n}$ \\
    \hline
    $u_1(g)$ & $0$ & $1-2\epsilon$ & $\dots$ & $0$ & $0$ \\
    $\vdots$ & $\vdots$ & $\vdots$ & $\ddots$ & $\vdots$ & $\vdots$ \\
    $u_{n-2}(g)$ & $0$ & $0$ & $\dots$ & $1-2\epsilon$ & $0$ \\
    $u_{n-1}(g)$ & $0$ & $0$ & $\dots$ & $0$ & $1-2\epsilon$ \\
    $u_n(g)$ & $\frac{1}{n}-3\epsilon$ & $0$ & $\dots$ & $0$ & $(n-1)\epsilon$ \\
\end{tabular}
\end{center}

Consider the allocation obtained by the ECE algorithm using the heuristic where in round $r$, good $g_r$ is allocated to an unenvied agent $i$ maximizing the utility $u_i(g_r)$.
Assume that any remaining ties are broken by choosing the agent with the lowest index.
In the first round, good $g_1$ must be allocated to agent $n$, which causes all other agents to envy agent $n$.
For each round $r \in \{2, \dots, n\}$, there is at least one agent with an empty bundle; such an agent will envy agent $n$.
Hence, good~$g_r$ must be allocated to agent $r-1$, who is the unenvied agent with the highest utility for it.
At the end of round~$r$, the direction of envy between agents $r-1$ and $n$ is reversed, so that agent $n$ envies agent $r-1$ instead; note that no envy cycle is created.
After the first $n$ rounds, agent $n$ is no longer envied by any agent but envies every other agent.

Subsequently, good $g_{n+1}$ must be allocated to agent $n$---the only unenvied agent---removing the envy from agent $n$ to every agent $i \in N \setminus \{1, n\}$.
However, agent $n$ still envies agent $1$, who therefore cannot receive a good in round $n+2$.
Hence, good $g_{n+2}$ must be allocated to agent $2$, the unenvied agent with the lowest index.
This causes agent $2$ to be envied by agent $1$.
Then, it can be shown by induction that for each round $r \in \{n+3, \dots, 2n\}$, good $g_r$ must be allocated to agent~$r-n$, the unenvied agent with the lowest index, causing her to be envied by agent $r-n-1$.
Furthermore, allocating good $g_{2n}$ to agent $n$ in round $2n$ removes agent $n$'s envy towards agent $1$, so no envy cycle is created.

In the final allocation, each agent $i \in N \setminus \{n\}$ receives a total utility of $\epsilon$, while agent $n$ receives a total utility of $1/n+(n-2)\epsilon$. 
The utilitarian welfare of this allocation is $1/n + (2n-3)\epsilon$, and the egalitarian welfare is $\epsilon$.

On the other hand, consider a different allocation where all the goods~$g_1, \dots, g_{n+1}$ are allocated to agent $n$, whereas the goods~$g_{n+2}, \dots, g_{2n}$ are allocated to agents $1, \dots, n-1$ respectively.
Note that agent $n$ receives a utility of $1-(n-1)\epsilon$ and each agent $i \in N \setminus \{n\}$ receives a utility of $1-2\epsilon$.
Hence, the utilitarian welfare is $n-(3n-3)\epsilon$ and the egalitarian welfare is at least $1-n\epsilon$.
Therefore, the strong utilitarian price for this instance is at least $(n-(3n-3)\epsilon)/(1/n+(2n-3)\epsilon)$, which approaches $n^2$ as $\epsilon \rightarrow 0$.
Furthermore, the strong egalitarian price for this instance is at least $(1-n\epsilon)/\epsilon$, which approaches $\infty$ as $\epsilon \rightarrow 0$.

The bound for the strong egalitarian price is obviously tight.
For the strong utilitarian price, the tightness of the bound follows from \cref{prop:strong_price}, since the set of allocations obtained with the ECE algorithm using any heuristic is a subset of all allocations produced by the ECE algorithm.
\end{proof}

\subsection{Heuristic for goods}
\label{subsec:good_heuristic}

We next consider the case where the heuristic decides only the goods to be allocated, while the receiving unenvied agents are chosen according to a predetermined order.
We consider a simple ordering of agent, where in each round~$r \in \{1, \dots, n\}$, agent~$r$ is allocated some good chosen by the heuristic (note that agent~$r$ is always unenvied since she has an empty bundle).
Furthermore, in each round~$r \in \{n+1, \dots, m\}$, an unenvied agent with the lowest index is allocated a good chosen by the heuristic.
As in \cref{subsec:agent_heuristic}, we consider a greedy heuristic that chooses an unallocated good with the highest utility for the specified agent.\footnote{As mentioned in \cref{subsec:related}, this heuristic is known to guarantee $1/2$-EFX~\citep[Theorem 3.1]{MarkakisSa23} and $1/2$-MMS~\citep[Theorem 6]{AmanatidisAzBi23}.}
We demonstrate that this heuristic yields a slightly lower strong utilitarian price than the vanilla ECE algorithm.
Note that both the lower and upper bound proofs hold even if an arbitrary unenvied agent is allocated an arbitrary good in round $n+1$ onward.

\begin{restatable}{theorem}{thmgoodwithmaxutil}
\label{thm:good_with_max_util}
Consider the ECE algorithm using the heuristic where each specified agent (chosen according to the ordering described above) is allocated an available good with the highest utility for the agent.
Its strong utilitarian price is $n^2-n+1$, and its strong egalitarian price is $\infty$.
\end{restatable}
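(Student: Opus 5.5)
The proof has two parts: an instance showing the strong utilitarian price is at least $n^2-n+1$ and the strong egalitarian price is unbounded, and a matching upper bound on the utilitarian price. Since the maximum utilitarian welfare is at most $n$, the upper bound amounts to showing that every ECE allocation obtained with this heuristic has utilitarian welfare at least $n/(n^2-n+1) = 1/(n-1+1/n)$.

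\emph{Upper bound.} I would use two facts. First, in round $r\le n$ agent $r$ takes a good $h_r$ of maximum utility among the remaining goods; write $a_r=u_r(h_r)$. Since no agent's utility ever decreases, the final utility of agent $r$ is at least $a_r$. Second, by \cref{lem:no_envy_cycle}, the final envy graph is acyclic, so there is an agent $s$ who envies nobody and has utility at least $1/n$.

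The key extra ingredient is agent~$1$. She picked her overall favourite good, so every good is worth at most $a_1$ to her. By EF1, $u_1(A_1)\ge u_1(A_j)-a_1$ for every $j$. Summing over $j$ gives $u_1(A_1)\ge (1-(n-1)a_1)/n$, and together with $u_1(A_1)\ge a_1$ this gives $u_1(A_1)\ge \max\{a_1,(1-(n-1)a_1)/n\}$.

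If $s\neq 1$, the welfare is at least $1/n+\max\{a_1,(1-(n-1)a_1)/n\}\ge 1/n+1/(2n-1)$, which already exceeds the target. The delicate case is $s=1$, and this is where I expect the main obstacle.

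In that case agent $1$ envies no one while some bundle contains her top good $h_1$ of value $a_1$. I would refine this by tracking who holds $h_1$ and the goods $h_2,\dots,h_n$. Whoever holds a bundle containing $h_r$ for $r\ge 2$ received it, directly or through cycle swaps, only in exchange for something she weakly preferred. So the holder of each original bundle keeps utility at least the corresponding $a_r$. Moreover, $a_r$ cannot be small unless agent $r$'s utility is concentrated on the goods $h_1,\dots,h_{r-1}$; in that case she values the bundles containing them and, by EF1 and $u_r(h_r)=a_r$ being her best remaining value, gets a comparable share.

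I would set this up as a small optimization (essentially an LP) in the variables $a_1,\dots,a_n$ and agent $1$'s valuation of the other bundles. The aim is to show that its minimum is $1/(n-1+1/n)$, attained when agent $1$ receives exactly $1/n$ of value and the other agents jointly contribute about $1/(n(n^2-n+1))$. I would try this LP route first. If it fails, I would fall back on a charging argument that pairs agent $1$'s ``wasted'' value, namely the goods of value at most $a_1$ held by others, against the utilities of the agents holding them.

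\emph{Lower bound.} I would build an instance in the spirit of the proofs of \cref{prop:strong_price} and \cref{thm:agent_with_max_util}. Agents $2,\dots,n$ each have one private good of value nearly $1$ together with $\epsilon$-valued decoys that they grab in their first pick. Agent $1$'s utility is spread so that she ends with about $1/n$ while, through swaps and lowest-index tie-breaking, each private good is routed to an agent who values it at $0$. The construction should be tuned so that the resulting welfare approaches $n/(n^2-n+1)$ as $\epsilon\to0$. Since some agent ends up with utility about $\epsilon$, the same instance should give egalitarian welfare tending to $0$ against an optimum near $1$, yielding the price of $\infty$.
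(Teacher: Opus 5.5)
Your upper-bound reduction fails. From $\MUW\le n$ you conclude that it suffices to show $\UW\ge n/(n^2-n+1)$ for every allocation the heuristic can output. That would be sufficient, but the claim is false, so neither your LP nor your charging argument can prove it. Take $n=2$ with $u_1(g_1)=u_1(g_2)=1/2$, $u_2(g_1)=1$, $u_2(g_2)=0$. Agent~$1$ may take $g_1$ in round~$1$, and agent~$2$ then takes $g_2$. Only agent~$2$ envies, so there is no cycle, and $\UW=1/2<2/3$. More generally, take $u_1(g_j)=1/n$ for all $j$, $u_i(g_{i-1})=1$ for $i\ge 2$, and all other values $0$. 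This instance has an execution with $\UW=1/n$ for every $n$, matching \cref{lem:no_envy_cycle}. It is exactly your case $s=1$: agent~$1$ envies no one and holds her proportional share, while everyone else has utility $0$. So no refinement of that case can lift the welfare above $1/n$. The price still stays at $n^2-n+1$ because in such instances the optimum is forced below $n$ (here it is $n-1+1/n$). Hence $\MUW$ must be bounded jointly with the structure of the ECE allocation, not separately by $n$.

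The missing idea is a bundle-by-bundle accounting. Write $\MUW\le\sum_i\MUW(A_i)$ with $\MUW(A_i)=\sum_j u_j(A_j^*\cap A_i)$. For an unenvied bundle $A_k$, $\MUW(A_k)\le\sum_j u_j(A_k)\le\UW$. Now let $|A_i|\ge 2$. Its last good $g$ arrived after round $n$: in rounds $1,\dots,n$ the goods go to distinct empty bundles, since an empty bundle is never envied and so never moves. Thus $g$ was still available when agent $j$ picked $h_j$, giving $u_j(g)\le a_j\le u_j(A_j)$. The owner was unenvied when $g$ was added, so also $u_j(A_i\setminus\{g\})\le u_j(A_j)$. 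Together these yield $\MUW(A_i)\le 2\UW$. A singleton bundle contributes at most $1$ and an empty one nothing. Only for the singletons do you need $\UW\ge 1/n$, which gives $\MUW/\UW\le 1+2|N_2|+n|N_1|\le n^2-n+1$.

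The same accounting shows your lower-bound target is miscalibrated. For $n\ge 3$, no instance can have optimum near $n$ and welfare near $n/(n^2-n+1)$. The ratio bound $1+2|N_2|+|N_1|/\UW$ would then be at most roughly $1+(n-1)(n-1+1/n)=n^2-2n+3-1/n<n^2-n+1$. The tight instance is the one above, with welfare $1/n$ against optimum $n-1+1/n$. It also gives egalitarian welfare $0$ against $1/n$, which settles the $\infty$ claim.
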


\begin{proof}
We start with the upper bound for the strong utilitarian price.
Let $(A_1, \dots, A_n)$ be an allocation obtained with this heuristic; its utilitarian welfare is denoted by $\UW = \sum_{i \in N} u_i(A_i)$.
Moreover, let $(A^*_1, \dots, A^*_n)$ be an allocation maximizing the utilitarian welfare; its utilitarian welfare is denoted by $\MUW = \sum_{j \in N} u_j(A^*_j)$.
For each bundle $A_i$ in the ECE allocation, let $\MUW(A_i) = \sum_{j \in N} u_j(A^*_j \cap A_i)$ be the ``portion'' of the maximum utilitarian welfare contributed by bundle $A_i$.
Note that
\begin{align*}
    \MUW = \sum_{j \in N} u_j(A^*_j)
    \leq \sum_{j \in N} \sum_{i \in N} u_j(A^*_j \cap A_i)
    = \sum_{i \in N} \sum_{j \in N} u_j(A^*_j \cap A_i)
    &= \sum_{i \in N} \MUW(A_i).
\end{align*}
Let $k$ be an unenvied agent---such an agent exists by definition of the ECE algorithm.
That is, for any $j \in N$, $u_j(A_j) \geq u_j(A_k)$.
Then,
\begin{align*}
    \MUW(A_k) = \sum_{j \in N} u_j(A^*_j \cap A_k) 
    \leq \sum_{j \in N} u_j(A_k) 
    \leq \sum_{j \in N} u_j(A_j) 
    = \UW.
\end{align*}
Now, let $i \in N \setminus \{k\}$.
We consider three cases depending on the size of bundle $A_i$.

\underline{Case 1}: $|A_i| \geq 2$.
Let $g$ be the last good to be added into the bundle $A_i$ (due to the envy cycle elimination step, $A_i$ may belong to a different agent from agent $i$ when $g$ is added).
By definition of the heuristic, $g$ must be added into the bundle $A_i$ after round $n$.
Then, for any agent $j \in N$, it must hold that $u_j(A_j) \geq u_j(g_j) \geq u_j(g)$, where $g_j$ is the good allocated to agent $j$ in round $j$.
On the other hand, by definition of the ECE heuristic, the owner of the bundle $A_i \setminus \{g\}$ must be unenvied when $g$ is added into it.
Hence, we also have $u_j(A_j) \geq u_j(A_i \setminus \{g\})$ for any agent $j \in N$.
As a result, $u_j(A_i) \leq u_j(A_i \setminus \{g\}) + u_j(g) \leq 2u_j(A_j)$.
This implies that
\begin{align*}
    \MUW(A_i) = \sum_{j \in N} u_j(A^*_j \cap A_i) 
    \leq \sum_{j \in N} u_j(A_i) 
    \leq 2 \sum_{j \in N} u_j(A_j) 
    = 2\cdot\UW.
\end{align*}

\underline{Case 2}: $|A_i| = 1$.
Then, there is exactly one agent $j \in N$ such that $A^*_j \cap A_i \neq \emptyset$, which implies that
    $\MUW(A_i) = u_j(A^*_j \cap A_i) \leq 1$.

\underline{Case 3}: $|A_i| = 0$.
In this case, $\MUW(A_i) = 0$.

Let $N_1 = \{i \in N \setminus \{k\} : |A_i| = 1\}$ and $N_2 = \{i \in N \setminus \{k\} : |A_i| \geq 2\}$.
By our analysis of the three cases,
\begin{align*}
    \MUW \leq \sum_{i \in N} \MUW(A_i)
    &= \MUW(A_k) + \sum_{i \in N_2} \MUW(A_i) + \sum_{i \in N_1} \MUW(A_i) \\
    &\leq \UW + 2|N_2|\cdot\UW + |N_1|.
\end{align*}
Hence,
\begin{align*}
    \frac{\MUW}{\UW} \leq 1 + 2|N_2| + \frac{|N_1|}{\UW}
    &\leq 1 + 2|N_2| + n|N_1| \tag{since $\UW \geq 1/n$ by \cref{lem:no_envy_cycle}} \\
    &\leq 1 + 2|N_2| + n(n-1-|N_2|) \tag{since $N_1 \subseteq (N \setminus \{k\}) \setminus
    N_2$} \\
    &= n^2 - n + 1 + (2-n)|N_2| \\
    &\leq n^2 - n + 1. \tag{since $n \geq 2$ and $|N_2| \geq 0$}
\end{align*}
Therefore, the strong utilitarian price is at most $n^2-n+1$.

We continue with the lower bounds.
Consider an instance with $n$ agents, $m = n$ goods, and the following utilities:
\begin{itemize}
    \item $u_1(g_j) = 1/n$ for all $j$.
    \item $u_i(g_{i-1}) = 1$ for each $i \in N \setminus \{1\}$.
    \item $u_i(g_j) = 0$ for all other pairs $(i, j)$.
\end{itemize}

Consider the allocation obtained with the ECE algorithm using the heuristic where each specified agent is allocated a good that yields the highest utility for her.
Assume that any remaining ties are broken by choosing the good with the lowest index.
In the first round, agent~$1$ picks good~$g_1$ with utility $1/n$.
In the second round, all remaining goods have the same utility of~$0$ to agent~$2$, so she picks good~$g_2$ which has the lowest index.
Then, it can be shown by induction that in each round~$r \in \{3, \dots, n\}$, agent~$r$ values all remaining goods at~$0$ and picks good~$g_r$ which has the lowest index.
Note that there is no envy cycle in any round, since each agent~$i \in N \setminus \{1\}$ only envies agent~$i-1$, and agent~$1$ does not envy any agent. 
In the resulting allocation, agent~$1$ receives a utility of $1/n$, while every other agent receives a utility of $0$. 
Hence, the utilitarian welfare of the allocation is $1/n$, and the egalitarian welfare is $0$.

Now, consider a different allocation where each agent~$i \in N \setminus \{1\}$ receives good~$g_{i-1}$ with utility $1$, and agent~$1$ receives good~$g_n$ with utility $1/n$.
The utilitarian welfare of this allocation is $n-1+1/n$, and the egalitarian welfare is $1/n$.
Therefore, the strong utilitarian price is at least $(n-1+1/n)/(1/n) = n^2-n+1$, and the strong egalitarian price is at least $(1/n)/0 = \infty$.
The bound for the strong egalitarian price is obviously tight.
\end{proof}

\subsection{Heuristic for both agents and goods}
\label{subsec:agent_good_heuristic}

In a bid to find heuristics that provide improved welfare guarantees, we turn our attention to the case where the heuristics can choose both the goods and the agents.
First, we consider a simple heuristic that chooses a pair of good and agent maximizing the utility in the first round, and chooses arbitrary goods and agents in subsequent rounds.
Remarkably, this heuristic already yields a much better strong utilitarian price than the vanilla ECE algorithm.

\begin{restatable}{theorem}{thmagentgoodwithmaxutilfirstroundonly}
\label{thm:agent_good_with_max_util_first_round_only}
Consider the ECE algorithm using the heuristic where in the first round, a good $g \in M$ is allocated to an agent $i \in N$ such that the utility $u_i(g)$ is maximized.
\begin{enumerate}[label=(\alph*)]
    \item If $n = 2$, its strong utilitarian price is $8/3$.
    \item If $n \geq 3$, its strong utilitarian price is 
    $n + \flatfrac{2ns}{(2n+s(s+1))}$,
    where $s = \lceil \sqrt{2n} - 1/2 \rceil$.
    This is strictly between $n + \sqrt{n/2} - 1/4$ and $n + \sqrt{n/2} - 1/5$.
    \item Its strong egalitarian price is $\infty$.
\end{enumerate}
\end{restatable}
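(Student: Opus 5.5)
The upper bound will follow the bundle-decomposition template from the proof of \cref{thm:good_with_max_util}. Let $(A_1,\dots,A_n)$ be a final allocation produced with the heuristic and $(A^*_j)$ a utilitarian-optimal allocation, and write $\MUW \le \sum_i \MUW(A_i)$ as before. Let $g^*$ and $i^*$ be the first-round pair, and set $\alpha = u_{i^*}(g^*) = \max_{i,g} u_i(g)$.

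The key consequence of the first round is this. Every agent's utility is nondecreasing, and the bundle containing $g^*$ is always held by someone valuing it at least as much as $i^*$ did. So the final utilitarian welfare satisfies $\UW \ge \alpha$, and $\UW \ge 1/n$ by \cref{lem:no_envy_cycle}. Moreover, every single good contributes at most $\alpha$ to $\MUW$, so $\MUW(A_i) \le \alpha |A_i|$.

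For the unenvied agent $k$, I will again use $\MUW(A_k) \le \UW$. For every other bundle I will use a bound of the form $\MUW(A_i) \le \min\{1, \alpha|A_i|, \dots\}$. In particular, for a bundle whose last-added good $g$ went to an unenvied owner, $u_j(A_i) \le u_j(A_j) + \alpha$, giving $\MUW(A_i) \le \UW + n\alpha$ roughly. I would then optimize over bundle sizes. A bundle with $t$ goods costs $t$ goods out of the total, but the lower bound $\UW\ge 1/n$ is essentially tight only when some agent has exactly $1/n$. The tension between $\alpha$ small (many goods each worth at most $\alpha$) and $\UW\ge\alpha$ should produce the $s(s+1)/2$ term: $s$ should be the number of ``expensive'' bundles, each of value at most something like $t\alpha$ with $t$ ranging $1,\dots,s$.

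Concretely, I would set up the bound
\[
\frac{\MUW}{\UW} \le 1 + \sum_{i\ne k} \frac{\min\{1,\ \UW + (\text{something})\alpha\}}{\UW},
\]
with $\UW \ge \max\{\alpha, 1/n\}$. I would then maximize over $\alpha$ and a parameter $s$ counting how many bundles can reach value $1$. This is a one-variable optimization yielding $n + 2ns/(2n+s(s+1))$ with integer $s$ chosen by comparing consecutive values, which gives $s=\lceil\sqrt{2n}-1/2\rceil$. The numeric sandwich $n+\sqrt{n/2}-1/4$ to $n+\sqrt{n/2}-1/5$ is routine calculus on that expression. The $n=2$ case is handled separately, since the general formula's constraint structure differs; there I would do a direct case analysis giving $8/3$.

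For the lower bounds I would modify the instance of \cref{prop:strong_price}. Agent $n$ gets many small goods, and agents $1,\dots,n-1$ get goods worth up to $\alpha$ to them, arranged in a staircase where agent $i$ values $i$ goods (roughly $s$ agents with sizes $1,\dots,s$). The ECE run with adversarial choices after round one leaves those agents with near-zero utility. The value $\alpha$ is tuned so that $\UW\approx \alpha \approx$ the optimum point of the above optimization. For part (c), one agent ends with utility $0$ in a variant where the first-round maximum is taken by a different agent; this is essentially the \cref{prop:strong_price} instance with a large good added for agent $n$.

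The main obstacle is getting the tight per-bundle bound for the non-$k$ bundles that interacts correctly with $\alpha$ to produce exactly the $s(s+1)/2$ structure, and matching it with a construction. I would first derive the bound by guessing the extremal instance from the formula and then reverse-engineering which inequality must be tight.
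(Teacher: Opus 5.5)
\textbf{Upper bound: the lower bound on $\UW$ is missing.} The only lower bounds on $\UW$ you use are $\UW\ge\max\{\alpha,1/n\}$, and with these no bundle-by-bundle accounting beats $2n-1$.

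\begin{itemize}
\item Your estimate $\UW+n\alpha$ is too loose; summed, it only gives $n^2$. For an envied bundle the correct estimate is $\MUW(A_i)\le\UW+\alpha$, since its last-added good lies in a single $A^*_j$ and contributes at most $\alpha$.
\item Even so, summing yields $\MUW\le n\UW+(n-1)\alpha$. At $\alpha=\UW=1/n$ this is a ratio of $2n-1$, already above $8/3$ for $n=2$ and above $4$ for $n=3$.
\end{itemize}

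The missing idea is a lower bound on $\UW$ that grows as $\alpha$ shrinks. Take a topological order of the final (acyclic) envy graph. The agent in position $i$ envies nobody from position $i$ onward. She is EF1 toward each of the $i-1$ earlier agents, so she values each of their bundles at most her own utility plus $\alpha$. Hence $1\le n\,u+(i-1)\alpha$ for her utility $u$. So the $i$-th largest utility is at least $(1-(i-1)\alpha)/n$. For $\alpha>1/n$ this gives $\UW\ge\alpha+\sum_{i=2}^{\lceil 1/\alpha\rceil}(1-(i-1)\alpha)/n$. This arithmetic sum is where the $s(s+1)$ term comes from, not a count of ``expensive'' bundles.

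\textbf{The $1-\alpha$ branch is also needed.} Let $M'$ be the set of last-added goods of envied bundles, so $|M'|\le n-1$. Your per-bundle sum is really $\MUW\le|M'|\alpha+\sum_j u_j(M\setminus M')$ with $u_j(M\setminus M')\le n\,u_j(A_j)$. For the agent with the largest utility, use $u_1(M\setminus M')\le 1$ instead. Together with $u_1(A_1)\ge\max\{\alpha,1/n\}$, this gives $\MUW\le n\UW+\min\{(n-1)\alpha,1-\alpha\}$.

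Your per-bundle cap $\MUW(A_i)\le 1$ is not a valid substitute. Goods of one ECE bundle can go to different agents in the optimum, so the cap holds only when $|A_i|=1$.

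Combining both ingredients gives the bound $n+\max_{x\in[1/n,1]}\frac{1-x}{x+\sum_{i=2}^{\lceil 1/x\rceil}(1-(i-1)x)/n}$ for every $n\ge2$. From there your integer-comparison plan works. The bound is monotone in $x$ on each $[1/t,1/(t-1)]$. Also $h(s)=2ns/(2n+s(s+1))$ satisfies $h(s)\ge h(s+1)$ iff $s(s+1)\ge 2n$. The case $n=2$ needs no separate analysis: $x\ge 1/n$ forces $s\le 1$, giving $8/3$.

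\textbf{Lower bound and part (c).} Your lower-bound plan points the wrong way. Since $\MUW\le n\UW+1$, any instance whose optimum is $\Omega(n)$ has $\UW=\Omega(1)$ and hence ratio only $n+O(1)$. So modifying the \cref{prop:strong_price} instance, whose optimum is $n$, while pushing ECE welfare down to about $\alpha$ cannot reach $n+\sqrt{n/2}$ for large $n$. In a tight instance, $\UW\approx 2/\sqrt{2n}$ and $\MUW=1+n\UW-x\approx\sqrt{2n}$, and every inequality of the topological-order bound must be tight.

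This is achieved with $t=\lceil 1/x\rceil$ as follows.
\begin{itemize}
\item Each agent $i\in\{2,\dots,t\}$ has a private block of $n$ goods worth $(1-(i-1)x)/n$ each, which ECE spreads one per agent.
\item A final staircase of $n-1$ goods, one to each of agents $1,\dots,n-1$, makes agent $i$ envy each of agents $1,\dots,i-1$ by exactly one good of value $x$.
\item Agent $1$ ends with $x$. Agents $t+1,\dots,n$ end with $0$ but get $x$ each in the optimum.
\end{itemize}

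For (c), take $u_i(g_j)=1$ if $i=j$ and $0$ otherwise. Agent $1$ takes $g_1$ in round $1$, remains unenvied, and also takes $g_2$. Since $g_1,g_2$ stay in one bundle, only $n-2$ goods remain for $n-1$ bundles, so some agent ends with nothing. This also covers $n=2$.
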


We begin by proving \cref{thm:agent_good_with_max_util_first_round_only}(c), which is restated in the following lemma.

\begin{lemma}
\label{lem:first_round_only_strong_egal_price}
Consider the ECE algorithm using the heuristic where in the first round, a good $g \in M$ is allocated to an agent $i \in N$ such that $u_i(g)$ is maximized.
Its strong egalitarian price is $\infty$.
\end{lemma}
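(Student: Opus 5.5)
We exhibit a family of instances on which some ECE execution consistent with the heuristic leaves an agent with utility~$0$, while some allocation gives every agent positive utility. The strong egalitarian price of such an instance is then $x/0=\infty$. The heuristic only constrains the first round, so we only need the first good to be a maximum-utility pair. After that we may pick unenvied agents and goods adversarially, as in the lower-bound construction of \cref{prop:strong_price}.

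The plan is to adapt the instance from \cref{prop:strong_price}. Take $n$ agents and $m=2n$ goods. Add a special good $g_0$ with $u_n(g_0)=1/(n+1)$, which will be the maximum single-good utility of any agent. Agent $n$ values $g_0,g_1,\dots,g_n$ at $1/(n+1)$ each. For each $i\in N\setminus\{n\}$, agent $i$ values $g_{n+i}$ at $1-\delta$ and $g_0$ at $\delta$, and everything else at $0$. We choose $\delta$ so that $1-\delta\le 1/(n+1)$ fails. To avoid this conflict, the first-round maximum should instead be a good that agent $n$ values highly and that is harmless to give away.

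A cleaner alternative is to let the first-round good be the only good valued highly by anyone, and give it to an agent who values it at nearly~$1$. Concretely, with $n\ge 2$, let agent~$1$ value $g_1$ at $1-\epsilon$, and let agent $2$ value $g_1$ at $1-\epsilon$ too, with $\epsilon$ small. Ties may be broken adversarially, so agent~$1$ gets $g_1$ in round one. Agent $2$ now envies agent $1$. The remaining mass of agent~$2$ ($\epsilon$) sits on a good $g_2$. The other agents $3,\dots,n$ each value their own private good $g_i$ at~$1$. Next, we allocate $g_2$ to agent~$1$, who is unenvied if nobody else values $g_1$; agent $2$ values $g_1$, so agent $1$ is envied. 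We therefore need agent~$2$'s valuable goods to be taken in a way that leaves her with nothing. The fix is that agent~$2$ splits her value: $1-\epsilon$ on $g_1$ and $\epsilon$ on $g_2$, where agent $1$ also likes $g_2$ slightly. We give $g_2$ to an unenvied agent $j\ge 3$ (who has an empty bundle and is unenvied initially), and we let agent~$j$ value $g_2$ at $0$. Agent $2$ then envies both $1$ and $j$, and there is no cycle provided neither of them envies agent~$2$, whose bundle is empty. The remaining goods $g_i$ are given to their owners $i\ge3$, and nothing is ever given to agent~$2$. We need $n\ge3$ for this. For $n=2$ we instead give $g_2$ to agent $1$ only after agent $1$ becomes unenvied, which never happens. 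So for $n=2$ we use one more good $g_3$ that agent~$1$ values at $\epsilon$: allocate $g_2,g_3$ both to agent~$1$ while keeping agent~$2$ envious. This is impossible since agent $1$ is envied, so for $n=2$ agent~$2$ must receive a good, and we make all remaining goods worthless to her.

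For $n=2$ the step of ensuring that the goods agent~$2$ receives are worthless to her is the main obstacle. The concrete plan is to take goods $g_1,g_2,g_3$ with $u_1=(1/2,0,1/2)$ and $u_2=(1/2,1/2,0)$. In round one, agent~$1$ takes $g_1$ (tied maximum). Agent~$2$ envies agent~$1$, so agent~$2$ is the unenvied one and receives $g_3$ (worth $0$ to her). Agent $1$ still does not envy agent~$2$, since $1/2\ge 1/2$, so agent~$2$ still envies agent~$1$ and no cycle forms. Then $g_2$ must go to an unenvied agent, which is agent~$2$, and she obtains $1/2>0$. That fails, so we break the tie differently: agent~$1$ values $g_2$ at $\epsilon$ and $g_3$ at $1/2-\epsilon$, so that after round two agent~$1$ weakly prefers her bundle and agent~$2$ envies. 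We verify in the final write-up that the last good can go to agent~$1$, leaving agent $2$ with utility~$0$ while the optimal allocation $(\{g_1,g_3\},\{g_2\})$ or a similar one has positive egalitarian welfare. The general construction should be checked round by round for acyclicity in the same way, and the welfare comparison then gives $\infty$.
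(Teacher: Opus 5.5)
\textbf{There is a genuine gap: your proposal never reaches a working construction, and the $n=2$ case fails.}

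Take your final instance, $u_1=(1/2,\epsilon,1/2-\epsilon)$ and $u_2=(1/2,1/2,0)$. After round~1, agent~2 envies agent~1. After round~2 (agent~2 receives $g_3$), agent~2 still has utility $0<1/2=u_2(\{g_1\})$. So agent~1 is still envied and cannot receive $g_2$ in round~3, contrary to your deferred claim that ``the last good can go to agent~1''. Agent~2 is the only unenvied agent, so she gets $g_2$. The final allocation $(\{g_1\},\{g_2,g_3\})$ is envy-free with egalitarian welfare $1/2$.

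This is not a matter of tuning numbers. For $n=2$ the first-round recipient keeps utility at least $u_1(g)>0$, so agent~2 is the one who would have to end at $0$. If agent~2 values the first-round good $g$ positively, she remains the unique unenvied agent until a swap occurs. Each possible continuation fails:
\begin{itemize}
\item a swap hands her the bundle containing $g$, giving her positive utility;
\item receiving any good she values positively gives her positive utility;
\item if she receives every remaining good at value $0$, then $u_2(g)=1$. By maximality of the first-round pair, $u_1(g)=1$, so the optimal egalitarian welfare is also $0$ and the price is $0/0=1$.
\end{itemize}
So making the second agent envy the first-round recipient can never yield price $\infty$ when $n=2$.

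Your $n\ge 3$ construction also violates the heuristic as written. Agents $3,\dots,n$ value their private goods at $1>1-\epsilon$, so $(1,g_1)$ is not a utility-maximizing pair in round~1. This part is repairable, for example by first giving $g_3$ to agent~3, who stays unenvied and can later absorb $g_2$, but it must be fixed.

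The missing idea is the opposite of what you tried: make the first-round good worthless to everyone else. Its recipient then stays unenvied and can immediately take a second good. With $m=n$ and $u_i(g_j)=1$ iff $i=j$, give $g_1$ and then $g_2$ to agent~1. ECE only adds goods to bundles and permutes bundles, so $g_1$ and $g_2$ stay together. Only $n-2$ goods remain for the other $n-1$ bundles, so some agent ends with an empty bundle and egalitarian welfare is $0$. The identity allocation has egalitarian welfare $1$. This pigeonhole argument does not depend on any later choices and handles every $n\ge 2$ at once, with no case split.
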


\begin{proof}
Consider an instance with $n \geq 2$ agents and $m = n$ goods where $u_i(g_j) = 1$ if $j = i$ and $u_i(g_j) = 0$ otherwise.
The maximum egalitarian welfare is $1$, which is obtained when each agent $i \in N$ receives the good $g_i$.
On the other hand, consider an allocation obtained via the ECE algorithm where in the first round, $g_1$ is allocated to agent $1$, and in the second round, $g_2$ is also allocated to agent $1$.
This allocation can be obtained with the heuristic in \cref{thm:agent_good_with_max_util_first_round_only}, since $u_i(g_j)$ is indeed maximized when $i = j = 1$.
Furthermore, agent~$1$ remains unenvied at the end of the first round, and therefore can receive good $g_2$ in the second round.
Since the ECE algorithm can only add new goods to existing bundles and change the ownership of existing bundles, $g_1$ and $g_2$ will stay in the same bundle for the remainder of the algorithm.
This leaves only $n-2$~goods to be distributed among the remaining ${n-1}$~bundles.
Hence, at least one agent must end up with an empty bundle.
It follows that the egalitarian welfare of this allocation is $0$, and the strong egalitarian price of this instance is $1/0 = \infty$.
\end{proof}

To establish \cref{thm:agent_good_with_max_util_first_round_only}(a) and (b), we will use the next two lemmas, whose proofs are quite involved.

\begin{lemma}
\label{lem:first_round_only_strong_util_price}
Consider the ECE algorithm using the heuristic where in the first round, a good $g \in M$ is allocated to an agent $i \in N$ such that $u_i(g)$ is maximized.
Its strong utilitarian price is
\[
    n + \max_{x \in [1/n, 1]} \left(\frac{1-x}{x + \sum_{i=2}^{\lceil 1/x \rceil} \frac{1-(i-1)x}{n}}\right).
\]
\end{lemma}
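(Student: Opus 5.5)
Throughout, let $x$ be the utility $u_{i_1}(g_1)$ of the pair chosen in the first round; by the heuristic, $x=\max_{i,g}u_i(g)$. Since $u_i(M)=1$, every agent values every good at most $x$. I will show $x\ge 1/n$ (this needs an argument; see below). For fixed $x$, the aim is to prove
\[
\frac{\MUW}{\UW}\le n+\frac{1-x}{L(x)},\qquad L(x)=x+\sum_{i=2}^{\lceil 1/x\rceil}\frac{1-(i-1)x}{n}.
\]
Maximizing over $x\in[1/n,1]$ then gives the upper bound. For the matching lower bound I will build, for each $x$, a family of instances whose ratio approaches this value.

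\emph{Upper bound.} I use the decomposition $\MUW\le\sum_{i}\MUW(A_i)$ from the proof of \cref{thm:good_with_max_util}. I expect the target to be the form $\MUW\le n\cdot\UW+(1-x)$, which gives $\MUW/\UW\le n+(1-x)/\UW$. It would then remain to prove $\UW\ge L(x)$.

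For the bound on $\MUW$, I plan the following steps.
\begin{itemize}
\item Bound the final unenvied agent's bundle by $\UW$, as before.
\item Bound every other bundle $A_i$ using EF1 with respect to its last-added good $g$. The bundle $A_i\setminus\{g\}$ was unenvied when $g$ was added, so $u_j(A_i)\le u_j(A_j)+u_j(g)\le u_j(A_j)+x$.
\item Use the good $g_1$ to recover the $-x$ term: the agent holding it has utility at least $x$. The obstacle is that this per-bundle estimate gives $nx$ rather than $1-x$ as the slack. So I would instead charge goods globally: a bundle's $\MUW$ contribution is at most its total value to the optimal owners, and the goods outside $g_1$ sum to at most $1-x$ for any single agent. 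Balancing these two charges is the part I am least sure of.
\end{itemize}

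For the bound $\UW\ge L(x)$, I order the agents topologically in the final acyclic envy graph, so that the $i$-th agent envies only agents placed earlier. The argument has four steps.
\begin{itemize}
\item The first agent envies nobody. She holds, or weakly prefers to, the bundle containing $g_1$.
\item The $i$-th agent envies at most $i-1$ bundles. By EF1, each envied bundle exceeds her own by at most one good, worth at most $x$.
\item So she holds at least her share of the remainder, $\bigl(1-(i-1)x\bigr)/n$. I would derive this by summing $u_i(A_i)\ge u_i(A_j)-x$ over envied $j$, and $u_i(A_i)\ge u_i(A_j)$ over non-envied $j$, then using $u_i(M)=1$.
\item Truncating at $i=\lceil 1/x\rceil$, where this bound becomes nonpositive, gives $L(x)$. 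The leading term $x$ comes from the agent holding $g_1$'s bundle, since utilities never decrease.
\end{itemize}
The case $x<1/n$ should be impossible or harmless. If every good is worth less than $1/n$ to everyone, then $\MUW\le mx$ and the same EF1 accounting should dominate. I expect this case to reduce to $x=1/n$.

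\emph{Lower bound.} For a given $x$, I would generalize the instance in \cref{prop:strong_price}. Agent $1$ values $g_1$ at $x$ and her remaining goods at values making her chain of envy tight. Agents $2,\dots,\lceil 1/x\rceil$ each hold a bundle worth exactly $(1-(i-1)x)/n$ plus $\epsilon$-slack. The remaining agents end with near-zero utility, yet each has a private good worth about $1$ in the optimum. The goal is to realize the first agent's value $1-x$ on other goods in the optimum while the ECE run gives her only $x$.

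I expect the main obstacle to be making both bounds tight simultaneously. In particular, the upper bound's slack term must be exactly $1-x$, and not something like $nx$, so I would settle the exact charging argument first. After that, I would fit the construction to whichever inequalities are tight.
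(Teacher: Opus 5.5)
Your lower bound on $\UW$ follows the paper's route: topologically order the final envy graph, get $(1-(i-1)x)/n$ for the $i$-th agent, and truncate. However, the two other load-bearing parts of the lemma are missing.

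\textbf{The upper bound on $\MUW$ is not proved.} The bound $\MUW\le n\cdot\UW+(\text{slack})$ is the heart of the upper bound, and your sketch does not reach it. Writing $\MUW(A_i)\le\sum_j u_j(A_i)\le\sum_j\bigl(u_j(A_j)+x\bigr)$ charges each bundle's last-added good to all $n$ agents, so the slack is $n(n-1)x$. The ``global charging'' you propose instead is left unspecified, and it is exactly the step the lemma rests on.

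The missing idea is as follows. Collect the last-added goods $g'_i$ of the envied bundles into a set $M'$. Then $|M'|\le n-1$, because some agent is unenvied at the end. Removing $M'$ makes the final allocation envy-free, so $u_i(M\setminus M')\le n\,u_i(A_i)$ for every $i$. Now split $\MUW\le\sum_i u_i(A^*_i\cap M')+\sum_i u_i(M\setminus M')$.
\begin{itemize}
\item The first sum is at most $(n-1)x$, since each good of $M'$ has a single optimal owner.
\item In the second sum, use $n\,u_i(A_i)$ for every agent except the one with the largest final utility, and use the trivial bound $1$ for her.
\end{itemize}
Her utility is at least $\max\{1/n,x\}$: the round-1 recipient's utility never decreases, and the proof of \cref{lem:no_envy_cycle} gives the $1/n$. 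This yields $\MUW\le n\cdot\UW+\min\{(n-1)x,\,1-x\}$.

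You could repair your per-bundle decomposition by splitting off $g'_i$ before expanding: $\MUW(A_i)\le\sum_j u_j(A_i\setminus\{g'_i\})+x\le\UW+x$. That only gives the $(n-1)x$ slack, though. The $1-x$ slack needs the step that replaces $n\,u_1(A_1)$ by $1$.

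\textbf{The regime $x<1/n$ is not handled.} Your claim that $x\ge 1/n$ is false; take identical valuations $1/m$ with $m>n$. In this regime your $L(x)$ also sums past index $n$. The slack $1-x$ alone does not suffice here. With $\UW\ge 1-(n-1)x/2$ it gives $n+(1-x)/(1-(n-1)x/2)$, which for $n=2$ tends to $3>8/3$ as $x\to 0$. You need the $(n-1)x$ branch, which gives $n+(n-1)x/(1-(n-1)x/2)\le n+(2n-2)/(n+1)$. That is the value of the maximized expression at $x=1/n$.

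\textbf{There is no lower-bound construction.} Your sketch also contradicts the heuristic's constraint. Every good is worth at most $x$ to everyone, so no agent can have a private good worth about $1$. In a tight instance, agents $t+1,\dots,n$ (with $t=\lceil 1/x\rceil$) get only about $x$ each in the optimum, and you need $\MUW=n\cdot\UW+1-x$ exactly.

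The paper does this with $m=tn+n-1$ goods.
\begin{itemize}
\item For each $i\le t$ there is a block of $n$ goods valued only by agent $i$. Agent $1$ values $t-1$ of hers at $x$, one at $1-(t-1)x$, and the rest at $0$. Each agent $i\ge 2$ values each of hers at $(1-(i-1)x)/n$.
\item Each block is dealt one good per agent, so no envy arises.
\item Then come $n-1$ extra goods worth $x$ (or $1-(t-1)x$) to several agents, with $u_i(g_j)=0$ for $j\ge tn+i$. Good $g_j$ goes to agent $j-tn$, so envy only points toward lower indices and the envy graph stays acyclic.
\end{itemize}

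\textbf{A smaller point in your $\UW$ bound.} The agent guaranteed utility at least $x$ is the round-1 recipient. She is not necessarily first in the topological order, nor the final holder of the bundle containing $g_1$. You must combine her bound with the positional bounds, for example by sorting agents by final utility as the paper does.
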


\begin{proof}
We begin by proving the upper bound.
Let an instance with $n \geq 2$ agents and $m \geq 1$ goods be given.
Let $v = \max_{i \in N,\, g \in M} u_i(g)$ be the maximum utility of any agent for any good, and $\mathcal{A} = (A_1, \dots, A_n)$ be an allocation of this instance obtained by the ECE algorithm using the heuristic where in the first round, some good of utility $v$ is allocated to an agent who values it at $v$.
Without loss of generality, assume that the agents $1, \dots, n$ are sorted in descending order of utility, that is, $u_1(A_1) \geq \dots \geq u_n(A_n)$.

We first find an upper bound on the maximum utilitarian welfare of this instance in terms of $n$, $v$, and the utilitarian welfare of the allocation~$\mathcal{A}$.
Observe that for each agent~$i \in N$ who is envied by some other agent, there exists a good~$g'_i \in A_i$, depending only on~$i$, such that removing this good from agent~$i$'s bundle resolves all envy towards her,\footnote{In particular, we can let $g'_i$ be the last good to be added to the bundle $A_i$, which cannot be empty since agent $i$ is envied by some other agent.
By definition of the ECE algorithm, the owner of the bundle~$A_i \setminus \{g'_i\}$ must be unenvied at the beginning of the round when~$g'_i$ is added to the bundle.
Since all agents' utilities do not decrease as the algorithm progresses, in the final allocation, every agent must value her own bundle at least as much as $A_i \setminus \{g'_i\}$.} that is, $u_j(A_j) \geq u_j(A_i \setminus \{g'_i\})$ for all agent $j \in N$.
Let~$M'$ be the set of all such goods~$g'_i$ belonging to any agent $i$ who is envied by another agent---if there are multiple such goods~$g'_i$ for some agent $i$, we only include one of them in $M'$.
Note that $|M'| \leq n-1$, since at least one agent is unenvied in $\mathcal{A}$ due to the final envy cycle elimination step.
Furthermore, for any agents $i, j \in N$, we have $u_i(A_i) \geq u_i(A_j \setminus M')$.
Hence, for any agent $i \in N$, we have
\begin{align*}
    u_i(M \setminus M') &= u_i\left(\left(\bigcup_{j \in N} A_j\right) \setminus M'\right) \\ 
    &= u_i\left(\bigcup_{j \in N} \left(A_j \setminus M'\right)\right) \\
    &\leq \sum_{j \in N} u_i(A_j \setminus M') \\
    &\leq n\cdot u_i(A_i).
\end{align*}

Consider the agent with the highest utility in allocation~$\mathcal{A}$, namely, agent~$1$.
Note that some agent is allocated a good of utility $v$ in the first round, and her utility never decreases throughout the execution of the ECE algorithm.
Furthermore, by the proof of \cref{lem:no_envy_cycle}, at least one agent receives a utility of at least $1/n$.
Since agent $1$ has the highest utility among all agents, we have $u_1(A_1) \geq \max\{1/n, v\}$.

We can now find an upper bound for the maximum utilitarian welfare.
Let $\mathcal{A}^* = (A^*_1, \dots, A^*_n)$ be an allocation maximizing the utilitarian welfare.
Then, the maximum utilitarian welfare, denoted by $\MUW$, is
\begin{align*}
    &\MUW = \sum_{i \in N} u_i(A^*_i) \\ 
    &\quad\leq \sum_{i \in N} u_i(A^*_i \cap M') + \sum_{i \in N} u_i(A^*_i \setminus M') \\
    &\quad\leq |M'|v + \sum_{i \in N} u_i(M \setminus M') \tag{since $u_i(A^*_i \cap M') \leq |A^*_i \cap M'|v$ and $A^*_i \setminus M' \subseteq M \setminus M'$}\\
    &\quad\leq (n-1)v + u_1(M \setminus M') + \sum_{i \in N \setminus \{1\}} u_i(M \setminus M')  \tag{since $|M'| \leq n-1$} \\
    &\quad\leq (n-1)v + 1 + \sum_{i \in N \setminus \{1\}} n\cdot u_i(A_i) \tag{since $u_i(M \setminus M') \leq \min\{1, n\cdot u_i(A_i)\}$} \\
    &\quad= (n-1)v + 1 - n\cdot u_1(A_1) + \sum_{i \in N} n\cdot u_i(A_i) \\
    &\quad\leq (n-1)v + 1 - \max\{1, nv\} + \sum_{i \in N} n\cdot u_i(A_i) \tag{since $u_1(A_1) \geq \max\{1/n, v\}$} \\
    &\quad= n\cdot\sum_{i \in N} u_i(A_i) + \min\{(n-1)v, 1-v\} \\
    &\quad= n\cdot \UW + \min\{(n-1)v, 1-v\},
\end{align*}
where $\UW = \sum_{i \in N} u_i(A_i)$ is the utilitarian welfare of the allocation $\mathcal{A}$.
Since this holds for any allocation that can be output by the specified heuristic, the strong utilitarian price of this heuristic for the given instance is at most
\begin{equation}
\label{eq:first_round_only_muw_ub}
    \frac{\MUW}{\UW} \leq n + \frac{\min\{(n-1)v, 1-v\}}{\UW}.
\end{equation}

Next, we derive an implicit lower bound on the utilitarian welfare $\UW$ of the allocation $\mathcal{A}$, by finding a lower bound on the utility of each agent, $u_i(A_i)$ for $i \in N$.
To this end, we use the fact that the envy graph of this allocation is acyclic, as ensured by the last envy cycle elimination step of the ECE algorithm.
Hence, the envy graph admits a topological ordering $\pi(n), \pi(n-1), \dots, \pi(1)$, where $\pi$ is a permutation of $N$ and for any $i, j \in N$ with $i < j$, agent $\pi(i)$ does not envy agent $\pi(j)$.
Then, for any $i, j \in N$ with $i \leq j$, we have $u_{\pi(i)}(A_{\pi(i)}) \geq u_{\pi(i)}(A_{\pi(j)})$.
Furthermore, for the remaining pairs $i, j \in N$ where $i > j$, agent $\pi(i)$ is EF1 towards agent $\pi(j)$, so $u_{\pi(i)}(A_{\pi(i)}) \geq u_{\pi(i)}(A_{\pi(j)} \setminus \{g\})$ for some good $g \in M$.
On the other hand, it holds that $u_{\pi(i)}(g) \leq v$ by definition of $v$.
Therefore, we have $u_{\pi(i)}(A_{\pi(i)}) \geq u_{\pi(i)}(A_{\pi(j)} \setminus \{g\}) \geq u_{\pi(i)}(A_{\pi(j)}) - v$.
It follows that for any $i \in N$,
\begin{align*}
    1 &= u_{\pi(i)}\left(\bigcup_{j=1}^n A_{\pi(j)}\right) \tag{since the instance is normalized} \\
    &\leq \sum_{j=1}^n u_{\pi(i)}(A_{\pi(j)}) \\
    &= \sum_{j=1}^{i-1} u_{\pi(i)}(A_{\pi(j)}) + \sum_{j=i}^n u_{\pi(i)}(A_{\pi(j)}) \\
    &\leq \sum_{j=1}^{i-1} \left(u_{\pi(i)}(A_{\pi(i)}) + v\right) + \sum_{j=i}^n u_{\pi(i)}(A_{\pi(i)}) \\
    &= n\cdot u_{\pi(i)}(A_{\pi(i)}) + (i-1)v.
\end{align*}
Equivalently, $u_{\pi(i)}(A_{\pi(i)}) \geq (1-(i-1)v)/n$ for each $i \in N$.
Note that for each $i, j \in N$ with $i \geq j$, it holds that $(1-(i-1)v)/n \leq (1-(j-1)v)/n$.
Furthermore, for each $i \in N$, there exist $i$ indices $j$ such that $i \geq j$ and thus $u_{\pi(j)}(A_{\pi(j)}) \geq (1-(i-1)v)/n$.
Hence, for each $i \in N$, there are at least $i$ agents with utility at least $(1-(i-1)v)/n$.
Then, by the assumption that $u_1(A_1) \geq \dots \geq u_n(A_n)$, it follows that for each $i \in N$,
\begin{equation}
\label{eq:first_round_only_agent_utility_lb}
    u_i(A_i) \geq \frac{1-(i-1)v}{n}.
\end{equation}

We can now derive the required upper bound on the strong utilitarian price for the given instance.
We first consider the case where~$v \leq 1/n$.
From \eqref{eq:first_round_only_agent_utility_lb}, we have
\begin{align*}
    \UW = \sum_{i=1}^n u_i(A_i)
    \geq \sum_{i=1}^{n} \frac{1-(i-1)v}{n}
    &= 1 - \frac{(n-1)v}{2}.
\end{align*}
Then, from \eqref{eq:first_round_only_muw_ub}, the strong utilitarian price for the given instance is
\begin{align*}
    \frac{\MUW}{\UW} &\leq n + \frac{\min\{(n-1)v, 1-v\}}{\UW} \\
    &\leq n + \frac{(n-1)v}{1 - (n-1)v/2} \tag{since $v \leq 1/n$} \\
    &= n + \frac{2(n-1)}{2/v-(n-1)} \\
    &\leq n + \frac{2n-2}{2n-n+1} \tag{since $v \leq 1/n$} \\
    &= n + \frac{2n-2}{n+1}.
\end{align*}
On the other hand, note that when $x = 1/n$,
\begin{align*}
    \frac{1-x}{x + \sum_{i=2}^{\lceil 1/x \rceil} \frac{1-(i-1)x}{n}} &= \frac{1-\frac{1}{n}}{\frac{1}{n} + \sum_{i=2}^{n} \frac{1-(i-1)/n}{n}} \\
    &= \frac{n-1}{1 + (n-1) - (n-1)/2} \\
    &= \frac{2n-2}{n+1}.
\end{align*}
Therefore, we have
\begin{align*}
    \frac{\MUW}{\UW} \leq n + \max_{x \in [1/n, 1]} \left(\frac{1-x}{x + \sum_{i=2}^{\lceil 1/x \rceil} \frac{1-(i-1)x}{n}}\right),
\end{align*}
which gives the required upper bound on the strong utilitarian price for the specified instance when $v \leq 1/n$.

We turn our attention to the remaining case where $v > 1/n$.
In this case, $1/v < n$, so $\lceil 1/v \rceil \leq n$.
Then, we have
\begin{align*}
    \UW = \sum_{i=1}^n u_i(A_i)
    \geq \sum_{i=1}^{\lceil 1/v \rceil} u_i(A_i) 
    &\geq v + \sum_{i=2}^{\lceil 1/v \rceil} \frac{1-(i-1)v}{n}.
\end{align*}
Here, the first inequality holds since $\lceil 1/v \rceil \leq n$.
Moreover, the last inequality follows from \eqref{eq:first_round_only_agent_utility_lb} and ${u_1(A_1) \geq v}$, which holds by the assumptions that agent $1$ has the maximum utility and that some agent is allocated a good of utility $v$ in the first round.
Subsequently, \eqref{eq:first_round_only_muw_ub} implies that the strong utilitarian price of the specified heuristic for the given instance is at most
\begin{align*}
    \frac{\MUW}{\UW} \leq n + \frac{\min\{(n-1)v, 1-v\}}{\UW} &= n + \frac{1-v}{v + \sum_{i=2}^{\lceil 1/v \rceil} \frac{1-(i-1)v}{n}} \tag{since $v > 1/n$} \\
    &\leq n + \max_{x \in [1/n, 1]} \left(\frac{1-x}{x + \sum_{i=2}^{\lceil 1/x \rceil} \frac{1-(i-1)x}{n}}\right).
\end{align*}
For the last inequality, note that $v > 1/n$ by the case assumption and $v = \max_{i \in N,\, g \in M} u_i(g) \leq 1$ since the instance is normalized. This completes our proof for the upper bound on the strong utilitarian price for the specified heuristic. 

Next, we prove the lower bound of the strong utilitarian price.
To this end, we show that for any $n \geq 2$ and $x \in [1/n, 1]$, there exists an instance with $n$ agents such that the strong utilitarian price for the given instance is exactly
\[
    n + \frac{1-x}{x + \sum_{i=2}^{\lceil 1/x \rceil} \frac{1-(i-1)x}{n}}.
\]
This implies that the upper bound proven earlier is tight.

Let $n \geq 2$, $x \in [1/n, 1]$, $t = \lceil 1/x \rceil$, and $\delta = 1 - (t - 1)x$.
Note that $t \leq n$ and $0 \leq \delta \leq x$.
We construct an instance with $n$ agents, $m = tn + n - 1$ goods, and the following utilities:
Each of the goods $g_1, \dots, g_{tn}$ is valued positively by at most one agent.
\begin{itemize}
    \item For each $i \in \{1, \dots, t\}$, every agent $j \neq i$ values the goods $g_{(i-1)n+1}, \dots, g_{in}$ at $0$.
    \item For each $i \in \{2, \dots, t\}$, agent $i$ values each of the goods $g_{(i-1)n+1}, \dots, g_{in}$ at $(1-(i-1)x)/n$.
    \item Agent $1$ values each of the goods $g_{1}, \dots, g_{t-1}$ at $x$ and good $g_t$ at $\delta$.
\end{itemize}
For the remaining $n-1$ goods, the utilities are as follows:
\begin{itemize}
    \item Agent $1$ values goods $g_{tn+1}, \dots, g_m$ at $0$.
    \item For each $i \in \{2, \dots, t\}$, agent $i$ values each of the goods $g_{tn+1}, \dots, g_{tn+i-1}$ at $x$ and the remaining goods at $0$.
    \item For each $i \in \{t+1, \dots, n\}$, agent $i$ values good~$g_{tn+1}$ at $\delta$, values each of the goods $g_{tn+i-(t-1)}, \dots,\linebreak[1] g_{tn+i-1}$ at $x$, and values the remaining goods at $0$.
\end{itemize}
Observe that $\max_{i \in N,\, g \in M} u_i(g) = x$.

We consider the following allocation obtained with the ECE algorithm using the heuristic where in the first round, a good $g \in M$ is allocated to an agent $i \in N$ such that $u_i(g)$ is maximized.
In the first round, good~$g_1$ is allocated to agent~$1$ who values it at $x$.
Then, in round~$r \in \{2, \dots, n\}$, good~$g_r$ is allocated to agent~$r$.
This creates no envy, since agent $1$ values each of the goods $g_1, \dots, g_n$ no more than $x$, and each agent $i \in N \setminus \{1\}$ values goods $g_1, \dots, g_n$ at $0$.

We turn our attention to rounds ${(i-1)n} + 1$ to $in$ for each $i \in \{2, \dots, t\}$.
In round~$r = (i-1)n+1$, good $g_r$ is allocated to agent $i$ who values it at $(1-(i-1)x)/n$.
Then, in rounds $(i-1)n+2, \dots, in$, goods $g_{(i-1)n+2}, \dots, g_{in}$ are allocated to agents in $N \setminus \{i\}$ in such a way that each of these $n-1$ agents receives exactly one of these $n-1$ goods.
Again, this creates no envy, since agent $i$ is the only agent who values any of the goods $g_{(i-1)n+1}, \dots, g_{in}$ positively, and she is the first agent to receive any of these goods.
Hence, there is no envy between any two agents at the end of round $t n$.

Next, in each round $r \in \{t n + 1,\linebreak[1] \dots, t n + n - 1\}$, good $g_r$ is allocated to agent $r - t n$.
While this introduces some envy between agents, we show that there is no envy cycle at any point.
To this end, observe that for each agent $i \in N$, it holds that $u_i(g_j) = 0$ whenever $j \geq t n + i$.
Since there is no envy at the end of round $t n$, and afterwards each agent $k \in N \setminus \{n\}$ only receives the good $g_{t n + k}$, agent $k$ is not envied by any agent $i \leq k$.
It follows that there is no envy cycle in the last $n-1$ rounds of the ECE algorithm as well.

We now compute the utilitarian welfare of this allocation.
Agent $1$ receives a utility of $x$ and each agent $i \in \{2, \dots, t\}$ receives a utility of $(1-(i-1)x)/n$.
Any other agent receives a utility of $0$.
Hence, the utilitarian welfare is exactly
\begin{equation}
\label{eq:first_round_only_lb_uw}
    \UW = x + \sum_{i=2}^{t} \frac{1-(i-1)x}{n} = x + \sum_{i=2}^{\lceil 1/x \rceil} \frac{1-(i-1)x}{n}.
\end{equation}

On the other hand, the maximum utilitarian welfare is obtained when each agent $i \in \{1, \dots, t\}$ receives the good $g_j$ for each $j \in \{(i-1)n + 1, \dots, in\}$, and each good $g_j$ with $j \in \{t n + 1, \dots, t n + n - 1\}$ is allocated to agent $j - t n + 1$.
Then, agent $1$ receives a utility of $1$, each agent $i \in \{2, \dots, t\}$ receives a utility of $1 - (i-2)x$, and each agent $i \in \{t + 1, \dots, n\}$ receives a utility of $x$.
Hence, the maximum utilitarian welfare is
\begin{align*}
    \MUW &= 1 + \sum_{i=2}^{t} (1-(i-2)x) + (n-t)x \\
    &= 1 + \sum_{i=2}^{t} (1-(i-1)x) + (n-1)x \\
    &= 1 + n \cdot \sum_{i=2}^{t} \frac{1-(i-1)x}{n} + (n-1)x \\
    &= 1 + n \cdot \sum_{i=2}^{\lceil 1/x \rceil} \frac{1-(i-1)x}{n} + (n-1)x \\
    &= 1 + n\cdot\UW - nx + (n-1)x \tag{from \eqref{eq:first_round_only_lb_uw}} \\
    &= 1 + n\cdot\UW - x.
\end{align*}

Therefore, the strong utilitarian price of this instance is
\begin{align*}
    \frac{\MUW}{\UW} &= n + \frac{1-x}{x + \sum_{i=2}^{\lceil 1/x \rceil} \frac{1-(i-1)x}{n}},
\end{align*}
completing the proof.
\end{proof}

\begin{lemma}
\label{lem:first_round_only_strong_util_price_argmax}
Let $n \geq 2$ be an integer. For each $x \in [1/n, 1]$, define
\begin{align*}
    f(x) = \frac{1-x}{x + \sum_{i=2}^{\lceil 1/x \rceil} \frac{1-(i-1)x}{n}}.
\end{align*}
If $n = 2$, then $1/2 \in \argmax_x f$.
Otherwise, $1/\lceil \sqrt{2n} + 1/2 \rceil \in \argmax_x f$.
\end{lemma}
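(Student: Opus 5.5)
The plan is to exploit the piecewise structure of $f$. The ceiling $t = \lceil 1/x \rceil$ is constant on each interval $x \in [1/t, 1/(t-1))$ for $t \in \{2, \dots, n\}$ (together with the single point $x = 1$, where $t = 1$ and $f(1) = 0$). On such an interval the sum is explicit:
\[
\sum_{i=2}^{t} \frac{1-(i-1)x}{n} = \frac{t-1}{n} - \frac{t(t-1)}{2n}\,x,
\]
so that $f(x) = (1-x)/(ax+b)$ with $a = 1 - t(t-1)/(2n)$ and $b = (t-1)/n$. This is a linear-fractional function of $x$, and its derivative is $-(a+b)/(ax+b)^2$, which has constant sign on the interval. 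Hence $f$ is monotone on each piece.

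Next I will check that $f$ is continuous at the breakpoints $x = 1/(t-1)$. The extra term $i = t$ that appears when passing from $t-1$ to $t$ equals $(1-(t-1)x)/n$, which vanishes at $x = 1/(t-1)$. A continuous function that is monotone on each piece attains its maximum over $[1/n, 1]$ at a breakpoint, so the maximum is attained at some $x = 1/t$ with $t \in \{1, \dots, n\}$. This step is routine but must be stated carefully, since the right endpoints of the pieces are open.

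It then remains to maximize over integers $t \in \{1, \dots, n\}$ the quantity
\[
g(t) := f(1/t) = \frac{2n(t-1)}{2n + t(t-1)}.
\]
I will compare consecutive values by cross-multiplying $g(t+1)$ against $g(t)$. The $2nt$ and $t^3$ terms cancel, and what remains is
\[
g(t+1) \ge g(t) \iff t(t-1) \le 2n,
\]
with equality exactly when $t(t-1) = 2n$. Since $t(t-1)$ is increasing in $t$, the sequence $g$ is unimodal. Its maximum is therefore attained at the smallest integer $t$ with $t(t-1) \ge 2n$; in the equality case, $t$ and $t+1$ tie, so $t$ still lies in the argmax.

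Next I identify this $t$ with $\lceil \sqrt{2n} + 1/2 \rceil$. For an integer $t$, the condition $t - 1/2 \ge \sqrt{2n}$ is equivalent to $t(t-1) + 1/4 \ge 2n$. Because $2n - t(t-1)$ is an integer, this is in turn equivalent to $t(t-1) \ge 2n$.

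Finally, I handle the constraint $t \le n$, which I expect to be the only delicate point.
\begin{itemize}
\item For $n \ge 3$, one checks that $\sqrt{2n} + 1/2 \le n$: at $n = 3$ this reads $2.95 \le 3$, and the gap only grows with $n$. So $t^* = \lceil \sqrt{2n} + 1/2 \rceil$ is admissible and $1/t^*$ lies in the argmax.
\item For $n = 2$, the unconstrained maximizer would be $3 > n$. Since $g$ is nondecreasing up to $n$, the maximum over $\{1, 2\}$ is at $t = 2$, giving $x = 1/2$.
\end{itemize}
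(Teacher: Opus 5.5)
Your proof is correct, and it takes a somewhat different route from the paper's. The common ground is the structure of $f$: on each piece $[1/t,1/(t-1))$ it is linear-fractional, and it is continuous at the breakpoints.

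The paper then computes the sign of the $x$-derivative of $\varphi(t,x)$, which is the sign of $t^2-3t-2n+2$. Using one-sided limits of $f'$ at the breakpoints and the auxiliary roots $\alpha$ and $\beta=\alpha-1$, it shows that $f$ is non-decreasing to the left of $1/\lceil\beta\rceil$ and strictly decreasing to the right of it.

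You only need each piece to be monotone in some direction. That places the maximum at a breakpoint $1/t$, and you then compare the discrete values $g(t)=2n(t-1)/(2n+t(t-1))$. Your cross-multiplication, giving $g(t+1)\ge g(t) \iff t(t-1)\le 2n$, is correct. It yields the threshold $t(t-1)\ge 2n$ directly, and your integrality argument correctly identifies the smallest such $t$ with $\lceil\sqrt{2n}+1/2\rceil$.

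The two routes carry the same information: on the piece $[1/t,1/(t-1)]$ the sign of $f'$ is the sign of $g(t-1)-g(t)$. Your version is shorter, avoids the one-sided derivative limits, and checks $\lceil\sqrt{2n}+1/2\rceil\le n$ for $n\ge 3$ more simply.

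What the paper's version buys is an explicit global shape of $f$: strictly decreasing for $x>1/\lceil\beta\rceil$. It reuses this in the proof of \cref{thm:agent_good_with_max_util_first_round_only}(b) to bound $f(1/(s+1))$ from below. Your argument gives this too, but you would need to state it: a linear-fractional function with distinct endpoint values is strictly monotone, and $g(t-1)<g(t)$ for every $t\le t^*$.

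One small detail to add: your monotonicity and continuity claims need $ax+b>0$ on each closed piece $[1/t,1/(t-1)]$. This holds because there $x>0$ and every summand $1-(i-1)x$ with $i\le t$ is nonnegative.
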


\begin{proof}
For each integer $t \in \{1, \dots, n\}$ and real number $x \in [1/n, 1]$, we define
\[
    \varphi(t, x) = \frac{1-x}{x + \sum_{i=2}^t \frac{1-(i-1)x}{n}}.
\]
Then, $f(x) = \varphi(\lceil 1/x \rceil, x)$.

Since $f$ itself is not easily differentiable, we shall determine the slope of $f$ using the partial derivative of $\varphi$ with respect to $x$:
\begin{align*}
    \pdv{\varphi(t, x)}{x} &= \pdv{x} \left(\frac{1-x}{x + \sum_{i=2}^t \frac{1-(i-1)x}{n}}\right) \\
    &= \pdv{x} \left(\frac{1-x}{\frac{t-1}{n} + \left(1 - \frac{(t-1)t}{2n}\right)x}\right) \tag{by arithmetic sum} \\
    &= \pdv{x} \left(\frac{2n(1-x)}{2t - 2 + \left(2n + t - t^2\right)x}\right) \\
    &= 2n \left(-\frac{1}{2t - 2 + \left(2n + t - t^2\right)x} 
    - \frac{(2n + t - t^2)(1-x)}{\big(2t - 2 + (2n + t - t^2)x\big)^2} \right)
    \\
    &= \frac{2n(t^2-3t-2n+2)}{\big(2t - 2 + (2n + t - t^2)x\big)^2} \\
    &= \frac{2n(t^2-3t-2n+2)}{\big(2nx + (t-1)(2-tx)\big)^2}.
\end{align*}
Note that $2n > 0$.
Furthermore, for any $t \in \{2, \dots, n\}$ and $x \in [1/t, 1/(t-1)]$, we have $2nx + (t-1)(2-tx) > 2-tx \geq 2-t/(t-1) \geq 0$, which gives $(2nx + (t-1)(2-tx))^2 > 0$.
Solving the quadratic equation $t^2-3t-2n+2 = 0$ gives exactly one positive root $t = (3+\sqrt{8n+1})/2$ and one negative root $t = (3-\sqrt{8n+1})/2$.
Let $\alpha = (3+\sqrt{8n+1})/2$ be the positive root.
We can conclude that for any $t \in \{2, \dots, n\}$ and $x \in [1/t, 1/(t-1)]$:
\begin{itemize}
    \item $\pdv*{\varphi(t, x)}{x} > 0$ if $t > \alpha$,
    \item $\pdv*{\varphi(t, x)}{x} = 0$ if $t = \alpha$, and 
    \item $\pdv*{\varphi(t, x)}{x} < 0$ if $t < \alpha$.
\end{itemize}

We explain how the slope of $\varphi$ translates to the slope of $f$.
Note that for any $t \in \{2, \dots, n\}$ and $x \in [1/t, 1/(t-1))$, we have $t = \lceil 1/x\rceil$, and therefore $f(x) = \varphi(t, x)$.
Hence, for any $t \in \{2, \dots, n\}$ and $x \in (1/t, 1/(t-1))$, we have $f'(x) = \pdv*{\varphi(t, x)}{x}$.
Furthermore, $f$ is continuous from the right at $x \in \{1/n, 1/(n-1),\linebreak[1] \dots, 1/2\}$.
For each $t \in \{1, \dots, n-1\}$, the following calculation shows that $f$ is continuous from the left at $1/t$:
\begin{align*}
    \lim_{x \to (1/t)^-} f(x) &= \lim_{x \to (1/t)^-} \varphi(t+1, x) \tag{since $\lceil 1/x \rceil = t+1$ when $1/(t+1) \leq x < 1/t$} \\
    &= \varphi\left(t+1, \frac{1}{t}\right) \\
    &= \frac{1-\frac{1}{t}}{\frac{1}{t} + \sum_{i=2}^{t+1} \frac{1-(i-1)/t}{n}} \\
    &= \frac{1-\frac{1}{t}}{\frac{1}{t} + \sum_{i=2}^{t} \frac{1-(i-1)/t}{n}} \tag{since $(1-(i-1)/t)/n = 0$ when $i = t+1$} \\
    &= \varphi\left(t, \frac{1}{t}\right) \\
    &= f\left(\frac{1}{t}\right).
\end{align*}
Combined with the earlier observations, we have shown that $f$ is continuous on $[1/n, 1]$.

Now, let $\beta = (1+\sqrt{8n+1})/2\linebreak[1] = \alpha - 1$.
We show that $f$ is non-decreasing at $x$ if $x < 1/\lceil \beta \rceil$ and is strictly decreasing at $x$ if $x > 1/\lceil \beta \rceil$.
Let $x \in (1/n, 1)$ be such that $x < 1/\lceil \beta \rceil$.
Then, $\lceil 1/x \rceil \geq 1/x > \lceil \beta \rceil$, so $\lceil 1/x \rceil \geq \lceil \beta \rceil + 1 = \lceil \alpha \rceil \geq \alpha$.
If $x \not\in \{1/(n-1), 1/{(n-2)},\linebreak[1] \dots, 1/2\}$, then clearly $f$ is non-decreasing at $x$ since $f'(x) = \pdv*{\varphi(\lceil 1/x \rceil, x)}{x} \geq 0$.
Otherwise, let $x = 1/t$ where $t \in \{2, \dots, n-1\}$.
Since $t \geq \alpha$,
\begin{align*}
    \lim_{z \to (1/t)^+} f'(z) = \lim_{z \to (1/t)^+} \pdv*{\varphi(t, z)}{z} \geq 0,
\end{align*}
and since $t+1 > \alpha$,
\begin{align*}
    \lim_{z \to (1/t)^-} f'(z) = \lim_{z \to (1/t)^-} \pdv*{\varphi(t+1, z)}{z} \geq 0.
\end{align*}
Hence, $f$ is non-decreasing at $x$ in this case.

On the other hand, let $x \in (1/n, 1)$ be such that $x > 1/\lceil \beta \rceil$.
In this case, $1/x < \lceil \beta \rceil$, and so $\lceil 1/x \rceil \leq \lceil \beta \rceil < \beta + 1 = \alpha$.
If $x \not\in \{1/(n-1), 1/(n-2),\linebreak[1] \dots, 1/2\}$, then $f$ is strictly decreasing at~$x$ since $f'(x) = \pdv*{\varphi(\lceil 1/x \rceil, x)}{x} < 0$.
Otherwise, suppose that $x = 1/t$ where $t \in \{2, \dots, n-1\}$.
Since $t < \alpha$,
\begin{align*}
    \lim_{z \to (1/t)^+} f'(z) = \lim_{z \to (1/t)^+} \pdv*{\varphi(t, z)}{z} < 0.
\end{align*}
Furthermore, since $t = 1/x < \lceil \beta \rceil$ and $t$ is an integer, we have $t + 1 \leq \lceil \beta \rceil < \alpha$.
Then,
\begin{align*}
    \lim_{z \to (1/t)^-} f'(z) = \lim_{z \to (1/t)^-} \pdv*{\varphi(t+1, z)}{z} < 0.
\end{align*}
Thus, $f$ is strictly decreasing at $x$ in this case.

We now observe that $\lceil \beta \rceil = \lceil \sqrt{2n} + 1/2 \rceil$ for all integers $n \geq 2$.
To see this, first note that $\beta > (\sqrt{8n}+1)/2 = \sqrt{2n} + 1/2$.
Hence, $\lceil \beta \rceil \geq \lceil \sqrt{2n} + 1/2 \rceil$.
To prove the inequality in the opposite direction, notice that $\sqrt{2n} + 1/2$ is never an integer.
Indeed, if $\sqrt{2n} = k - 1/2$ for some integer $k$, then $2n = k^2 - k + 1/4$, contradicting the fact that $2n$ must be an integer.
Let $t = \lceil \sqrt{2n} + 1/2 \rceil$.
Then, $t > \sqrt{2n} + 1/2 = (\sqrt{8n}+1)/2$, or equivalently, $(2t - 1)^2 > 8n$.
Since $t$ is an integer, so is $(2t - 1)^2$, and therefore $(2t - 1)^2 \geq 8n+1$.
Hence, $\lceil \sqrt{2n} + 1/2 \rceil = t \geq \beta$.

To complete the proof, observe that when $n = 2$, we have $\lceil \sqrt{2n} + 1/2 \rceil > n$, or equivalently, $1/n > 1/\lceil \sqrt{2n} + 1/2 \rceil$.
This means that $f(x)$ is strictly decreasing on all $x \in [1/n, 1]$.
That is, the maximum of $f$ is obtained at $1/n = 1/2$ when $n = 2$.
On the other hand, if $n \geq 3$, then
\begin{align*}
    \sqrt{2n} + \frac{1}{2} &\leq \sqrt{(n-1)n} + \frac{1}{2} \\
    &= \sqrt{\left(\sqrt{n^2-n} + \frac{1}{2}\right)^2} \\
    &= \sqrt{n^2 - n + \sqrt{n^2-n} + \frac{1}{4}} \\
    &= \sqrt{n^2 + \sqrt{n^2-n} - \sqrt{\left(n - \frac{1}{4}\right)^2}} \\
    &= \sqrt{n^2 + \sqrt{n^2-n} - \sqrt{n^2 - \frac{n}{2} +\frac{1}{16}}} \\
    &< \sqrt{n^2} \tag{since $n^2-n < n^2-n/2+1/16$}\\
    &= n.
\end{align*}
Hence, $1/\lceil \sqrt{2n} + 1/2 \rceil \in [1/n, 1]$.
It then follows from our earlier observations that $1/\lceil \sqrt{2n} + 1/2 \rceil \in \argmax_x f$.
\end{proof}

We are now ready to prove \cref{thm:agent_good_with_max_util_first_round_only}.

\begin{proof}[Proof of \cref{thm:agent_good_with_max_util_first_round_only}]
The strong egalitarian price has been shown in \cref{lem:first_round_only_strong_egal_price}.
For the strong utilitarian price, the closed-form expression can be obtained from \cref{lem:first_round_only_strong_util_price,lem:first_round_only_strong_util_price_argmax}.
Let 
\[
    f(x) = \frac{1-x}{x + \sum_{i=2}^{\lceil 1/x \rceil} \frac{1-(i-1)x}{n}}
\]
be defined for any $x \in [1/n, 1]$.

If $n = 2$, then the strong utilitarian price is
\begin{align*}
    2 + f\left(\frac{1}{2}\right) = 2 + \frac{1-\frac{1}{2}}{\frac{1}{2} + \sum_{i=2}^{2} \frac{1-(i-1)/2}{2}}
    = 2 + \frac{\frac{1}{2}}{\frac{1}{2} + \frac{1}{4}}
    &= \frac{8}{3},
\end{align*}
proving \cref{thm:agent_good_with_max_util_first_round_only}(a).

Suppose now that $n \geq 3$.
Let $s = \lceil \sqrt{2n} - 1/2 \rceil$ be defined as in \cref{thm:agent_good_with_max_util_first_round_only}(b).
From \cref{lem:first_round_only_strong_util_price,lem:first_round_only_strong_util_price_argmax}, the strong utilitarian price is
\begin{align*}
    n + f\left(\frac{1}{s+1}\right) &= n + \frac{1-\frac{1}{s+1}}{\frac{1}{s+1} + \sum_{i=2}^{s+1} \frac{1-(i-1)/(s+1)}{n}} \\
    &= n + \frac{ns}{n + s(s+1) - s(s+1)/2} \\
    &= n + \frac{2ns}{2n + s(s+1)},
\end{align*}
as claimed in \cref{thm:agent_good_with_max_util_first_round_only}(b).

It remains to give the simplified approximation of the strong utilitarian price when $n \geq 3$, that is, to prove that
\begin{equation}
\label{eq:first_round_only_simplified}
    n + \sqrt{\frac{n}{2}} - \frac{1}{4} < n + f\left(\frac{1}{s+1}\right) < n + \sqrt{\frac{n}{2}} - \frac{1}{5}.
\end{equation}

We start by proving the upper bound.
Define
\[
    h(t) = \frac{2nt}{2n + t(t+1)}
\]
for any $t \in \R_{>0}$.
Then, $f(1/(s+1)) = h(s)$.
We shall prove that $h(s) \leq h(\sqrt{2n})$ using the derivative of $h$:
\begin{align*}
    h'(t) = \frac{2n(2n+t+t^2)-2nt(1+2t)}{(2n+t+t^2)^2}
    &= \frac{2n(2n-t^2)}{(2n+t+t^2)^2}.
\end{align*}
Note that for any $t \in \R_{>0}$, we have both $2n > 0$ and $(2n+t+t^2)^2 > 0$, and $2n-t^2 \geq 0$ holds if and only if $t \leq \sqrt{2n}$.
Hence, the maximum of~$h$ is achieved at $t = \sqrt{2n}$, and therefore $h(s) \leq h(\sqrt{2n})$.
That is,
\begin{align*}
    f\left(\frac{1}{s+1}\right) = h(s) \leq h\left(\sqrt{2n}\right)
    &= \frac{2n\sqrt{2n}}{4n +\sqrt{2n}}.
\end{align*}
Let $y = \sqrt{2n}$. Then, we have
\begin{align*}
    \frac{2n\sqrt{2n}}{4n +\sqrt{2n}} = \frac{y^3}{2y^2 + y} 
    &= \frac{y}{2} - \frac{1}{4} + \frac{1}{4(2y + 1)} \\
    &= \sqrt{\frac{n}{2}} - \frac{1}{4} + \frac{1}{4(2\sqrt{2n} + 1)} \\
    &< \sqrt{\frac{n}{2}} - \frac{1}{4} + \frac{1}{20} \tag{since $n > 2$} \\
    &= \sqrt{\frac{n}{2}} - \frac{1}{5}.
\end{align*}
Hence, $f(1/(s+1)) < \sqrt{n/2} - 1/5$.

We next prove the lower bound of \eqref{eq:first_round_only_simplified}.
Let $v = 1/(\sqrt{2n} + 1/2)$.
Recall from the proof of \cref{lem:first_round_only_strong_util_price_argmax} that $\sqrt{2n} + 1/2$ is never an integer, that is, $\sqrt{2n} + 1/2 < \lceil \sqrt{2n} + 1/2 \rceil = s+1$.
This means that $v > 1/(s+1) \geq 1/n$, where the last inequality holds since $n\ge 3$.
Furthermore, we have shown in the proof of \cref{lem:first_round_only_strong_util_price_argmax} that $f$ is strictly decreasing at any $x \in [1/n, 1]$ with $x > 1/(s+1) = 1/\lceil\beta\rceil$.
Hence, $f(1/(s+1)) > f(v)$.
Therefore, it suffices to show that $f(v) \geq \sqrt{n/2} - 1/4$.
We have
\begin{align*}
    f(v) &= \frac{1-v}{v + \sum_{i=2}^{\lceil 1/v \rceil} \frac{1-(i-1)v}{n}} \\
    &= \frac{1-v}{v + \frac{\lceil 1/v \rceil - 1}{n} - \frac{(\lceil 1/v \rceil - 1)\lceil 1/v \rceil v}{2n}} \\
    &= \frac{2n(1-v)}{2nv + 2s - s(s+1)v} \tag{since $\lceil 1/v \rceil = s + 1$} \\
    &= \frac{2n(1-v)}{-v(s^2 + (1-2/v)s - 2n)} \\
    &= \frac{2n(1-v)}{-v[(s + (v-2)/(2v))^2 - ((v-2)/(2v))^2 - 2n]} \\
    &\geq \frac{2n(1-v)}{-v[-((v-2)/(2v))^2-2n]} \tag{since $(s + (v-2)/(2v))^2 \geq 0$} \\
    &= \frac{8n(1/v-1)}{(1-2/v)^2+8n}.
\end{align*}
Let $z = \sqrt{8n} = 2\sqrt{2n}$.
Then, $v = 1/(\sqrt{2n} + 1/2) = 2/(z+1)$.
It follows that
\begin{align*}
    f(v) \geq \frac{z^2((z+1)/2 - 1)}{(1-(z+1))^2+z^2}
    = \frac{z^2(z-1)}{4z^2}
    &= \sqrt{\frac{n}{2}} - \frac{1}{4}.
\end{align*}
Therefore, $f(1/(s+1)) > \sqrt{n/2} - 1/4$, as desired.
\end{proof}

We next consider the heuristic that chooses a pair of unenvied agent and unallocated good maximizing the utility in \emph{every round}.
While there is no significant improvement in the strong utilitarian price compared to the heuristic that chooses such a pair of agent and good \emph{only in the first round}, the strong egalitarian price when $n = 2$ improves from $\infty$ to at most $3$.

\begin{restatable}{theorem}{thmagentgoodwithmaxutil}
\label{thm:agent_good_with_max_util}
Consider the ECE algorithm using the heuristic where in every round, an available good~$g$ is allocated to an unenvied agent $i$ such that the utility $u_i(g)$ is maximized.
\begin{enumerate}[label=(\alph*)]
    \item Its strong utilitarian price is in $[n, n + \sqrt{n/2} - 1/5]$.
    \item If $n = 2$, its strong egalitarian price is in $[2, 3]$.
    \item If $n \geq 3$, its strong egalitarian price is $\infty$.
\end{enumerate}
\end{restatable}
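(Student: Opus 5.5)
For part~(a), the upper bound needs no new argument. In the first round of any execution every agent is unenvied, so a pair $(i,g)$ maximizing $u_i(g)$ over all unenvied agents and available goods is also a maximizing pair over all of $N \times M$. Hence every allocation obtained with the every-round heuristic is also obtainable with the first-round-only heuristic of \cref{thm:agent_good_with_max_util_first_round_only}. For $n \ge 3$ this gives the bound $n + \sqrt{n/2} - 1/5$ directly from \cref{thm:agent_good_with_max_util_first_round_only}(b). For $n = 2$ the bound is $8/3 < 2 + 1 - 1/5$.

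For the lower bound of $n$ in (a), I would build a family of instances where the maximum utilitarian welfare tends to $n$ while some heuristic execution, with adversarial tie-breaking, has welfare tending to $1$. The mechanism is as follows.
\begin{itemize}
\item Each agent $i \ge 2$ has a ``private'' good that she values close to $1$.
\item Before the private goods are handed out, many small goods valued identically by all agents are allocated. Their values are slightly larger than anything else available at each step, so the greedy choice is forced.
\item These goods are arranged to create envy that keeps agent $i$ envied exactly when her private good becomes the maximum available pair.
\item The private good then goes, via a tie, to an unenvied agent who values it nearly as much but whose bundle is later cycled away, or to whom it is nearly worthless.
\end{itemize}
I expect to tune this with parameters $\epsilon \to 0$ so that $\UW \to 1$, and to check acyclicity round by round as in \cref{thm:agent_with_max_util}. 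The same style of construction will handle (c). For $n \ge 3$, I would mimic \cref{lem:first_round_only_strong_egal_price}: force two goods into the same bundle early. Agent $1$ takes a maximum good, and the next maximum available pair again involves agent~$1$, who is still unenvied because the other agents value her good at $0$. I would then add just enough goods that one of the $n \ge 3$ bundles stays empty or worthless to its final owner, while the maximum egalitarian welfare stays positive. The third agent is what lets the remaining agents' envy persist without ever forcing a good to the starved agent.

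For (b) with $n = 2$, the lower bound $2$ should come from a small explicit instance with four or so goods and ties. The upper bound $3$ is the main obstacle. Let agent $a$ receive the global maximum good, of value $v$, in round $1$, and let $b$ be the other agent. I would try to prove $u_j(A_j) \ge \tfrac{1}{3}\min\{1, \cdot\}$ for both agents $j$, comparing against $\EW^* \le \min\{u_a(M), u_b(M)\} = 1$. Refined, the target is $\EW^* \le 3\,\EW(\mathcal{A})$. The EF1 guarantee gives $u_j(A_j) \ge u_j(A_{-j}) - u_j(g)$ for the last good $g$ added to the other bundle, which yields $u_j(A_j) \ge (1 - u_j(g))/2$. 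The difficulty is when $u_j(g)$ is large. For that case I will use greedy maximality: at the moment $g$ was added, $j$'s own bundle was already at least as valuable to $j$ as the competing option, or $j$ was envied. I would case-split on whether $j$ ever owned a bundle containing a good of value at least $1/3$ to her. In the bad case, a large good going to the other agent forces $\EW^*$ to be small as well, because that good must also lie in someone's bundle in the optimum. I expect the bookkeeping of bundle swaps under cycle elimination with two agents, where a swap simply exchanges the two bundles, to be the delicate part of this argument.
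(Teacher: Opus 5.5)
\textbf{The genuine gap is the lower bound $n$ in (a).} The rest is sound. Your upper bound in (a) is exactly the paper's containment argument, including the check $8/3<2.8$ for $n=2$. Your plan for the upper bound in (b) uses the paper's ingredients:
the bound $u_j(A_j)\ge(1-u_j(g))/2$;
the fact that if $j$ was unenvied when $g$ went to the other agent, then $u_{-j}(g)\ge u_j(g)$ and so $\EW^*\le 1-u_j(g)$;
and the fact that if $j$ was envied, she had earlier been handed some $g'$ with $u_j(g')\ge u_j(g)$.
Your two-goods-in-one-bundle idea for (c) also works, e.g.\ $n=3$, $u_1=(1/2,1/2,0)$, $u_2=u_3=(0,1/2,1/2)$ with lowest-index tie-breaking. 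The paper instead starves an agent by handing her a good worth $0$ to her. For (a), however, the private-good mechanism you sketch cannot work under this heuristic, whatever the tie-breaking.

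\textbf{Why the private-good mechanism fails.} Two facts rule it out.
First, an agent with an empty bundle is never envied, so she is never in an envy cycle and can only acquire a bundle by being handed a good.
Second, whenever agent $i$ is handed a good $h$, the pair $(i,h)$ beats every pair $(i,h')$, so $h$ is her favorite available good.
Now let $u_i(p_i)\ge 1-\delta$ and suppose $\MUW\ge n-\eta$ with $\delta,\eta$ small. Agent $i$ must receive $p_i$ in the optimum, so every $k\ne i$ has $u_k(p_i)\le\eta<u_i(p_i)$.
If $p_i$ is handed out while $i$'s bundle is empty, $i$ is eligible, so the recipient values $p_i$ at least $u_i(p_i)$; hence the recipient is $i$ herself.
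If $i$'s bundle is already nonempty, her first good was handed to her while $p_i$ was still available, so that good is worth at least $u_i(p_i)$ to her.
Since utilities never decrease, every agent with a private good ends with utility at least $1-\delta$. Thus $\UW\ge(n-1)(1-\delta)$, and for $n\ge 3$ the ratio is capped near $n/(n-1)$. You can never keep $i$ envied until $p_i$ is the top pair, nor pass $p_i$ via a tie to someone who values it little.

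\textbf{The missing idea.} Spread every agent's value over many equally valued goods, so that greedy behaves like round-robin and everyone ends up valuing all bundles equally. The paper takes $m=x^n$ with $n\mid x$:
\begin{itemize}
\item for $i<n$, $u_i(g_j)=(1-\epsilon)/x^i$ if $j\le x^i$ and $\epsilon/(x^n-x^i)$ otherwise;
\item $u_n(g_j)=1/x^n$ for all $j$.
\end{itemize}
All agents rank the goods identically, so goods are allocated in index order. Each block $g_{tn+1},\dots,g_{tn+n}$ is valued uniformly by each agent, so anyone who has received a good in the current block is envied by anyone who has not. Hence each block goes to $n$ distinct agents, every agent ends with exactly $1/n$, and $\UW=1$. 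Meanwhile, giving agent $1$ the first $x$ goods and agent $i\ge2$ the goods $g_{x^{i-1}+1},\dots,g_{x^i}$ yields welfare tending to $n$.

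\textbf{A smaller point on (b).} You give no instance for the lower bound $2$. The paper's instance has agent $1$ valuing $g_1,g_2$ at $1/2$ each and agent $2$ valuing all $m$ goods at $1/m$. It yields ratio $2-4/m$, so it needs $m\to\infty$ rather than ``four or so'' goods.
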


\begin{proof}
We begin by addressing the strong utilitarian price (\cref{thm:agent_good_with_max_util}(a)).
The upper bound follows from the strong utilitarian price of the ECE algorithm using the heuristic that chooses an unenvied agent and an available good in the first round (\cref{thm:agent_good_with_max_util_first_round_only}(a) and (b)),
since the set of allocations produced by the heuristic that chooses a pair of unenvied agent and available good maximizing the utility in every round is a subset of those obtained with the heuristic choosing such a pair in the first round.

For the lower bound, we construct an instance with $n$ agents and $m = x^n$ goods, where $x$ is a positive integer divisible by $n$.
Let $0 < \epsilon < 1/x^n$.
The agents' utilities are defined as follows:
\begin{itemize}
    \item For each $i \in N \setminus \{n\}$, $u_i(g_j) = (1-\epsilon)/x^i$ if $j \leq x^i$ and $u_i(g_j) = \epsilon/(x^n-x^i)$ otherwise.
    \item $u_n(g_j) = 1/x^n$ for each $j$.
\end{itemize}

We consider the allocation obtained by the ECE algorithm with the heuristic choosing a pair of unenvied agent and available good maximizing the utility in every round.
We assume that ties are broken by choosing an agent and a good with the lowest indices.
Observe that the goods $g_1, \dots, g_m$ are sorted in non-increasing order of utility for every agent.
Therefore, the goods must be allocated in the order $g_1, \dots, g_m$ to some unenvied agent who values it most.

In order to determine the utilitarian welfare of the resulting allocation, we show by induction that for each $t \in \{0, \dots, m/n-1\}$, each agent must receive exactly one of goods $g_{tn+1}, \dots, g_{tn+n}$.
This is obvious for the first $n$ rounds, since any agent with a non-empty bundle must be envied by another agent with an empty bundle.
Hence, the ECE algorithm will allocate the first $n$ goods to $n$ distinct agents.
We next establish the inductive step.
Let $t \in \{1, \dots, m/n-1\}$.
Assume, as the inductive hypothesis, that for each $t' \in \{0, \dots, t-1\}$, each agent receives exactly one of goods $g_{t'n+1}, \dots, g_{t'n+n}$.
Note that for every $t'' \in \{0, \dots, m/n-1\}$ and each agent $i \in N$, all of the goods $g_{t''n+1}, \dots, g_{t''n+n}$ have the same utility to agent~$i$.
Hence, the inductive hypothesis implies that in the partial allocation obtained after round $tn$, each agent $i \in N$ values all the agents' bundles equally.
Subsequently, any agent who has received a good in some round $r \in \{tn+1, \dots, tn+n\}$ must be envied by another agent who has not received a good since the beginning of round $tn+1$.
Therefore, the ECE algorithm allocates the goods $g_{tn+1}, \dots, g_{tn+n}$ to $n$ distinct agents, completing the induction.
This implies that at the end of $m$ rounds, each agent values all the agents' bundles equally, receiving a utility of exactly $1/n$.
It follows that the utilitarian welfare of this allocation is $1$.

On the other hand, suppose that the first $x$ goods are allocated to agent $1$, and each agent $i \in N \setminus \{1\}$ receives the goods $g_{x^{i-1} + 1}, \dots, g_{x^i}$. 
Then, agent $1$ receives a utility of $1-\epsilon$, each agent $i \in N \setminus \{1, n\}$ receives a utility of $(1-\epsilon)(1 - 1/x) = 1 - \epsilon - (1-\epsilon)/x$,
and agent $n$ receives a utility of $1-1/x \geq 1 - \epsilon - (1-\epsilon)/x$.
Hence, the maximum utilitarian welfare is at least $n(1-\epsilon) - (n-1)(1-\epsilon)/x$.

Therefore, the strong utilitarian price of this instance is at least  $n(1-\epsilon) - (n-1)(1-\epsilon)/x$, which approaches $n$ as $\epsilon \rightarrow 0$ and $x \rightarrow \infty$.

\medskip

We next derive the lower and upper bounds for the strong egalitarian price when $n = 2$ (\cref{thm:agent_good_with_max_util}(b)).
For the lower bound, consider an instance with $m \geq 3$ goods and the following utilities:
\begin{itemize}
    \item $u_1(g_j) = 1/2$ if $j \leq 2$ and $u_1(g_j) = 0$ otherwise.
    \item $u_2(g_j) = 1/m$ for all $j$.
\end{itemize}

Consider the allocation obtained by the ECE algorithm with the heuristic that chooses a pair of unenvied agent and unallocated good maximizing the utility in every round.
We assume that ties are broken by choosing the agent and good with the lowest indices.
Then, in the first round, $g_1$ is allocated to agent $1$.
This causes agent $2$ to envy agent $1$, so in the second round, $g_2$ is allocated to agent $2$.
At this point, agent $1$ values each of the two bundles exactly $1/2$.
Note that while new goods may be added to the bundles and the bundles may be swapped later, the ECE algorithm cannot take a good from one bundle and move it to the other bundle.
Since goods $g_1$ and $g_2$ are placed in different bundles in round~$2$, in the final allocation, agent $1$'s bundle can contain only one of $g_1$ and $g_2$. 
Hence, agent $1$'s utility, and therefore the egalitarian welfare, is at most $1/2$.

Now, consider instead the allocation where agent $1$ gets both goods $g_1$ and $g_2$, while the remaining goods are allocated to agent $2$.
Then, agent $1$ gets a utility of $1$ and agent~$2$ gets a utility of $1-2/m$.
The egalitarian welfare of this allocation is $1-2/m$.
Therefore, the strong egalitarian price of this instance is at least $(1-2/m)/(1/2)$, which approaches $2$ as $m\rightarrow \infty$.

We next prove that the strong egalitarian price when $n = 2$ is at most $3$.
If $m = 1$, every allocation has an egalitarian welfare of $0$, so the strong egalitarian price is $1$.
Take any instance with $n = 2$ agents and $m \geq 2$ goods, and let $(A_1, A_2)$ be an allocation obtained with the ECE algorithm using the heuristic that chooses a pair of unenvied agent and unallocated good maximizing the utility in every round.
Consider an agent with the minimum utility between the two agents; without loss of generality, assume that this is agent $2$.
Then, the egalitarian welfare of the allocation~$(A_1, A_2)$ is the utility of agent~$2$.

If agent $2$ does not envy agent $1$, then $1 \leq u_2(A_1) + u_2(A_2) \leq 2u_2(A_2)$, so $u_2(A_2) \geq 1/2$.
Since the maximum egalitarian welfare is at most $1$, the egalitarian welfare in this case is at least $1/2$ times the maximum.
From now on, assume that agent $2$ envies agent $1$.

Let~$g$ be the last good added to the bundle $A_1$.
Since agent $2$ envies agent $1$, good~$g$ must be allocated to agent $1$ and there are no bundle exchanges afterward.
On the other hand, agent $2$ must not envy agent $1$ immediately before good~$g$ is allocated.
Since agent $2$'s utility never decreases, $u_2(A_2) \geq u_2(A_1 \setminus \{g\})$.

We consider two cases.
First, suppose that $u_1(g) \geq u_2(g)$.
Observe that in any allocation, the utility of the agent who is not allocated good $g$ is at most\footnote{This does not hold under subadditive utilities.
In fact, \cref{thm:agent_good_with_max_util}(b) does not hold for such utilities---see Appendix~\ref{subsec:subadditive} for details.} $1-u_2(g)$.
Hence, the maximum egalitarian welfare is at most $1-u_2(g)$.
This means that $1-u_2(g) \leq u_2(M \setminus \{g\}) \leq u_2(A_1 \setminus \{g\}) + u_2(A_2) \leq 2u_2(A_2)$.
Therefore, agent $2$'s utility is at least $1/2$ times the maximum egalitarian welfare.

On the other hand, suppose that $u_1(g) < u_2(g)$.
Since good $g$ is allocated to agent $1$, according to the ECE heuristic used, agent $2$ must be envied by agent $1$ immediately before good~$g$ is allocated.
In particular, agent $2$ must own a non-empty bundle at this point.
Note that empty bundles are not involved in bundle exchanges.
Hence, the heuristic must have allocated some good $g'$ to agent $2$ prior to the round when good $g$ is allocated to agent $1$;
by definition of the heuristic, $u_2(g') \geq u_2(g)$.
Furthermore, since the ECE algorithm never decreases agent $2$'s utility, we have $u_2(A_2) \geq u_2(g') \geq u_2(g)$.
Therefore, $1 \leq u_2(g) + u_2(A_1 \setminus \{g\}) + u_2(A_2) \leq 3u_2(A_2)$.
Equivalently, $u_2(A_2) \geq 1/3$.
Since the maximum egalitarian welfare is at most $1$, it follows that agent $2$'s utility is at least $1/3$ times the maximum egalitarian welfare.
Hence, we have shown that in each case, the egalitarian welfare of the allocation---which is equal to agent $2$'s utility---is at least $1/3$ times the maximum egalitarian welfare.

\medskip

Lastly, we determine the strong egalitarian price when $n \geq 3$ (\cref{thm:agent_good_with_max_util}(c)).
Consider an instance with $n \geq 3$ agents, $m = n$ goods, and the following utilities:
\begin{center}
\begin{tabular}{c|cccccc}
    $g$      & $g_1$ & $\dots$ & $g_{n-3}$ & $g_{n-2}$ & $g_{n-1}$ & $g_n$ \\
    \hline
    $u_1(g)$ & $1$ & $\dots$ & $0$ & $0$ & $0$ & $0$ \\
    $\vdots$ & $\vdots$ & $\ddots$ & $\vdots$ & $\vdots$ & $\vdots$ & $\vdots$ \\
    $u_{n-3}(g)$ & $0$ & $\dots$ & $1$ & $0$ & $0$ & $0$ \\
    $u_{n-2}(g)$ & $0$ & $\dots$ & $0$ & $2/3$ & $1/3$ & $0$ \\
    $u_{n-1}(g)$ & $0$ & $\dots$ & $0$ & $0$ & $2/3$ & $1/3$ \\
    $u_n(g)$ & $0$ & $\dots$ & $0$ & $2/3$ & $1/3$ & $0$ \\
\end{tabular}
\end{center}

Consider the allocation produced by the ECE algorithm using the heuristic that chooses a pair of agent and good maximizing the utility in every round.
Assume that ties are broken by choosing the agent and good with the lowest indices.
For each $i \in \{1, \dots, n-3\}$, the heuristic must choose to allocate good~$g_i$ to agent $i$.
Then, in the last three rounds, goods $g_{n-2}, g_{n-1},$ and $g_n$ are allocated to agents $n-2, n-1,$ and $n$ respectively.
Note that only agent $n$ envies another agent, so no envy cycle exists.
Since agent $n$ receives a utility of $0$, the egalitarian welfare of this allocation is also $0$.

On the other hand, consider the allocation where goods $g_1, \dots, g_{n-3}$ are allocated to agents $1, \dots, n-3$ respectively, and goods $g_{n-2}, g_{n-1},$ and $g_n$ are allocated to agents $n-2, n,$ and $n-1$ respectively.
The egalitarian welfare of this allocation is $1/3$.
Hence, the strong egalitarian price of this instance is $(1/3)/0 = \infty$.
\end{proof}

\section{Max-min matching heuristic}
\label{sec:max_min_matching}

Thus far, all of our heuristics have a strong egalitarian price of $\infty$, which is the same as the ECE algorithm without any heuristic, when $n \geq 3$.
To attain a finite strong egalitarian price in this case, we next consider a non-greedy heuristic.

We adapt an algorithm by \citet{BezakovaDa05} for finding an allocation with egalitarian welfare at least $1/(m-n+1)$ times the optimum.\footnote{While there are algorithms for maximizing the egalitarian welfare with better approximation factors (e.g., \citep{Golovin05,AsadpourSa10}), they achieve the improved factors either through randomization or by excluding some subset of agents.
Hence, these algorithms are not useful for our purposes.
}
The main idea of the algorithm involves \emph{max-min matching}, a fundamental problem in graph theory.
A \emph{matching} of a graph is a subset of edges with no common vertices, and a max-min matching of an edge-weighted bipartite graph is a matching that has  maximum cardinality, and subject to that, the minimum edge weight is maximized.
An instance is first modeled as a complete bipartite graph with the vertex set $N \cup M$, and there is an edge $\{i, g\}$ of weight $u_i(g)$ between every pair $i \in N$ and $g \in M$.
Then, a max-min matching of this graph corresponds to a partial allocation of the original instance where every agent receives exactly one good, and the egalitarian welfare is maximum among such partial allocations.

We define \emph{the max-min matching heuristic} as follows.
If $m < n$, first add $n-m$ dummy goods valued at $0$ by every agent.
The heuristic computes a sequence of $n$ distinct goods $g'_1, \dots, g'_n \in M$ such that $\min_{i \in N} u_i(g'_i)$ is maximized---this can be done in polynomial time~\citep[Section 2]{BezakovaDa05}.
Then, in each round $i \in \{1, \dots, n\}$ of the ECE algorithm, good $g'_i$ is allocated to agent $i$ (who must be unenvied), so that the egalitarian welfare of the partial allocation obtained at the end of round~$n$ is maximized.
The choices in the remaining rounds can be arbitrary.

\begin{restatable}{theorem}{thmmaxminmatching}
\label{thm:max_min_matching}
Consider the ECE algorithm using the max-min matching heuristic.
\begin{enumerate}[label=(\alph*)]
    \item Its strong utilitarian price is $n^2$.
    \item Its strong egalitarian price is $m-n+1$ if $m \geq n$, and $1$ otherwise.
\end{enumerate}
\end{restatable}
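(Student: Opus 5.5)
For part (a), the upper bound $n^2$ follows at once from \cref{prop:strong_price}: every allocation produced with any heuristic is an ECE allocation, so its utilitarian welfare is at least $1/n$ by \cref{lem:no_envy_cycle}, while the maximum utilitarian welfare is at most $n$. For the lower bound, I would adapt the instance from the proof of \cref{prop:strong_price} so that the max-min matching is forced to make poor choices. Take $m = 2n-1$ goods with $u_n(g_j)=1/n$ for $j\le n$ and $u_i(g_{n+i})=1$ for $i<n$. One problem is that a max-min matching would then give each agent $i<n$ the good $g_{n+i}$. To prevent this, I would perturb the instance: give each agent $i<n$ a tiny value $\epsilon$ on good $g_i$ and a value $1-\epsilon$ on $g_{n+i}$, and give agent $n$ value $1/n$ on each of $g_1,\dots,g_n$. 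Every matching then has egalitarian welfare at most $\epsilon$ or so, so ties allow the matching $i\mapsto g_i$. After that, the remaining goods are allocated as in \cref{prop:strong_price}, each $g_{n+i}$ going to agent $i+1$ when that agent is unenvied. I would check acyclicity and compute a welfare of $1/n+O(\epsilon)$ against an optimum near $n$. If ties are not permitted to favor the bad matching, I would instead make the matching $i\mapsto g_i$ strictly optimal, for example by giving agent $n$ value zero on the $g_{n+i}$. Verifying this construction is the main fiddly step in (a).

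For part (b), if $m<n$ some agent receives nothing in every allocation, so both welfares are $0$ and the price is $1$ by convention. Now suppose $m\ge n$. Let $\mathcal{A}^*$ be an optimal egalitarian allocation with value $\mathrm{OPT}$, and let $\mu=\min_i u_i(g'_i)$ be the value of the max-min matching. Following \citet{BezakovaDa05}, I first claim that $\mu \ge \mathrm{OPT}/(m-n+1)$. To see this, assume OPT$>0$, so every agent has a nonempty bundle in $\mathcal{A}^*$ and hence $|A^*_i|\le m-n+1$. Picking from each $A^*_i$ its favourite good gives a matching of value at least $\mathrm{OPT}/(m-n+1)$. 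Next, since agents' utilities never decrease during ECE, and after round $n$ each agent $i$ holds exactly $g'_i$, every agent ends with utility at least $\mu$. Combining the two facts yields the upper bound $m-n+1$. The step I expect to need care with is checking that agent $i$ is unenvied in round $i$; this holds because she has an empty bundle.

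For the lower bound in (b), I would use $n$ agents and $m$ goods. Agents $1,\dots,n-1$ each value one private good at $1$. Agent $n$ values a set $S$ of $m-n+1$ goods at $1/(m-n+1)$ each, and also values agent $1$'s good slightly. The aim is that the max-min matching is the only forced structure, and that arbitrary later rounds place the remaining goods of $S$ so that agent $n$ ends with utility only $1/(m-n+1)$, while the optimum gives agent $n$ all of $S$ for a value of $1$. One way to achieve this is to allocate the leftover goods of $S$ to agents who are unenvied and not agent $n$. Here agent $n$ envies them, but any envy cycle must be avoided, which works because the other agents value $S$ at $0$. I expect this construction, and in particular ensuring that no envy cycle returns bundles to agent $n$, to be the main obstacle.
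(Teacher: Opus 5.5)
Your upper bounds in (a) and (b), and the case $m<n$, are fine and coincide with the paper's arguments. One slip: envy cycles among agents $1,\dots,i$ can swap bundles during the first $n$ rounds, so agent $i$ need not hold $g'_i$ after round $n$; monotonicity still gives her final utility at least $\mu$.

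Both lower bounds, however, fail. In (a), your perturbation does not make $i\mapsto g_i$ a max-min matching. Giving each agent $i<n$ the good $g_{n+i}$ and agent $n$ the good $g_n$ yields minimum $\min\{1-\epsilon,1/n\}=1/n$, whereas $i\mapsto g_i$ yields only $\epsilon$. Setting agent $n$'s values on the $g_{n+i}$ to zero changes nothing, since they are already zero.

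In fact no instance with $m=2n-1$ can work. Every agent finishes with utility at least $\mu$, so $\UW\ge n\mu$. A ratio near $n^2$ forces $\MUW\to n$, so each bundle of a welfare-maximizing allocation is worth nearly $1$ to its owner and contains at most $m-n+1$ goods. Picking each owner's favourite good from it (as in \cref{lem:max_min_matching}) shows $\mu\gtrsim 1/(m-n+1)=1/n$, which caps the ratio at about $n$.

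The missing idea is an agent whose values are uniformly tiny over many goods. This caps the max-min value and lets a bad matching tie for optimal. The paper takes $m=kn-1$ with $k\to\infty$ and lets agent $1$ value every good at $1/m$. For $i\ge 2$, agent $i$ values $g_i$ at $1/m$ and $g_{(k-1)n+i-1}$ at $1-1/m$. Then $g'_i=g_i$ is a max-min matching. Allocating $g_{tn+i}$ to agent $i$ afterwards keeps the envy graph acyclic and gives $\UW=(k+n-1)/m\to 1/n$, while $\MUW\ge n-(2n-2)/m$.

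In (b), the obstruction is that every ECE allocation is EF1. In your instance agent $n$ spreads her value over the $k=m-n+1$ goods of $S$, each worth about $1/k$. EF1 towards the other $n-1$ agents then gives roughly $1\le n\,u_n(A_n)+(n-1)/k$, i.e.\ $u_n(A_n)\gtrsim(1-(n-1)/k)/n$, so the ratio is only about $n$ once $m\gg n$.

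Concretely, while agent $n$ holds a single good of $S$, agent $1$ becomes envied after receiving one good of $S$, and each agent in $\{2,\dots,n-1\}$ after receiving two. From then on agent $n$ is the only unenvied agent and must absorb the rest. So your plan succeeds only when $k\le 2n-2$; envy cycles are not the issue here.

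What is needed is for the max-min matching itself to do the damage, to an agent whose value is concentrated on one good, so that EF1 offers her no protection. The paper sets up the instance as follows:
agents $i\le n-2$ value only $g_i$ (at $1$);
agent $n-1$ values $g_{n-1}$ at $1-1/(2k)-\epsilon$ and $g_n$ at $1/(2k)+\epsilon$;
agent $n$ values $g_{n-1}$ at $1/2$ and each of $g_n,\dots,g_m$ at $1/(2k)$.

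The unique max-min matching then gives $g_{n-1}$ to agent $n$ and $g_n$ to agent $n-1$. Agent $n$, already at utility $1/2$, never envies anyone afterwards and so keeps $g_{n-1}$. This leaves agent $n-1$ with at most $1/(2k)+\epsilon$, whereas giving $g_{n-1}$ to agent $n-1$ and $g_n,\dots,g_m$ to agent $n$ achieves egalitarian welfare $1/2$.
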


In contrast, for all other heuristics studied in this paper, the infinite strong egalitarian price holds even when $n$ and $m$ are fixed.

To prove the upper bound in \cref{thm:max_min_matching}(b), we will use the following lemma, which was shown by \citet{BezakovaDa05} for additive utilities.\footnote{This is Lemma 2 in the extended version of their work.}
We include here a proof that also applies for subadditive utilities.

\begin{restatable}[\citealp{BezakovaDa05}]{lemma}{lemmaxminmatching}
\label{lem:max_min_matching}
Given any instance with $n \geq 2$ agents and $m \geq n$ goods, let $g'_1, \dots, g'_n$ be distinct goods such that $\min_{i \in N} u_i(g'_i)$ is maximized.
Then, the maximum egalitarian welfare of the instance is at most $(m-n+1) \cdot \min_{i \in N} u_i(g'_i)$.
\end{restatable}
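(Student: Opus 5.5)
Let $\lambda = \min_{i \in N} u_i(g'_i)$ be the optimal max-min matching value, and fix an allocation $\mathcal{A}^* = (A^*_1, \dots, A^*_n)$ maximizing the egalitarian welfare, with value $\EW^*$. We aim to show $\EW^* \leq (m-n+1)\lambda$. If some $A^*_i$ is empty then $\EW^* = 0$ and there is nothing to prove, so we assume every bundle is non-empty. The plan is to extract, from $\mathcal{A}^*$, a system of distinct representatives: one good $h_i \in A^*_i$ for each agent $i$. Since the bundles are pairwise disjoint, any such choice gives $n$ distinct goods. By optimality of the matching $g'_1,\dots,g'_n$, we then get $\min_i u_i(h_i) \leq \lambda$.

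The key is to choose the representatives so that each $u_i(h_i)$ is large relative to $u_i(A^*_i)$. Take $h_i \in \argmax_{g \in A^*_i} u_i(g)$. Subadditivity (or additivity) gives
\[
u_i(A^*_i) \leq \sum_{g \in A^*_i} u_i(g) \leq |A^*_i| \cdot u_i(h_i).
\]
Every bundle has at least one good and the bundles partition a subset of the $m$ goods, so $|A^*_i| \leq m - (n-1) = m-n+1$ for every $i$. Hence $u_i(h_i) \geq u_i(A^*_i)/(m-n+1) \geq \EW^*/(m-n+1)$ for each $i$.

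Combining the two steps, $\EW^*/(m-n+1) \leq \min_i u_i(h_i) \leq \lambda$, which is the claim. The only delicate points are the following.
\begin{itemize}
\item The edge case of empty bundles in $\mathcal{A}^*$, which we dispose of first.
\item The fact that only subadditivity of $u_i$ is used, via $u_i(S) \leq \sum_{g \in S} u_i(g)$, which follows from subadditivity by induction.
\end{itemize}
The main step to check carefully is the bundle-size bound $|A^*_i| \leq m-n+1$. It relies on all $n$ bundles being non-empty, which is exactly why the empty-bundle case must be handled separately.
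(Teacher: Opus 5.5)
Your proposal is correct and follows essentially the same argument as the paper: you pick a utility-maximizing good from each agent's bundle in an optimal egalitarian allocation, use $|A^*_i| \leq m-n+1$ together with subadditivity, and compare against the optimal max-min matching. The only cosmetic difference is that the paper applies the size bound just to the agent $k$ minimizing $u_k(g''_k)$, whereas you apply it to every agent and then take the minimum.
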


\begin{proof}
Let $\mathcal{A} = (A_1, \dots, A_n)$ be an allocation with maximum egalitarian welfare.
If there is some agent $i \in N$ such that $A_i = \emptyset$, then the maximum egalitarian welfare is $0$, which is at most $(m-n+1) \cdot \min_{i \in N} u_i(g'_i)$.
We can now assume that $A_i \neq \emptyset$ for any $i \in N$.
For each $i \in N$, let $g''_i$ be some good in $A_i$ such that $u_i(g''_i)$ is maximized.
By the choice of $g'_1, \dots, g'_n$, we have $\min_{i \in N} u_i(g'_i) \geq \min_{i \in N} u_i(g''_i)$.
Now fix agent $k$ such that $u_k(g''_k) = \min_{i \in N} u_i(g''_i)$.
Observe that
\begin{align*}
 u_k(A_k) \leq \sum_{g \in A_k} u_k(g) \leq |A_k|\cdot u_k(g''_k) \leq (m-n+1) \cdot u_k(g''_k),
\end{align*}
where the last inequality follows from the fact that $|A_i| \geq 1$ for any $i \in N \setminus \{k\}$, and so $|A_k| \leq m-n+1$.
Since $u_k(A_k)$ is at least the egalitarian welfare of allocation $\mathcal{A}$, and $u_k(g''_k) = \min_{i \in N} u_i(g''_i)$ is at most $\min_{i \in N} u_i(g'_i)$, it follows that the maximum egalitarian welfare of the given instance is at most $(m-n+1) \cdot \min_{i \in N} u_i(g'_i)$.
\end{proof}

We are now ready to prove \cref{thm:max_min_matching}.

\begin{proof}[Proof of \cref{thm:max_min_matching}]
We start by finding the strong egalitarian price.
If $m < n$, the strong egalitarian price is $1$ since every allocation has egalitarian welfare $0$.
For the case where $m \geq n$, the upper bound follows directly from \cref{lem:max_min_matching} since the ECE algorithm never decreases the utility of any agent, and thus also the egalitarian welfare.

We next prove the lower bound.
If $m = n$, then $m-n+1 = 1$ and the lower bound holds trivially.
Assume henceforth that $m > n$.
Let $k = m-n+1 \geq 2$ and $0 < \epsilon < 1/4$.
Consider the following instance with $n$~agents and $m$~goods:
\begin{itemize}
    \item For each $i \in N \setminus \{n-1, n\}$, $u_i(g_i) = 1$.
    \item $u_{n-1}(g_{n-1}) = 1-1/(2k)-\epsilon$ and $u_{n-1}(g_n) = 1/(2k)+\epsilon$.
    \item $u_n(g_{n-1}) = 1/2$, and $u_n(g_j) = 1/(2k)$ for $j \geq n$.
    \item $u_i(g_j) = 0$ for all other pairs $(i, j)$.
\end{itemize}

The expression~$\min_{i \in N} u_i(g'_i)$ is maximized if and only if $g'_i = g_i$ for each $i \in N \setminus \{n-1, n\}$, $g'_{n-1} = g_n$, and $g'_n = g_{n-1}$.
In this case, $\min_{i \in N} u_i(g'_i) = u_{n-1}(g_n) = 1/(2k)+\epsilon$.
Hence, the max-min matching heuristic allocates $g'_i$ to agent~$i$ in each round~$i\in\{1,\dots,n\}$.
In particular, agent~$n$ receives good $g_{n-1}$ of utility~$1/2$.
Since her utility never decreases throughout the execution of the ECE algorithm, agent~$n$ will never envy another agent after round~$n$.
This means that agent~$n$'s bundle will never be passed to a different agent, and good $g_{n-1}$ will stay in agent~$n$'s possession until the end of the algorithm.
On the other hand, agent $(n-1)$'s utility at the end of the algorithm is at most $u_{n-1}(M \setminus \{g_{n-1}\}) = 1/(2k) + \epsilon$.
It follows that any allocation of this instance produced by the ECE algorithm with the max-min matching heuristic has an egalitarian welfare of at most $1/(2k)+\epsilon$.

Now, consider the allocation where each agent~$i \in N \setminus \{n\}$ receives good~$g_i$, while agent~$n$ receives goods~$g_n, \dots, g_m$.
In this allocation, each agent~$i \in N \setminus \{n-1, n\}$ receives a utility of~$1$, agent $n-1$ receives a utility of $1-1/(2k)-\epsilon$, and agent~$n$ receives a total utility of $k \cdot 1/(2k) = 1/2$.
Since $k \geq 2$ and $\epsilon < 1/4$, we have $1-1/(2k)-\epsilon > 1/2$.
Hence, the egalitarian welfare of this allocation is $1/2$.

Therefore, the strong egalitarian price of this instance is at least $(1/2)/(1/(2k)+\epsilon)$, which approaches $k = m-n+1$ as $\epsilon$ approaches $0$.
By the upper bound established earlier, it follows that the strong egalitarian price of the ECE algorithm with the max-min matching heuristic is exactly $m-n+1$.

We continue by finding the strong utilitarian price.
The upper bound follows from the strong utilitarian price of the vanilla ECE algorithm (\cref{prop:strong_price}), since the set of allocations that can be obtained with any ECE heuristic is a subset of all allocations that can be produced by the ECE algorithm.

For the lower bound, consider an instance with $n \geq 2$ agents, $m = kn-1$ goods where $k \geq 2$ is an integer, and the following utilities:
\begin{itemize}
    \item $u_1(g_j) = 1/m$ for all $j$.
    \item For each $i \in N \setminus \{1\}$, $u_i(g_i) = 1/m$ and $u_i(g_{(k-1)n+i-1}) = 1-1/m$.
    \item $u_i(g_j) = 0$ for all other pairs $(i, j)$.
\end{itemize}

Consider the following allocation obtained by the max-min matching heuristic:
In each round $i \in \{1, \dots, n\}$, we allocate good~$g'_i = g_i$ to agent $i$.
Then, $\min_{i \in N} u_i(g'_i) = 1/m$, which is the maximum egalitarian welfare for any partial allocation where each agent receives exactly one good, since $u_1(g) = 1/m$ for any good $g$.
In the subsequent rounds, the goods are allocated in increasing order of their indices from $g_{n+1}$ to $g_m$ such that for each $t \in \{1, \dots, k-1\}$ and $i \in \{1, \dots, n\}$, good $g_{tn+i}$ is given to agent $i$ (except when $t = k-1$ and $i = n$, since $g_{kn}$ does not exist).

In the first $(k-1)n$ rounds, the intermediate allocations remain envy-free.
After round $(k-1)n+i$ for each $i \in \{1, \dots, n-1\}$, agent $i+1$ begins to envy agent $i$, and continues to envy her until the end of the algorithm.
Hence, the envy graph remains acyclic throughout.
At the end, agent $1$ gets a total utility of $k/m$ and each of the remaining agents gets a total utility of $1/m$.
Therefore, the utilitarian welfare of this allocation is $(k+n-1)/m$.

On the other hand, consider the allocation where each agent $i \in N \setminus \{1\}$ only gets the good $g_{(k-1)n+i-1}$, and agent~$1$ gets the remaining goods $g_1, \dots, g_{(k-1)n}$.
In this case, agent~$1$ gets a total utility of $(k-1)n/m = 1-(n-1)/m$, while each of the remaining agents gets a total utility of $1-1/m$. Hence, the utilitarian welfare of this allocation is $n-(2n-2)/m$.

Therefore, the strong utilitarian price of this instance is at least 
\begin{align*}
    \frac{n-\frac{2n-2}{m}}{\frac{k+n-1}{m}} &= \frac{nm-2n+2}{k+n-1} \\
    &= \frac{n(kn-1)-2n+2}{k+n-1} \tag{since $m = kn-1$} \\
    &= \frac{kn^2-3n+2}{k+n-1} \\
    &= n^2 - \frac{n^3-n^2+3n-2}{k+n-1},
\end{align*}
which approaches $n^2$ as $k\rightarrow\infty$.
\end{proof}

\section{Experiments}
\label{sec:experiments}

\begin{figure*}[t!]
\centering
\fontsize{6}{6}\selectfont
\tikzset{every mark/.append style={scale=0.5}} 
\begin{tikzpicture}[scale=1.12]
    \begin{axis}[
        name = random2util,
        height = 4cm, width = 4cm,
        align = center,
        title = \fontsize{7}{6}\selectfont{Utilitarian price with $2$ agents},
        xlabel = {Number of goods},
        ylabel = {Utilitarian price},
        xmin = 2, xmax = 9, ymin = 1, ymax = 1.2,
        xtick = {2, 3, 4, 5, 6, 7, 8, 9},
        ytick = {1, 1.1, 1.2},
        minor y tick num = 4,
        legend style = {
            at = {(0cm, -0.8cm)},
            anchor = north west,
            cells = {anchor = west},
            legend columns = 3,
            legend to name = {leg:random},
            /tikz/every even column/.append style={column sep=0.2cm},
        }
    ]
    
    \addplot[color = black, mark = *]
        coordinates {
            (2, 1.07)
            (3, 1.11)
            (4, 1.14)
            (5, 1.15)
            (6, 1.17)
            (7, 1.18)
            (8, 1.19)
            (9, 1.20)
        };
        \addlegendentry{No heuristic;}

    \addplot[color = red, mark = x]
        coordinates {
            (2, 1.00)
            (3, 1.05)
            (4, 1.07)
            (5, 1.07)
            (6, 1.07)
            (7, 1.07)
            (8, 1.07)
            (9, 1.06)
        };
        \addlegendentry{Agent with maximum utility;}
        
    \addplot[color = YellowOrange, mark = +]
        coordinates {
            (2, 1.07)
            (3, 1.07)
            (4, 1.08)
            (5, 1.08)
            (6, 1.07)
            (7, 1.07)
            (8, 1.06)
            (9, 1.06)
        };
        \addlegendentry{Good with maximum utility;}

    \addplot[color = Green, mark = o]
        coordinates {
            (2, 1.00)
            (3, 1.07)
            (4, 1.11)
            (5, 1.14)
            (6, 1.17)
            (7, 1.18)
            (8, 1.19)
            (9, 1.20)
        };
        \addlegendentry{Agent \& good with maximum utility (first round only);}
    
    \addplot[color = blue, mark = square*]
        coordinates {
            (2, 1.00)
            (3, 1.02)
            (4, 1.02)
            (5, 1.02)
            (6, 1.01)
            (7, 1.01)
            (8, 1.01)
            (9, 1.01)
        };
        \addlegendentry{Agent \& good with maximum utility;}
    
    \addplot[color = Orchid, mark = triangle]
        coordinates {
            (2, 1.00)
            (3, 1.06)
            (4, 1.11)
            (5, 1.13)
            (6, 1.15)
            (7, 1.16)
            (8, 1.17)
            (9, 1.18)
        };
        \addlegendentry{Max-min matching}
    \end{axis}

    \begin{axis}[
        name = random2egal,
        at = {($(random2util.east)+(1.05cm,0)$)},
        anchor = west,
        height = 4cm, width = 4cm,
        align = center,
        title = \fontsize{7}{6}\selectfont{Egalitarian price with $2$ agents},
        xlabel = {Number of goods},
        ylabel = {Egalitarian price},
        xmin = 2, xmax = 9, ymin = 1, ymax = 1.3,
        xtick = {2, 3, 4, 5, 6, 7, 8, 9},
        minor y tick num = 4,
    ]
    
    \addplot[color = black, mark = *]
        coordinates {
            (2, 1.12)
            (3, 1.19)
            (4, 1.23)
            (5, 1.24)
            (6, 1.26)
            (7, 1.27)
            (8, 1.28)
            (9, 1.29)
        };

    \addplot[color = red, mark = x]
        coordinates {
            (2, 1.00)
            (3, 1.11)
            (4, 1.12)
            (5, 1.13)
            (6, 1.13)
            (7, 1.13)
            (8, 1.13)
            (9, 1.13)
        };
        
    \addplot[color = YellowOrange, mark = +]
        coordinates {
            (2, 1.12)
            (3, 1.08)
            (4, 1.15)
            (5, 1.19)
            (6, 1.22)
            (7, 1.24)
            (8, 1.25)
            (9, 1.26)
        };

    \addplot[color = Green, mark = o]
        coordinates {
            (2, 1.00)
            (3, 1.07)
            (4, 1.19)
            (5, 1.23)
            (6, 1.26)
            (7, 1.27)
            (8, 1.28)
            (9, 1.29)
        };
    
    \addplot[color = blue, mark = square*]
        coordinates {
            (2, 1.00)
            (3, 1.03)
            (4, 1.05)
            (5, 1.06)
            (6, 1.07)
            (7, 1.08)
            (8, 1.08)
            (9, 1.08)
        };
    
    \addplot[color = Orchid, mark = triangle*]
        coordinates {
            (2, 1.00)
            (3, 1.08)
            (4, 1.18)
            (5, 1.22)
            (6, 1.25)
            (7, 1.26)
            (8, 1.27)
            (9, 1.28)
        };
    \end{axis}

    \begin{axis}[
        name = random3util,
        at = {($(random2egal.east)+(1.05cm,0)$)},
        anchor = west,
        height = 4cm, width = 4cm,
        align = center,
        title = \fontsize{7}{6}\selectfont{Utilitarian price with $3$ agents},
        xlabel = {Number of goods},
        ylabel = {Utilitarian price},
        xmin = 3, xmax = 9, ymin = 1, ymax = 1.24,
        xtick = {3, 4, 5, 6, 7, 8, 9},
        minor y tick num = 4,
    ]
    
    \addplot[color = black, mark = *]
        coordinates {
            (3, 1.11)
            (4, 1.14)
            (5, 1.17)
            (6, 1.20)
            (7, 1.21)
            (8, 1.23)
            (9, 1.24)
        };

    \addplot[color = red, mark = x]
        coordinates {
            (3, 1.03)
            (4, 1.08)
            (5, 1.09)
            (6, 1.10)
            (7, 1.10)
            (8, 1.11)
            (9, 1.11)
        };
        
    \addplot[color = YellowOrange, mark = +]
        coordinates {
            (3, 1.11)
            (4, 1.10)
            (5, 1.10)
            (6, 1.10)
            (7, 1.10)
            (8, 1.09)
            (9, 1.08)
        };

    \addplot[color = Green, mark = o]
        coordinates {
            (3, 1.05)
            (4, 1.11)
            (5, 1.15)
            (6, 1.18)
            (7, 1.20)
            (8, 1.22)
            (9, 1.24)
        };
    
    \addplot[color = blue, mark = square*]
        coordinates {
            (3, 1.03)
            (4, 1.04)
            (5, 1.04)
            (6, 1.04)
            (7, 1.03)
            (8, 1.02)
            (9, 1.02)
        };
    
    \addplot[color = Orchid, mark = triangle*]
        coordinates {
            (3, 1.04)
            (4, 1.09)
            (5, 1.14)
            (6, 1.17)
            (7, 1.19)
            (8, 1.21)
            (9, 1.22)
        };
    \end{axis}

    \begin{axis}[
        name = random3egal,
        at = {($(random3util.east)+(1.05cm,0)$)},
        anchor = west,
        height = 4cm, width = 4cm,
        align = center,
        title = \fontsize{7}{6}\selectfont{Egalitarian price with $3$ agents},
        xlabel = {Number of goods},
        ylabel = {Egalitarian price},
        xmin = 3, xmax = 9, ymin = 1, ymax = 1.4,
        xtick = {3, 4, 5, 6, 7, 8, 9},
        minor y tick num = 4,
    ]
    
    \addplot[color = black, mark = *]
        coordinates {
            (3, 1.21)
            (4, 1.27)
            (5, 1.29)
            (6, 1.36)
            (7, 1.37)
            (8, 1.38)
            (9, 1.39)
        };

    \addplot[color = red, mark = x]
        coordinates {
            (3, 1.06)
            (4, 1.18)
            (5, 1.21)
            (6, 1.24)
            (7, 1.24)
            (8, 1.24)
            (9, 1.25)
        };
        
    \addplot[color = YellowOrange, mark = +]
        coordinates {
            (3, 1.24)
            (4, 1.13)
            (5, 1.18)
            (6, 1.27)
            (7, 1.28)
            (8, 1.29)
            (9, 1.32)
        };

    \addplot[color = Green, mark = o]
        coordinates {
            (3, 1.15)
            (4, 1.19)
            (5, 1.21)
            (6, 1.32)
            (7, 1.35)
            (8, 1.36)
            (9, 1.38)
        };
    
    \addplot[color = blue, mark = square*]
        coordinates {
            (3, 1.13)
            (4, 1.09)
            (5, 1.11)
            (6, 1.14)
            (7, 1.14)
            (8, 1.15)
            (9, 1.16)
        };
    
    \addplot[color = Orchid, mark = triangle*]
        coordinates {
            (3, 1.00)
            (4, 1.10)
            (5, 1.20)
            (6, 1.32)
            (7, 1.34)
            (8, 1.35)
            (9, 1.38)
        };
    \end{axis}
\end{tikzpicture}
\ref{leg:random}
\caption{Average utilitarian and egalitarian prices under the uniform distribution on $[0,1]$}
\label{fig:uniform}
\end{figure*}

In this section, we complement our theoretical worst-case analysis with experimental average-case results.
For each $n \in \{2, 3\}$ and $m \in \{n, n+1, \dots, 9\}$, we generated $10{,}000$ instances by drawing each agent $i$'s utility $u_i(g)$ for each good $g$ independently from a uniform distribution.\footnote{We also ran the experiments using exponential and log-normal distributions, as well as real-world data from Spliddit~\citep{GoldmanPr14,Shah17}; see \Cref{fig:exponential,fig:lognormal,fig:spliddit} in Appendix~\ref{ap:experiments}.\label{foot:experiments}}
We then normalized the utilities by dividing $u_i(g)$ with $u_i(M)$, so that each agent's utilities for all the goods sum up to $1$.
For each instance and each of the six heuristics, we computed the ECE allocation deterministically---breaking any ties by choosing the lowest-index unenvied agent and, subject to that, the lowest-index available good---and calculated the utilitarian and egalitarian prices based on this allocation.
We then obtained the ``average'' utilitarian (resp.~egalitarian) price\footnote{Since we consider a specific allocation, we omit ``strong'' from the price terminology.} of each heuristic over the $10{,}000$ instances, by finding the \emph{harmonic mean}, which is defined as the reciprocal of the arithmetic mean of the reciprocals.
The harmonic mean is more suitable than the arithmetic mean for our purposes, since our numbers are \emph{ratios}---the values of which can be arbitrarily high for the egalitarian price---between the optimal welfare and the welfare obtained using each heuristic.
While the arithmetic mean can be heavily skewed by a handful of instances with particularly high egalitarian prices, the effect of such instances on the harmonic mean is limited, as the  inverse of the price is bounded in $[0, 1]$.
The obtained values are presented in \Cref{fig:uniform}. 

As one may expect, the heuristic that chooses an unenvied agent~$i$ and an unallocated good~$g$ with the highest utility~$u_i(g)$ in every round has the lowest average utilitarian price for each $n \in \{2, 3\}$ and $m \in \{n, \dots, 9\}$, as well as the lowest average egalitarian price when $n = 2$ and $m \in \{2, \dots, 9\}$.
More interestingly, this heuristic also has the lowest average egalitarian price when $n = 3$ and $m \in \{4, \dots, 9\}$, despite its strong egalitarian price of $\infty$ in the case of three agents.
In general, the heuristics that optimize the welfare in every round (even if only for the agents or only for the goods) fare better in terms of the average welfare than those optimizing only in the early rounds (such as the max-min matching heuristic), in spite of the fact that the former heuristics have similar prices to the vanilla ECE algorithm.
Further discussion of the experimental results is deferred to Appendix~\ref{ap:experiments}.

\section{Conclusion}
\label{sec:conclusion}

In this paper, we have explored various heuristics for the envy cycle elimination (ECE) algorithm, widely regarded as one of the most important algorithms in fair division.
We demonstrated that it is possible to improve the welfare guarantees of the algorithm via carefully chosen heuristics.
In particular, the greedy heuristic that chooses a pair of good and agent maximizing the utility in each round  outperforms the vanilla algorithm in terms of the strong utilitarian price and, for two agents, the strong egalitarian price.
Moreover, our experiments indicate that this heuristic also has the best average-case performance, and that our other proposed heuristics significantly outperform the vanilla algorithm despite some having comparable strong prices.

A natural follow-up would be to characterize the ECE heuristics that provide the optimal strong utilitarian and egalitarian prices, subject to polynomial-time computation.
In particular, when there are at least three agents, it remains open whether there exist heuristics that improve upon the vanilla ECE algorithm in terms of both prices simultaneously.
Another possible direction is to investigate ECE heuristics determining which envy cycles to resolve. An example of such heuristics is the top-trading ECE algorithm used to obtain EF1 \emph{chore} allocation~\citep{BhaskarSrVa21}.

\bibliographystyle{plainnat}
\bibliography{arxiv}

\begin{thebibliography}{21}
\providecommand{\natexlab}[1]{#1}
\providecommand{\url}[1]{\texttt{#1}}
\expandafter\ifx\csname urlstyle\endcsname\relax
  \providecommand{\doi}[1]{doi: #1}\else
  \providecommand{\doi}{doi: \begingroup \urlstyle{rm}\Url}\fi

\bibitem[Amanatidis et~al.(2023)Amanatidis, Aziz, Birmpas, Filos-Ratsikas, Li, Moulin, Voudouris, and Wu]{AmanatidisAzBi23}
Georgios Amanatidis, Haris Aziz, Georgios Birmpas, Aris Filos-Ratsikas, Bo~Li, Hervé Moulin, Alexandros~A. Voudouris, and Xiaowei Wu.
\newblock Fair division of indivisible goods: Recent progress and open questions.
\newblock \emph{Artificial Intelligence}, 322:\penalty0 103965, 2023.

\bibitem[Amanatidis et~al.(2026)Amanatidis, Birmpas, and Reiffenh\"{a}user]{AmanatidisBiRe26}
Georgios Amanatidis, Georgios Birmpas, and Rebecca Reiffenh\"{a}user.
\newblock Envy cycle elimination with strategic agents: Best responses and fairness guarantees.
\newblock \emph{CoRR}, abs/2605.31253, 2026.

\bibitem[Asadpour and Saberi(2010)]{AsadpourSa10}
Arash Asadpour and Amin Saberi.
\newblock An approximation for max-min fair allocation of indivisible goods.
\newblock \emph{SIAM Journal on Computing}, 39\penalty0 (7):\penalty0 2970--2989, 2010.

\bibitem[Barman and Krishnamurthy(2020)]{BarmanKr20}
Siddharth Barman and Sanath~Kumar Krishnamurthy.
\newblock Approximation algorithms for maximin fair division.
\newblock \emph{ACM Transactions on Economics and Computation}, 8\penalty0 (1):\penalty0 5:1--5:28, 2020.

\bibitem[Bei et~al.(2021)Bei, Lu, Manurangsi, and Suksompong]{BeiLuMa21}
Xiaohui Bei, Xinhang Lu, Pasin Manurangsi, and Warut Suksompong.
\newblock The price of fairness for indivisible goods.
\newblock \emph{Theory of Computing Systems}, 65\penalty0 (7):\penalty0 1069--1093, 2021.

\bibitem[Bertsimas et~al.(2011)Bertsimas, Farias, and Trichakis]{BertsimasFaTr11}
Dimitris Bertsimas, Vivek Farias, and Nikolaos Trichakis.
\newblock The price of fairness.
\newblock \emph{Operations Research}, 59\penalty0 (1):\penalty0 17--31, 2011.

\bibitem[Bez\'{a}kov\'{a} and Dani(2005)]{BezakovaDa05}
Ivona Bez\'{a}kov\'{a} and Varsha Dani.
\newblock Allocating indivisible goods.
\newblock \emph{ACM SIGecom Exchanges}, 5\penalty0 (3):\penalty0 11--18, 2005.
\newblock Extended version available at \url{https://newtraell.cs.uchicago.edu/files/tr_authentic/TR-2004-10.pdf}.

\bibitem[Bhaskar et~al.(2021)Bhaskar, Sricharan, and Vaish]{BhaskarSrVa21}
Umang Bhaskar, A.~R. Sricharan, and Rohit Vaish.
\newblock On approximate envy-freeness for indivisible chores and mixed resources.
\newblock In \emph{Proceedings of the 24th International Conference on Approximation Algorithms for Combinatorial Optimization Problems (APPROX)}, pages 1:1--1:23, 2021.

\bibitem[Brams et~al.(2003)Brams, Edelman, and Fishburn]{BramsEdFi03}
Steven~J. Brams, Paul~H. Edelman, and Peter~C. Fishburn.
\newblock Fair division of indivisible items.
\newblock \emph{Theory and Decision}, 55\penalty0 (2):\penalty0 147--180, 2003.

\bibitem[Caragiannis et~al.(2012)Caragiannis, Kaklamanis, Kanellopoulos, and Kyropoulou]{CaragiannisKaKa12}
Ioannis Caragiannis, Christos Kaklamanis, Panagiotis Kanellopoulos, and Maria Kyropoulou.
\newblock The efficiency of fair division.
\newblock \emph{Theory of Computing Systems}, 50\penalty0 (4):\penalty0 589--610, 2012.

\bibitem[Celine et~al.(2023)Celine, Dzulfikar, and Koswara]{CelineDzKo23}
Karen~Frilya Celine, Muhammad~Ayaz Dzulfikar, and Ivan~Adrian Koswara.
\newblock Egalitarian price of fairness for indivisible goods.
\newblock In \emph{Proceedings of the 20th Pacific Rim International Conference on Artificial Intelligence (PRICAI)}, pages 23--28, 2023.

\bibitem[Demko and Hill(1988)]{DemkoHi88}
Stephen Demko and Theodore~P. Hill.
\newblock Equitable distribution of indivisible objects.
\newblock \emph{Mathematical Social Sciences}, 16\penalty0 (2):\penalty0 145--158, 1988.

\bibitem[Feldman et~al.(2024)Feldman, Mauras, Narayan, and Ponitka]{FeldmanMaNa24}
Michal Feldman, Simon Mauras, Vishnu~V. Narayan, and Tomasz Ponitka.
\newblock Breaking the envy cycle: Best-of-both-worlds guarantees for subadditive valuations.
\newblock In \emph{Proceedings of the 25th ACM Conference on Economics and Computation (EC)}, pages 1236--1266, 2024.

\bibitem[Goldman and Procaccia(2014)]{GoldmanPr14}
Jonathan Goldman and Ariel~D. Procaccia.
\newblock Spliddit: Unleashing fair division algorithms.
\newblock \emph{{ACM} {SIG}ecom Exchanges}, 13\penalty0 (2):\penalty0 41--46, 2014.

\bibitem[Golovin(2005)]{Golovin05}
Daniel Golovin.
\newblock Max-min fair allocation of indivisible goods.
\newblock SCS Technical Report Collection CMU-CS-05-144, Carnegie Mellon University, 2005.

\bibitem[Lipton et~al.(2004)Lipton, Markakis, Mossel, and Saberi]{LiptonMaMo04}
Richard~J. Lipton, Evangelos Markakis, Elchanan Mossel, and Amin Saberi.
\newblock On approximately fair allocations of indivisible goods.
\newblock In \emph{Proceedings of the 5th ACM Conference on Electronic Commerce (EC)}, pages 125--131, 2004.

\bibitem[Markakis and Santorinaios(2023)]{MarkakisSa23}
Evangelos Markakis and Christodoulos Santorinaios.
\newblock Improved {EFX} approximation guarantees under ordinal-based assumptions.
\newblock In \emph{Proceedings of the 22nd International Conference on Autonomous Agents and Multiagent Systems (AAMAS)}, pages 591--599, 2023.

\bibitem[Moulin(2003)]{Moulin03}
Herv\'{e} Moulin.
\newblock \emph{Fair Division and Collective Welfare}.
\newblock MIT Press, 2003.

\bibitem[Moulin(2019)]{Moulin19}
Herv\'{e} Moulin.
\newblock Fair division in the internet age.
\newblock \emph{Annual Review of Economics}, 11:\penalty0 407--441, 2019.

\bibitem[Plaut and Roughgarden(2020)]{PlautRoughgarden20}
Benjamin Plaut and Tim Roughgarden.
\newblock Almost envy-freeness with general valuations.
\newblock \emph{SIAM Journal on Discrete Mathematics}, 34\penalty0 (2):\penalty0 1039--1068, 2020.

\bibitem[Shah(2017)]{Shah17}
Nisarg Shah.
\newblock Spliddit: Two years of making the world fairer.
\newblock \emph{XRDS: Crossroads, The ACM Magazine for Students}, 24\penalty0 (1):\penalty0 24--28, 2017.

\end{thebibliography}

\appendix

\section{Beyond additive utilities}
\label{ap:nonadditive}

While we mainly focus on additive utilities in this paper, the ECE algorithm itself guarantees EF1 for the more general class of \emph{monotone} utilities~\citep{LiptonMaMo04}.
In this section, we shall discuss some results under utilities that are monotone but not necessarily additive.
In particular, we show that optimizing the welfare guarantee of the ECE algorithm under monotone utilities is infeasible, in the sense that for any $\epsilon > 0$, there exists an instance with monotone utilities where all ECE allocations have a utilitarian welfare of at most $\epsilon$ times the optimum.
Likewise, there exists a monotone instance with a positive maximum egalitarian welfare, where all ECE allocations have an egalitarian welfare of~$0$.
On the other hand, for the class of subadditive utilities, which lies between additive and monotone utilities, most of our results for additive utilities continue to hold.

We now define relevant concepts in the context of monotone utilities.
An instance has \emph{monotone utilities} if the utility function for each agent $i \in N$ satisfies $u_i(S) \leq u_i(T)$ for any subsets $S \subseteq T \subseteq M$.
In this section, we assume that the instance has monotone utilities, and for each $i \in N$, it holds that $u_i(\emptyset) = 0$ and $u_i(M) = 1$.
If the utility function for each agent $i \in N$ also satisfies $u_i(S \cup T) \leq u_i(S) + u_i(T)$ for any disjoint subsets $S, T \subseteq M$, we say that the utilities are \emph{subadditive}.
Note that the class of subadditive utilities includes the class of additive utilities, by replacing the inequality relation with equality.
Given a bundle $S \subseteq M$, the \emph{marginal utility} of good~$g$ for agent~$i$ with respect to $S$ is the difference $u_i(S \cup \{g\}) - u_i(S)$.
In particular, when $S = \emptyset$, the marginal utility of good~$g$ for agent~$i$ with respect to $S$ is exactly $u_i(\emptyset \cup \{g\}) - u_i(\emptyset) = u_i(\{g\})$.
We can then redefine our greedy heuristics to instead choose an unenvied agent~$i$, an unallocated good~$g$, or both, that maximize the marginal utility $u_i(A'_i \cup \{g\}) - u_i(A'_i)$, where $A'_i$ is the bundle currently owned by agent~$i$.
Furthermore, the max-min matching heuristic under non-additive utilities is defined to be the heuristic that allocates good $g'_i$ to agent $i$ in the $i$-th round for each $i \in N$, such that $\min_{i \in N} u_i(\{g'_i\})$ is maximized.\footnote{Note that each agent $i$ has an empty bundle prior to round $i$.
Hence, $u_i(\{g'_i\}) = u_i(\emptyset \cup \{g'_i\}) - u_i(\emptyset)$ is the marginal utility of good $g'_i$ to agent $i$ during round $i$.}

\subsection{Monotone utilities}
\label{subsec:monotone}

In the following two propositions, we show that the weak utilitarian and egalitarian prices of ECE are $\infty$ under monotone utilities.
This implies that for any ECE heuristic, the strong utilitarian and egalitarian prices under monotone utilities are also $\infty$.

\begin{proposition}
Under monotone utilities, the weak utilitarian price of EF1, and thus the weak utilitarian price of ECE, is $\infty$.
\end{proposition}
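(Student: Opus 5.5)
We will construct, for each $\epsilon > 0$, an instance with monotone utilities in which every EF1 allocation has utilitarian welfare at most $\epsilon$ times the optimum. Since every ECE allocation is EF1 under monotone utilities \citep{LiptonMaMo04}, the best ECE allocation of that instance is at least as bad. Hence the weak utilitarian price of ECE is also $\infty$.

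The key is that monotone utilities can be made ``all-or-nothing''. Take $n = 2$ agents and a large number $m = 2k$ of goods, split into two halves $S$ and $T$ with $|S| = |T| = k$. Let agent~$1$ have $u_1(X) = 1$ if $S \subseteq X$ and $u_1(X) = \delta$ for $X \ne \emptyset$ otherwise. Let agent~$2$ have $u_2(X) = 1$ if $T \subseteq X$, $u_2(X)=\delta$ for $X\neq\emptyset$ otherwise. Set $u_i(\emptyset)=0$. These functions are monotone and normalized, and giving $S$ to agent~$1$ and $T$ to agent~$2$ yields welfare $2$.

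Next we show that no EF1 allocation gives either agent utility $1$ unless its welfare is small. Suppose agent~$1$ receives a bundle $A_1 \supseteq S$. Then $A_2 \subseteq T$. If $A_2 = T$, agent~$2$ gets $1$; but then agent~$1$'s bundle is exactly $S$, and agent~$2$ is fine. So the danger must come from the opposite direction. Agent~$1$ values $A_1=S$ at $1$ and $A_2 = T$ at $\delta$, so no envy arises there. The construction therefore needs adjusting so that EF1 forbids the efficient split.

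I will revise as follows. Make $u_1(X)=1$ iff $X \supseteq S$, and let agent~$2$ value any nonempty set at $1$ if $|X|\ge 1$ and it meets $S$..., with the requirement that agent~$2$ envies agent~$1$'s bundle $S$ by more than one good. Concretely, let $u_2(X) = |X\cap S|/k$, scaled with a tiny weight on $T$ so that $u_2(M)=1$. Then any allocation with $A_1 \supseteq S$ gives agent~$2$ utility at most $\approx 0$, while $u_2(A_1 \setminus \{g\}) \ge (k-1)/k$, so EF1 fails for $k\ge 2$. Consequently every EF1 allocation has $S \not\subseteq A_1$, so $u_1(A_1) \le \delta$.

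For the welfare comparison, any EF1 allocation has welfare at most $\delta + u_2(A_2) \le \delta + 1$. The optimum is $1 + u_2(T)$, where $u_2(T)$ is tiny, so this only gives ratio about $1$. A single agent's loss is therefore not enough, and the main obstacle is making \emph{every} agent lose. To handle this, I would use $n$ agents and make the construction symmetric and cyclic, so that EF1 forces each agent's all-or-nothing set to be split. Let agent~$i$ value only $S_i$ (all-or-nothing, tiny $\delta$ otherwise), and let agent~$i+1$ additively value $S_i$, with normalization handled by tiny weights elsewhere. This pattern cannot be fully cyclic, since each agent has only one utility function. Instead, I would give agent~$i$ a function $u_i(X) = \max\{\mathbb{1}[S_i\subseteq X],\ |X\cap S_{i-1}|/k \cdot \delta'\}$, chosen so that the envy threshold still kills the efficient allocation while positive-value comparisons stay small. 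Tuning these parameters so that EF1 simultaneously forbids every agent from completing her set is the step I expect to require the most care. I would first try $n=2$ with the max-form utilities and check that the ratio $(2)/(O(\delta))$ blows up as $\delta\to 0$.
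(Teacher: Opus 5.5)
Your reduction is right: every ECE allocation is EF1 under monotone utilities, so it suffices to find instances in which every EF1 allocation has welfare at most an $\epsilon$ fraction of the optimum. The gap is that the proposal never produces such an instance. You correctly discard your first two constructions, and the final one is left as an open ``tuning'' step. In fact that family cannot be tuned at all.

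Normalization $u_i(M)=1$ forces $\delta'\le 1$, so each $u_i$ is at most $1$ and equals $1$ on every bundle containing $S_i$. With the $S_i$ disjoint, as you intend, consider the allocation that gives each agent $i$ the set $S_i$ and gives leftover goods to anyone. Every agent then gets utility $1=u_i(M)$, so nobody envies anybody. This allocation is EF1 with welfare $n$, and the price of such an instance is $1$. The $n=2$ check you plan would reveal exactly this. EF1 only excludes an allocation when some agent has strictly less than her value for another bundle minus a good. So as long as the agents' high-value sets can be satisfied simultaneously, EF1 rules out nothing that matters.

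The missing idea is to make all agents compete for the \emph{same} all-or-nothing set and to take that set to be all of $M$. Then any agent with non-negligible utility leaves some other agent with nothing. The paper uses identical utilities with $m\ge 2$: $u_i(S)=1$ if $S=M$, $u_i(S)=\epsilon$ if $\emptyset\subsetneq S\subsetneq M$, and $u_i(\emptyset)=0$.
\begin{itemize}
\item If some agent has utility above $\epsilon$, she holds all of $M$, so some agent $j$ has $A_j=\emptyset$.
\item Then $u_j(M\setminus\{g\})=\epsilon>0=u_j(A_j)$ for every $g$, violating EF1.
\item Hence every EF1 allocation has welfare at most $n\epsilon$, whereas giving $M$ to one agent yields welfare $1$.
\item So the weak price is at least $1/(n\epsilon)\to\infty$ as $\epsilon\to 0$.
\end{itemize}

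Using all of $M$ matters. Suppose instead, for $n=2$, that both agents value exactly the supersets of some $S\subsetneq M$ at $1$ and other nonempty sets at $\epsilon$. Giving $S$ to one agent and $M\setminus S$ to the other leaves the latter at $\epsilon$. That is at least her value for $S$ minus any good, so this allocation is EF1 and already welfare-optimal.
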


\begin{proof}
Consider any instance with $n \geq 2$ agents, $m \geq 2$ goods, and the following utilities:
Let $0 < \epsilon < 1$.
For each agent $i \in N$,
\begin{align*}
    u_i(S) = \begin{cases}
        1 &\text{if $S = M$,} \\
        \epsilon &\text{if $\emptyset \subsetneq S \subsetneq M$, and} \\
        0 &\text{if $S = \emptyset$.}
    \end{cases}
\end{align*}

Now suppose that some agent $i \in N$ has a utility of strictly more than $\epsilon$.
Then, she must be allocated the whole bundle of goods $M$.
This means that some agent $j \in N$ owns an empty bundle.
For any good $g \in M$, we have $u_j(A_i \setminus \{g\}) = \epsilon > 0 = u_j(A_j)$.
In other words, agent $j$ is not EF1 toward agent $i$.
By contraposition, it follows that in any EF1 allocation, every agent has a utility of at most~$\epsilon$.
Therefore, the maximum utilitarian welfare of any EF1 allocation is $n\epsilon$.
Since it is possible to obtain a utilitarian welfare of $1$ by allocating the entire bundle of goods $M$ to a single agent, the weak utilitarian price of EF1 for this instance is at least $1/(n\epsilon)$, which approaches $\infty$ as $\epsilon \to 0$.
Furthermore, since any allocation obtained by the ECE algorithm is necessarily EF1 under monotone utilities, the weak utilitarian price of ECE is also $\infty$.
\end{proof}

\begin{proposition}
Under monotone utilities, the weak egalitarian price of EF1, and thus the weak egalitarian price of ECE, is $\infty$.
\end{proposition}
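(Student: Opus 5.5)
We will show that the weak egalitarian price of EF1 is $\infty$ under monotone utilities by exhibiting one instance. In this instance some allocation has positive egalitarian welfare, while every EF1 allocation gives some agent utility $0$. The second claim then follows at once: every ECE allocation is EF1 under monotone utilities, so the ECE allocations form a subset of the EF1 allocations, and the maximum egalitarian welfare over them is also $0$.

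The construction reuses the threshold idea from the previous proposition, but adapted so that a positive egalitarian optimum exists. Take $n=2$ agents and $m=4$ goods $g_1,\dots,g_4$. Split the goods into the pairs $P=\{g_1,g_2\}$ and $Q=\{g_3,g_4\}$. Give both agents the same utility function:
\[
u(S)=\begin{cases}1 & \text{if } S=M,\\ \epsilon & \text{if } S\supseteq P \text{ or } S\supseteq Q \text{ (and } S\neq M),\\ 0 & \text{otherwise,}\end{cases}
\]
with $0<\epsilon<1$. This function is monotone, since the condition ``contains $P$ or contains $Q$'' is upward closed. Moreover $u(\emptyset)=0$ and $u(M)=1$.

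Giving $P$ to agent~$1$ and $Q$ to agent~$2$ yields egalitarian welfare $\epsilon>0$, so the optimum is positive. Now suppose an allocation gives both agents positive utility. Each bundle must then contain $P$ or $Q$. Since the bundles are disjoint and there are only four goods, one agent holds exactly $P$ and the other holds exactly $Q$, and this allocation is EF1. So this attempt does not separate EF1 from the optimum. The construction must therefore force the positive-utility allocations to be lopsided.

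The fix is to let the two agents have different structure. We will find the right utilities by iterating on small cases, as follows:
\begin{itemize}
\item[(i)] Choose utilities so that agent~$2$ is positive only on bundles that are ``large'' for agent~$1$.
\item[(ii)] Choose utilities so that agent~$1$ is positive only on bundles that contain a specific set.
\end{itemize}
Concretely, take $n=2$, $m=3$, and set $u_1(S)>0$ if and only if $g_1\in S$. Let $u_2(S)>0$ if and only if $\{g_2,g_3\}\subseteq S$; give $\{g_2,g_3\}$ utility $\epsilon$ and $M$ utility $1$ for agent~$2$. Agent~$1$'s utility is $\epsilon$ on sets containing $g_1$ other than $M$, and $1$ on $M$. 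The allocation $(\{g_1\},\{g_2,g_3\})$ gives egalitarian welfare $\epsilon$.

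We then check EF1 for this allocation. Agent~$1$ values $\{g_2,g_3\}$ at $0$, so she has no envy. Agent~$2$ values $\{g_1\}$ at $0$, so she has no envy either. The allocation is therefore EF1 with positive welfare, and this instance fails as well.

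This shows the main obstacle. Whenever the positive-utility requirements are met by a tight disjoint partition, that partition tends to be envy-free. The construction must instead make every positive partition leave one agent valuing the other's bundle, minus any single good, above her own.

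The plan is to use $n=3$ agents with a cyclic structure. Agent~$3$ should need a large bundle, and agents~$1$ and $2$ should each value any nonempty bundle at $\epsilon$ unless it meets a specific requirement. The aim is that any positive allocation forces some bundle of size at least $2$ that another agent values positively even after one good is removed, while that agent's own value is only $\epsilon$ versus a higher tier.

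The utilities will use three tiers: $0$, $\epsilon$ and $2\epsilon$. The key verification is a finite case analysis over the positive-welfare allocations. For each such allocation we must exhibit an EF1 violation, while confirming that at least one allocation has positive welfare. This case analysis is the step we expect to require the most care, and we would settle the exact utilities by checking it exhaustively on small $m$.
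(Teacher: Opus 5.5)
Your first paragraph is fine: every ECE allocation is EF1 under monotone utilities, so an instance in which all EF1 allocations have egalitarian welfare $0$ settles both claims. The gap is that the proposal never produces such an instance, and that instance is the entire content of the proof. You correctly observe that both concrete candidates fail. Your final plan (three agents, a ``cyclic structure,'' tiers $0,\epsilon,2\epsilon$) specifies no utility functions, and it defers the decisive case analysis to an unspecified exhaustive search. As written there is nothing to check, so the statement is not proved.

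The missing idea is to make the positive-welfare allocations lopsided, as you suspected. Pair a \emph{threshold} agent with a \emph{graded} agent on the same goods, and set the threshold high enough that deleting one good does not rescue the graded agent. In both of your attempts, the other agent valued the threshold bundle no more than her own. Moreover, with pair thresholds, removing one good destroys the threshold value anyway.

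Instead take $n=2$, $m=4$, $u_1(S)=|S|/4$, and $u_2(S)=1$ if $|S|\geq 3$ and $u_2(S)=0$ otherwise. Both are monotone and normalized, and $(\{g_1\},\{g_2,g_3,g_4\})$ has egalitarian welfare $1/4$. In any allocation with positive egalitarian welfare, $|A_2|\geq 3$, so $u_1(A_1)\leq 1/4$. But $u_1(A_2\setminus\{g\})\geq 1/2$ for every $g\in A_2$, so agent~$1$ is not EF1 towards agent~$2$. Hence every EF1 allocation has egalitarian welfare $0$.

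Since prices are measured as functions of $n$, you should also cover every $n\geq 2$. Add goods $g_5,\dots,g_{n+2}$ and let each agent $i\geq 3$ value only $g_{i+2}$. Replace $|S|$ by $|S\cap\{g_1,\dots,g_4\}|$ in $u_1$ and $u_2$. The violation between agents $1$ and $2$ is unaffected. This is exactly the paper's construction, and it shows that neither a third agent nor a three-tier utility scale is needed.
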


\begin{proof}
Consider any instance with $n \geq 2$ agents, $m = n+2$ goods, and the following utilities:
\begin{itemize}
    \item $u_1(S) = |S \cap \{g_1, g_2, g_3, g_4\}|/4$.
    \item $u_2(S) = 1$ if $|S \cap \{g_1, g_2, g_3, g_4\}| \geq 3$, and $u_2(S) = 0$ otherwise.
    \item For each agent $i \in N \setminus \{1, 2\}$, $u_i(S) = 1$ if $g_{i+2} \in S$, and $u_i(S) = 0$ otherwise.
\end{itemize}

Let $\mathcal{A} = (A_1, \dots, A_n)$ be an allocation with positive egalitarian welfare.
In particular, $u_2(A_2) > 0$, which means that $|A_2 \cap \{g_1, g_2, g_3, g_4\}| \geq 3$.
This implies that $|A_1 \cap \{g_1, g_2, g_3, g_4\}| \leq 4-3 = 1$, and for any $g \in A_2$, it holds that $|(A_2 \setminus \{g\}) \cap \{g_1, g_2, g_3, g_4\}| \geq 3-1 = 2$.
Consequently, we have $u_1(A_1) \leq 1/4$, whereas $u_1(A_2 \setminus \{g\}) \geq 2/4$ for any $g \in A_2$.
Therefore, agent $1$ is not EF1 towards agent $2$.
It follows that any EF1 allocation cannot have a positive egalitarian welfare.

On the other hand, consider the allocation where agent $1$ receives good $g_1$, agent $2$ receives goods $g_2$, $g_3$, $g_4$, while each agent $i \in N \setminus \{1, 2\}$ receives good $g_{i+2}$.
In this allocation, agent $1$ receives a utility of $1/4$ and each of the remaining agents receives a utility of $1$.
Hence, the maximum egalitarian welfare of this instance is at least $1/4$.
Therefore, the weak egalitarian price of EF1 is $(1/4)/0 = \infty$.
\end{proof}

\subsection{Subadditive utilities}
\label{subsec:subadditive}

Since the general class of monotone utilities prove to be intractable, we next turn our attention to the class of \emph{subadditive} utilities.
We remark that most of our upper bound proofs still hold even under subadditive utilities,
the only exceptions being the upper bound proofs for the weak prices of the vanilla ECE algorithm (\cref{prop:weak_price}) and the strong egalitarian price of the heuristic choosing an unenvied agent and an unallocated good maximizing the utility when $n = 2$ (\cref{thm:agent_good_with_max_util}(b)).

For the weak prices of the vanilla ECE algorithm, the proof of \cref{prop:weak_price} does not apply to subadditive utilities because it uses the weak prices of the round-robin algorithm (\cref{lem:weak_price_rr}), which guarantees EF1 \emph{only} for additive utilities.
However, since the proof of \cref{thm:agent_good_with_max_util_first_round_only} works under subadditive utilities, it follows that the weak utilitarian price of the vanilla ECE algorithm under subadditive utilities is strictly less than $n + \sqrt{n/2} - 1/5$, and therefore is in $O(n)$.
Similarly, by \cref{thm:max_min_matching}, the weak egalitarian price of the vanilla ECE algorithm under subadditive utilities is at most $m-n+1$.

On the other hand, for the heuristic choosing an unenvied agent and an unallocated good maximizing the utility in the case where $n = 2$, we show via the following example that the strong egalitarian price under subadditive utilities is $\infty$.
This is much higher than in the additive case, where the strong egalitarian price is at most $3$ (\cref{thm:agent_good_with_max_util}(b)).

\begin{example}
Consider an instance with $n = 2$ agents and $m = 2$ goods such that agent~$1$ values any non-empty set at~$1$, while agent~$2$ values goods~$g_1$ and~$g_2$ at~$1-\epsilon$ and~$\epsilon$ respectively, where $0 < \epsilon < 1/2$. Furthermore, we assume that each agent values the empty set at $0$, and the bundle of all goods $M$ at~$1$.
In particular, note that agent~$2$'s utility is additive, while agent~$1$'s utility is non-additive.

We now consider the following allocation obtained using the ECE algorithm with the heuristic choosing an unenvied agent and an unallocated good maximizing the utility.
In the first round, the heuristic must choose agent~$1$, but can choose either good;
suppose that it chooses agent~$1$ and good~$g_1$.
Then, in the second round, the algorithm must allocate good~$g_2$ to agent~$2$.
Agent~$1$ does not envy agent~$2$, so there are no bundle exchanges involved.
Hence, the egalitarian welfare of this allocation is $\epsilon$, which is the utility of agent~$2$.

On the other hand, consider the allocation where the bundles are swapped---that is, agent~$1$ is allocated good~$g_2$, while agent~$2$ is allocated good~$g_1$.
Then, each agent receives a utility of at least $1-\epsilon$.
Therefore, the egalitarian price of this heuristic for this instance is $(1-\epsilon)/\epsilon$, which approaches $\infty$ as $\epsilon \to 0$.
\end{example}

\section{Omitted examples}
\label{ap:examples}

In the following example, we demonstrate the significance of the last envy cycle elimination step (at the end of round~$m$) of the ECE algorithm.
In particular, we show that without this step, it is possible that the utilitarian welfare of the final allocation is $0$.
By contrast, we prove in \cref{lem:no_envy_cycle} that executing the last envy cycle elimination step guarantees a utilitarian welfare of at least $1/n$.

\begin{example}
\label{ex:ec_strong_price}
Consider an instance with $n$ agents and $m = n$~goods such that for each agent $i \in N$, $u_i(g_i) = 1$ and $u_i(g_j) = 0$ if $j \neq i$.
Suppose that the ECE algorithm allocates good~$g_r$ to agent~$r+1$ in each round~$r \in \{1, \dots, n-1\}$, and allocates good~$g_n$ to agent~$1$ in round~$n$.
After each round~$r \in \{1, \dots, n-1\}$, agent~$r+1$ is envied by agent~$r$.
Furthermore, after round~$n$, agent~$1$ is envied by agent~$n$.
Hence, an envy cycle is created only after the allocation of the last good.

If we use the original version of the ECE algorithm \citep{LiptonMaMo04}, which terminates immediately after the last good is allocated, then every agent will have a utility of~$0$, resulting in a utilitarian welfare of~$0$.
On the other hand, our version of the ECE algorithm (\cref{alg:ece}) includes an additional envy cycle elimination step after the allocation of the last good.
This results in good~$g_r$ being given to agent~$r$ for each $r\in\{1,\dots,n\}$ in the final allocation.
Consequently, each agent's utility increases to~$1$, and the utilitarian welfare increases to~$n$.
\end{example}

The next example shows that the ECE algorithm is not always capable of producing an EF1 allocation with the optimal utilitarian welfare, regardless of the choices that it makes.

\begin{example}
\label{ex:ef1_not_ece}
Let $0 < \epsilon < 1/6$. Consider an instance with $3$ agents, $4$ goods, and the following utilities:

\begin{center}
\begin{tabular}{c|cccc}
    $g$      & $g_1$ & $g_2$ & $g_3$ & $g_4$ \\
    \hline
    $u_1(g)$ & $1/3$ & $0$ & $1/3+\epsilon$ & $1/3-\epsilon$ \\
    $u_2(g)$ & $0$ & $1/3$ & $1/3-\epsilon$ & $1/3+\epsilon$ \\
    $u_3(g)$ & $0$ & $0$ & $1/2$ & $1/2$
\end{tabular}
\end{center}
Note that $\mathcal{A} = (\{g_1\}, \{g_2\}, \{g_3, g_4\})$ is the unique allocation that maximizes the utilitarian welfare.
Moreover, this allocation is EF1.

Suppose, for the sake of contradiction, that allocation~$\mathcal{A}$ can be obtained through the ECE algorithm.
Consider the last good~$g$ to be added to the bundle $\{g_3, g_4\}$, and the round~$r$ in which this happens.
By definition of the ECE algorithm, the owner of the bundle~$\{g_3, g_4\} \setminus \{g\}$ must be unenvied at the beginning of round~$r$.
That is, every agent must value her current bundle at least as much as $\{g_3, g_4\} \setminus \{g\}$.

We distinguish two cases.
First, suppose that $g = g_3$.
Then, agent $2$ must have utility at least $u_2(g_4) = 1/3 + \epsilon$ at the beginning of round $r$.
Since agent $2$ ends up with a utility of $u_2(g_2) = 1/3$ at the end of the algorithm, this contradicts the fact that the ECE algorithm never decreases an agent's utility.
On the other hand, if $g = g_4$, then agent $1$ must have utility at least $u_1(g_3) = 1/3 + \epsilon$ at the beginning of round $r$, leading to a similar contradiction as agent $1$ ends up with a utility of $u_1(g_1) = 1/3$.
\end{example}

\section{Experiments (cont.)}
\label{ap:experiments}
\begin{figure*}[t]
\centering
\fontsize{6}{6}\selectfont
\tikzset{every mark/.append style={scale=0.5}}
\begin{tikzpicture}[scale=1.12]
    \begin{axis}[
        name = exp2util,
        height = 4cm, width = 4cm,
        align = center,
        title = \fontsize{7}{6}\selectfont{Utilitarian price with $2$ agents},
        xlabel = {Number of goods},
        ylabel = {Utilitarian price},
        xmin = 2, xmax = 9, ymin = 1, ymax = 1.26,
        xtick = {2, 3, 4, 5, 6, 7, 8, 9},
        minor y tick num = 4,
        legend style = {
            at = {(0cm, -0.8cm)},
            anchor = north west,
            cells = {anchor = west},
            legend columns = 3,
            legend to name = {leg:exp},
            /tikz/every even column/.append style={column sep=0.2cm},
        }
    ]
    
    \addplot[color = black, mark = *]
        coordinates {
            (2, 1.07)
            (3, 1.12)
            (4, 1.15)
            (5, 1.18)
            (6, 1.20)
            (7, 1.22)
            (8, 1.23)
            (9, 1.25)
        };
        \addlegendentry{No heuristic;}

    \addplot[color = red, mark = x]
        coordinates {
            (2, 1.00)
            (3, 1.05)
            (4, 1.06)
            (5, 1.06)
            (6, 1.06)
            (7, 1.06)
            (8, 1.06)
            (9, 1.06)
        };
        \addlegendentry{Agent with maximum utility;}
        
    \addplot[color = YellowOrange, mark = +]
        coordinates {
            (2, 1.07)
            (3, 1.07)
            (4, 1.09)
            (5, 1.09)
            (6, 1.10)
            (7, 1.09)
            (8, 1.09)
            (9, 1.09)
        };
        \addlegendentry{Good with maximum utility;}

    \addplot[color = Green, mark = o]
        coordinates {
            (2, 1.00)
            (3, 1.06)
            (4, 1.11)
            (5, 1.14)
            (6, 1.17)
            (7, 1.20)
            (8, 1.22)
            (9, 1.24)
        };
        \addlegendentry{Agent \& good with maximum utility (first round only);}
    
    \addplot[color = blue, mark = square*]
        coordinates {
            (2, 1.00)
            (3, 1.02)
            (4, 1.02)
            (5, 1.01)
            (6, 1.01)
            (7, 1.01)
            (8, 1.00)
            (9, 1.00)
        };
        \addlegendentry{Agent \& good with maximum utility;}
    
    \addplot[color = Orchid, mark = triangle*]
        coordinates {
            (2, 1.00)
            (3, 1.05)
            (4, 1.10)
            (5, 1.13)
            (6, 1.15)
            (7, 1.18)
            (8, 1.19)
            (9, 1.21)
        };
        \addlegendentry{Max-min matching}
    \end{axis}

    \begin{axis}[
        name = exp2egal,
        at = {($(exp2util.east)+(1.05cm,0)$)},
        anchor = west,
        height = 4cm, width = 4cm,
        align = center,
        title = \fontsize{7}{6}\selectfont{Egalitarian price with $2$ agents},
        xlabel = {Number of goods},
        ylabel = {Egalitarian price},
        xmin = 2, xmax = 9, ymin = 1, ymax = 1.42,
        xtick = {2, 3, 4, 5, 6, 7, 8, 9},
        minor y tick num = 4,
    ]
    
    \addplot[color = black, mark = *]
        coordinates {
            (2, 1.14)
            (3, 1.23)
            (4, 1.29)
            (5, 1.34)
            (6, 1.36)
            (7, 1.38)
            (8, 1.40)
            (9, 1.41)
        };

    \addplot[color = red, mark = x]
        coordinates {
            (2, 1.00)
            (3, 1.10)
            (4, 1.12)
            (5, 1.13)
            (6, 1.13)
            (7, 1.14)
            (8, 1.13)
            (9, 1.13)
        };
        
    \addplot[color = YellowOrange, mark = +]
        coordinates {
            (2, 1.14)
            (3, 1.10)
            (4, 1.18)
            (5, 1.25)
            (6, 1.31)
            (7, 1.34)
            (8, 1.37)
            (9, 1.39)
        };

    \addplot[color = Green, mark = o]
        coordinates {
            (2, 1.00)
            (3, 1.08)
            (4, 1.18)
            (5, 1.27)
            (6, 1.33)
            (7, 1.36)
            (8, 1.39)
            (9, 1.41)
        };
    
    \addplot[color = blue, mark = square*]
        coordinates {
            (2, 1.00)
            (3, 1.02)
            (4, 1.04)
            (5, 1.05)
            (6, 1.06)
            (7, 1.07)
            (8, 1.07)
            (9, 1.07)
        };
    
    \addplot[color = Orchid, mark = triangle*]
        coordinates {
            (2, 1.00)
            (3, 1.07)
            (4, 1.18)
            (5, 1.27)
            (6, 1.32)
            (7, 1.36)
            (8, 1.38)
            (9, 1.40)
        };
    \end{axis}

    \begin{axis}[
        name = exp3util,
        at = {($(exp2egal.east)+(1.05cm,0)$)},
        anchor = west,
        height = 4cm, width = 4cm,
        align = center,
        title = \fontsize{7}{6}\selectfont{Utilitarian price with $3$ agents},
        xlabel = {Number of goods},
        ylabel = {Utilitarian price},
        xmin = 3, xmax = 9, ymin = 1, ymax = 1.32,
        xtick = {3, 4, 5, 6, 7, 8, 9},
        minor y tick num = 4,
    ]
    
    \addplot[color = black, mark = *]
        coordinates {
            (3, 1.13)
            (4, 1.17)
            (5, 1.21)
            (6, 1.24)
            (7, 1.27)
            (8, 1.29)
            (9, 1.31)
        };

    \addplot[color = red, mark = x]
        coordinates {
            (3, 1.04)
            (4, 1.08)
            (5, 1.09)
            (6, 1.10)
            (7, 1.10)
            (8, 1.10)
            (9, 1.10)
        };
        
    \addplot[color = YellowOrange, mark = +]
        coordinates {
            (3, 1.13)
            (4, 1.11)
            (5, 1.12)
            (6, 1.12)
            (7, 1.12)
            (8, 1.12)
            (9, 1.12)
        };

    \addplot[color = Green, mark = o]
        coordinates {
            (3, 1.06)
            (4, 1.11)
            (5, 1.15)
            (6, 1.19)
            (7, 1.23)
            (8, 1.26)
            (9, 1.28)
        };
    
    \addplot[color = blue, mark = square*]
        coordinates {
            (3, 1.03)
            (4, 1.03)
            (5, 1.03)
            (6, 1.02)
            (7, 1.02)
            (8, 1.02)
            (9, 1.01)
        };
    
    \addplot[color = Orchid, mark = triangle*]
        coordinates {
            (3, 1.04)
            (4, 1.08)
            (5, 1.12)
            (6, 1.16)
            (7, 1.19)
            (8, 1.22)
            (9, 1.25)
        };
    \end{axis}
    
    \begin{axis}[
        name = exp3egal,
        at = {($(exp3util.east)+(1.05cm,0)$)},
        anchor = west,
        height = 4cm, width = 4cm,
        align = center,
        title = \fontsize{7}{6}\selectfont{Egalitarian price with $3$ agents},
        xlabel = {Number of goods},
        ylabel = {Egalitarian price},
        xmin = 3, xmax = 9, ymin = 1, ymax = 1.62,
        xtick = {3, 4, 5, 6, 7, 8, 9},
        ytick = {1, 1.1, 1.2, 1.3, 1.4, 1.5, 1.6},
        minor y tick num = 4,
    ]
    
    \addplot[color = black, mark = *]
        coordinates {
            (3, 1.27)
            (4, 1.41)
            (5, 1.48)
            (6, 1.52)
            (7, 1.55)
            (8, 1.59)
            (9, 1.61)
        };

    \addplot[color = red, mark = x]
        coordinates {
            (3, 1.08)
            (4, 1.23)
            (5, 1.27)
            (6, 1.28)
            (7, 1.29)
            (8, 1.29)
            (9, 1.29)
        };
        
    \addplot[color = YellowOrange, mark = +]
        coordinates {
            (3, 1.32)
            (4, 1.23)
            (5, 1.27)
            (6, 1.33)
            (7, 1.40)
            (8, 1.45)
            (9, 1.50)
        };

    \addplot[color = Green, mark = o]
        coordinates {
            (3, 1.17)
            (4, 1.27)
            (5, 1.34)
            (6, 1.40)
            (7, 1.45)
            (8, 1.49)
            (9, 1.54)
        };
    
    \addplot[color = blue, mark = square*]
        coordinates {
            (3, 1.17)
            (4, 1.13)
            (5, 1.13)
            (6, 1.14)
            (7, 1.16)
            (8, 1.17)
            (9, 1.17)
        };
    
    \addplot[color = Orchid, mark = triangle*]
        coordinates {
            (3, 1.00)
            (4, 1.12)
            (5, 1.24)
            (6, 1.33)
            (7, 1.42)
            (8, 1.48)
            (9, 1.53)
        };
    \end{axis}
\end{tikzpicture}
\ref{leg:exp}
\caption{Average utilitarian and egalitarian prices under the exponential distribution with parameter $\lambda = 1$}
\label{fig:exponential}
\end{figure*}

\begin{figure*}[t]
\centering
\fontsize{6}{6}\selectfont
\tikzset{every mark/.append style={scale=0.5}}
\begin{tikzpicture}[scale=1.12]
    \begin{axis}[
        name = lognorm2util,
        height = 4cm, width = 4cm,
        align = center,
        title = \fontsize{7}{6}\selectfont{Utilitarian price with $2$ agents},
        xlabel = {Number of goods},
        ylabel = {Utilitarian price},
        xmin = 2, xmax = 9, ymin = 1, ymax = 1.24,
        xtick = {2, 3, 4, 5, 6, 7, 8, 9},
        minor y tick num = 4,
        legend style = {
            at = {(0cm, -0.8cm)},
            anchor = north west,
            cells = {anchor = west},
            legend columns = 3,
            legend to name = {leg:lognorm},
            /tikz/every even column/.append style={column sep=0.2cm},
        }
    ]
    
    \addplot[color = black, mark = *]
        coordinates {
            (2, 1.07)
            (3, 1.11)
            (4, 1.14)
            (5, 1.17)
            (6, 1.19)
            (7, 1.20)
            (8, 1.22)
            (9, 1.23)
        };
        \addlegendentry{No heuristic;}

    \addplot[color = red, mark = x]
        coordinates {
            (2, 1.00)
            (3, 1.05)
            (4, 1.06)
            (5, 1.06)
            (6, 1.06)
            (7, 1.06)
            (8, 1.05)
            (9, 1.05)
        };
        \addlegendentry{Agent with maximum utility;}
        
    \addplot[color = YellowOrange, mark = +]
        coordinates {
            (2, 1.07)
            (3, 1.07)
            (4, 1.07)
            (5, 1.08)
            (6, 1.09)
            (7, 1.09)
            (8, 1.09)
            (9, 1.09)
        };
        \addlegendentry{Good with maximum utility;}

    \addplot[color = Green, mark = o]
        coordinates {
            (2, 1.00)
            (3, 1.06)
            (4, 1.10)
            (5, 1.13)
            (6, 1.15)
            (7, 1.17)
            (8, 1.19)
            (9, 1.21)
        };
        \addlegendentry{Agent \& good with maximum utility (first round only);}
    
    \addplot[color = blue, mark = square*]
        coordinates {
            (2, 1.00)
            (3, 1.01)
            (4, 1.01)
            (5, 1.01)
            (6, 1.01)
            (7, 1.01)
            (8, 1.00)
            (9, 1.00)
        };
        \addlegendentry{Agent \& good with maximum utility;}
    
    \addplot[color = Orchid, mark = triangle*]
        coordinates {
            (2, 1.00)
            (3, 1.05)
            (4, 1.08)
            (5, 1.11)
            (6, 1.13)
            (7, 1.15)
            (8, 1.17)
            (9, 1.18)
        };
        \addlegendentry{Max-min matching}
    \end{axis}

    \begin{axis}[
        name = lognorm2egal,
        at = {($(lognorm2util.east)+(1.05cm,0)$)},
        anchor = west,
        height = 4cm, width = 4cm,
        align = center,
        title = \fontsize{7}{6}\selectfont{Egalitarian price with $2$ agents},
        xlabel = {Number of goods},
        ylabel = {Egalitarian price},
        xmin = 2, xmax = 9, ymin = 1, ymax = 1.42,
        xtick = {2, 3, 4, 5, 6, 7, 8, 9},
        minor y tick num = 4,
    ]
    
    \addplot[color = black, mark = *]
        coordinates {
            (2, 1.12)
            (3, 1.23)
            (4, 1.29)
            (5, 1.33)
            (6, 1.36)
            (7, 1.37)
            (8, 1.40)
            (9, 1.41)
        };

    \addplot[color = red, mark = x]
        coordinates {
            (2, 1.00)
            (3, 1.11)
            (4, 1.13)
            (5, 1.13)
            (6, 1.14)
            (7, 1.13)
            (8, 1.13)
            (9, 1.13)
        };
        
    \addplot[color = YellowOrange, mark = +]
        coordinates {
            (2, 1.12)
            (3, 1.10)
            (4, 1.17)
            (5, 1.24)
            (6, 1.29)
            (7, 1.33)
            (8, 1.36)
            (9, 1.38)
        };

    \addplot[color = Green, mark = o]
        coordinates {
            (2, 1.00)
            (3, 1.08)
            (4, 1.18)
            (5, 1.25)
            (6, 1.31)
            (7, 1.35)
            (8, 1.38)
            (9, 1.39)
        };
    
    \addplot[color = blue, mark = square*]
        coordinates {
            (2, 1.00)
            (3, 1.02)
            (4, 1.04)
            (5, 1.05)
            (6, 1.06)
            (7, 1.07)
            (8, 1.07)
            (9, 1.07)
        };
    
    \addplot[color = Orchid, mark = triangle*]
        coordinates {
            (2, 1.00)
            (3, 1.08)
            (4, 1.18)
            (5, 1.25)
            (6, 1.30)
            (7, 1.34)
            (8, 1.37)
            (9, 1.39)
        };
    \end{axis}

    \begin{axis}[
        name = lognorm3util,
        at = {($(lognorm2egal.east)+(1.05cm,0)$)},
        anchor = west,
        height = 4cm, width = 4cm,
        align = center,
        title = \fontsize{7}{6}\selectfont{Utilitarian price with $3$ agents},
        xlabel = {Number of goods},
        ylabel = {Utilitarian price},
        xmin = 3, xmax = 9, ymin = 1, ymax = 1.3,
        xtick = {3, 4, 5, 6, 7, 8, 9},
        minor y tick num = 4,
    ]
    
    \addplot[color = black, mark = *]
        coordinates {
            (3, 1.13)
            (4, 1.16)
            (5, 1.20)
            (6, 1.23)
            (7, 1.25)
            (8, 1.28)
            (9, 1.30)
        };

    \addplot[color = red, mark = x]
        coordinates {
            (3, 1.03)
            (4, 1.08)
            (5, 1.09)
            (6, 1.10)
            (7, 1.10)
            (8, 1.10)
            (9, 1.10)
        };
        
    \addplot[color = YellowOrange, mark = +]
        coordinates {
            (3, 1.13)
            (4, 1.10)
            (5, 1.11)
            (6, 1.11)
            (7, 1.11)
            (8, 1.12)
            (9, 1.12)
        };

    \addplot[color = Green, mark = o]
        coordinates {
            (3, 1.05)
            (4, 1.10)
            (5, 1.14)
            (6, 1.18)
            (7, 1.21)
            (8, 1.24)
            (9, 1.26)
        };
    
    \addplot[color = blue, mark = square*]
        coordinates {
            (3, 1.03)
            (4, 1.03)
            (5, 1.03)
            (6, 1.02)
            (7, 1.02)
            (8, 1.01)
            (9, 1.01)
        };
    
    \addplot[color = Orchid, mark = triangle*]
        coordinates {
            (3, 1.04)
            (4, 1.07)
            (5, 1.10)
            (6, 1.14)
            (7, 1.17)
            (8, 1.19)
            (9, 1.21)
        };
    \end{axis}
    
    \begin{axis}[
        name = lognorm3egal,
        at = {($(lognorm3util.east)+(1.05cm,0)$)},
        anchor = west,
        height = 4cm, width = 4cm,
        align = center,
        title = \fontsize{7}{6}\selectfont{Egalitarian price with $3$ agents},
        xlabel = {Number of goods},
        ylabel = {Egalitarian price},
        xmin = 3, xmax = 9, ymin = 1, ymax = 1.64,
        xtick = {3, 4, 5, 6, 7, 8, 9},
        ytick = {1, 1.1, 1.2, 1.3, 1.4, 1.5, 1.6},
        minor y tick num = 4,
    ]
    
    \addplot[color = black, mark = *]
        coordinates {
            (3, 1.25)
            (4, 1.43)
            (5, 1.51)
            (6, 1.54)
            (7, 1.58)
            (8, 1.60)
            (9, 1.64)
        };

    \addplot[color = red, mark = x]
        coordinates {
            (3, 1.06)
            (4, 1.24)
            (5, 1.28)
            (6, 1.29)
            (7, 1.30)
            (8, 1.30)
            (9, 1.30)
        };
        
    \addplot[color = YellowOrange, mark = +]
        coordinates {
            (3, 1.28)
            (4, 1.23)
            (5, 1.28)
            (6, 1.33)
            (7, 1.39)
            (8, 1.44)
            (9, 1.48)
        };

    \addplot[color = Green, mark = o]
        coordinates {
            (3, 1.14)
            (4, 1.27)
            (5, 1.35)
            (6, 1.39)
            (7, 1.45)
            (8, 1.49)
            (9, 1.53)
        };
    
    \addplot[color = blue, mark = square*]
        coordinates {
            (3, 1.13)
            (4, 1.12)
            (5, 1.13)
            (6, 1.15)
            (7, 1.16)
            (8, 1.16)
            (9, 1.16)
        };
    
    \addplot[color = Orchid, mark = triangle*]
        coordinates {
            (3, 1.00)
            (4, 1.13)
            (5, 1.24)
            (6, 1.33)
            (7, 1.40)
            (8, 1.46)
            (9, 1.51)
        };
    \end{axis}
\end{tikzpicture}
\ref{leg:lognorm}
\caption{Average utilitarian and egalitarian prices under the log-normal distribution with parameters $\mu = 0$ and $\sigma = 1$}
\label{fig:lognormal}
\end{figure*}

\begin{table}[t]
\centering
\caption{Breakdown of Spliddit instances}
\begin{tabular}{ccccccccc}
    \toprule
    \multirow{2}{*}{\parbox{1.5cm}{\centering Number\\of agents}} & \multicolumn{8}{c}{{Number of goods}} \\
    & $2$ & $3$ & $4$ & $5$ & $6$ & $7$ & $8$ & $9$ \\
    \midrule
    $2$ & $103$ & $263$ & $363$ & $80$ & $278$ & $29$ & $12$ & $6$ \\
    $3$ & $-$ & $186$ & $95$ & $67$ & $2074$ & $37$ & $10$ & $16$ \\
    \bottomrule    
\end{tabular}
\label{tab:spliddit}
\end{table}

\begin{figure*}[t]
\centering
\fontsize{6}{6}\selectfont
\tikzset{every mark/.append style={scale=0.5}}
\begin{tikzpicture}[scale=1.12]
    \begin{axis}[
        name = spliddit2util,
        height = 4cm, width = 4cm,
        align = center,
        title = \fontsize{7}{6}\selectfont{Utilitarian price with $2$ agents},
        xlabel = {Number of goods},
        ylabel = {Utilitarian price},
        xmin = 2, xmax = 9, ymin = 1, ymax = 1.2,
        xtick = {2, 3, 4, 5, 6, 7, 8, 9},
        ytick = {1, 1.1, 1.2},
        minor y tick num = 4,
        legend style = {
            at = {(0cm, -0.8cm)},
            anchor = north west,
            cells = {anchor = west},
            legend columns = 3,
            legend to name = {leg:spliddit},
            /tikz/every even column/.append style={column sep=0.2cm},
        }
    ]
    
    \addplot[color = black, mark = *]
        coordinates {
            (2, 1.06)
            (3, 1.10)
            (4, 1.13)
            (5, 1.13)
            (6, 1.17)
            (7, 1.20)
            (8, 1.16)
            (9, 1.15)
        };
        \addlegendentry{No heuristic;}

    \addplot[color = red, mark = x]
        coordinates {
            (2, 1.00)
            (3, 1.03)
            (4, 1.07)
            (5, 1.06)
            (6, 1.05)
            (7, 1.05)
            (8, 1.06)
            (9, 1.15)
        };
        \addlegendentry{Agent with maximum utility;}
        
    \addplot[color = YellowOrange, mark = +]
        coordinates {
            (2, 1.10)
            (3, 1.06)
            (4, 1.09)
            (5, 1.11)
            (6, 1.12)
            (7, 1.16)
            (8, 1.11)
            (9, 1.03)
        };
        \addlegendentry{Good with maximum utility;}

    \addplot[color = Green, mark = o]
        coordinates {
            (2, 1.02)
            (3, 1.04)
            (4, 1.08)
            (5, 1.12)
            (6, 1.14)
            (7, 1.14)
            (8, 1.13)
            (9, 1.12)
        };
        \addlegendentry{Agent \& good with maximum utility (first round only);}
    
    \addplot[color = blue, mark = square*]
        coordinates {
            (2, 1.00)
            (3, 1.01)
            (4, 1.04)
            (5, 1.03)
            (6, 1.01)
            (7, 1.02)
            (8, 1.02)
            (9, 1.01)
        };
        \addlegendentry{Agent \& good with maximum utility;}
    
    \addplot[color = Orchid, mark = triangle*]
        coordinates {
            (2, 1.00)
            (3, 1.04)
            (4, 1.10)
            (5, 1.12)
            (6, 1.11)
            (7, 1.13)
            (8, 1.09)
            (9, 1.10)
        };
        \addlegendentry{Max-min matching}
    \end{axis}

    \begin{axis}[
        name = spliddit2egal,
        at = {($(spliddit2util.east)+(1.05cm,0)$)},
        anchor = west,
        height = 4cm, width = 4cm,
        align = center,
        title = \fontsize{7}{6}\selectfont{Egalitarian price with $2$ agents},
        xlabel = {Number of goods},
        ylabel = {Egalitarian price},
        xmin = 2, xmax = 9, ymin = 1, ymax = 1.44,
        xtick = {2, 3, 4, 5, 6, 7, 8, 9},
        minor y tick num = 4,
    ]
    
    \addplot[color = black, mark = *]
        coordinates {
            (2, 1.13)
            (3, 1.18)
            (4, 1.21)
            (5, 1.19)
            (6, 1.30)
            (7, 1.44)
            (8, 1.25)
            (9, 1.18)
        };

    \addplot[color = red, mark = x]
        coordinates {
            (2, 1.00)
            (3, 1.06)
            (4, 1.10)
            (5, 1.09)
            (6, 1.10)
            (7, 1.13)
            (8, 1.11)
            (9, 1.18)
        };
        
    \addplot[color = YellowOrange, mark = +]
        coordinates {
            (2, 1.21)
            (3, 1.09)
            (4, 1.12)
            (5, 1.16)
            (6, 1.30)
            (7, 1.30)
            (8, 1.25)
            (9, 1.17)
        };

    \addplot[color = Green, mark = o]
        coordinates {
            (2, 1.04)
            (3, 1.05)
            (4, 1.14)
            (5, 1.15)
            (6, 1.22)
            (7, 1.21)
            (8, 1.24)
            (9, 1.21)
        };
    
    \addplot[color = blue, mark = square*]
        coordinates {
            (2, 1.00)
            (3, 1.01)
            (4, 1.02)
            (5, 1.04)
            (6, 1.05)
            (7, 1.03)
            (8, 1.02)
            (9, 1.09)
        };
    
    \addplot[color = Orchid, mark = triangle*]
        coordinates {
            (2, 1.00)
            (3, 1.04)
            (4, 1.12)
            (5, 1.13)
            (6, 1.19)
            (7, 1.20)
            (8, 1.23)
            (9, 1.22)
        };
    \end{axis}

    \begin{axis}[
        name = spliddit3util,
        at = {($(spliddit2egal.east)+(1.05cm,0)$)},
        anchor = west,
        height = 4cm, width = 4cm,
        align = center,
        title = \fontsize{7}{6}\selectfont{Utilitarian price with $3$ agents},
        xlabel = {Number of goods},
        ylabel = {Utilitarian price},
        xmin = 3, xmax = 9, ymin = 1, ymax = 1.4,
        xtick = {3, 4, 5, 6, 7, 8, 9},
        minor y tick num = 4,
    ]
    
    \addplot[color = black, mark = *]
        coordinates {
            (3, 1.20)
            (4, 1.18)
            (5, 1.13)
            (6, 1.22)
            (7, 1.29)
            (8, 1.15)
            (9, 1.38)
        };

    \addplot[color = red, mark = x]
        coordinates {
            (3, 1.02)
            (4, 1.07)
            (5, 1.05)
            (6, 1.07)
            (7, 1.06)
            (8, 1.07)
            (9, 1.12)
        };
        
    \addplot[color = YellowOrange, mark = +]
        coordinates {
            (3, 1.27)
            (4, 1.15)
            (5, 1.14)
            (6, 1.17)
            (7, 1.15)
            (8, 1.08)
            (9, 1.25)
        };

    \addplot[color = Green, mark = o]
        coordinates {
            (3, 1.11)
            (4, 1.12)
            (5, 1.08)
            (6, 1.18)
            (7, 1.17)
            (8, 1.14)
            (9, 1.35)
        };
    
    \addplot[color = blue, mark = square*]
        coordinates {
            (3, 1.02)
            (4, 1.04)
            (5, 1.01)
            (6, 1.03)
            (7, 1.02)
            (8, 1.01)
            (9, 1.03)
        };
    
    \addplot[color = Orchid, mark = triangle*]
        coordinates {
            (3, 1.04)
            (4, 1.08)
            (5, 1.06)
            (6, 1.13)
            (7, 1.15)
            (8, 1.15)
            (9, 1.28)
        };
    \end{axis}
    
    \begin{axis}[
        name = spliddit3egal,
        at = {($(spliddit3util.east)+(1.05cm,0)$)},
        anchor = west,
        height = 4cm, width = 4cm,
        align = center,
        title = \fontsize{7}{6}\selectfont{Egalitarian price with $3$ agents},
        xlabel = {Number of goods},
        ylabel = {Egalitarian price},
        xmin = 3, xmax = 9, ymin = 1, ymax = 1.71,
        xtick = {3, 4, 5, 6, 7, 8, 9},
        ytick = {1, 1.1, 1.2, 1.3, 1.4, 1.5, 1.6, 1.7},
        minor y tick num = 4,
    ]
    
    \addplot[color = black, mark = *]
        coordinates {
            (3, 1.52)
            (4, 1.38)
            (5, 1.30)
            (6, 1.44)
            (7, 1.51)
            (8, 1.31)
            (9, 1.62)
        };

    \addplot[color = red, mark = x]
        coordinates {
            (3, 1.04)
            (4, 1.18)
            (5, 1.16)
            (6, 1.16)
            (7, 1.20)
            (8, 1.18)
            (9, 1.24)
        };
        
    \addplot[color = YellowOrange, mark = +]
        coordinates {
            (3, 1.57)
            (4, 1.26)
            (5, 1.35)
            (6, 1.41)
            (7, 1.45)
            (8, 1.25)
            (9, 1.71)
        };

    \addplot[color = Green, mark = o]
        coordinates {
            (3, 1.36)
            (4, 1.26)
            (5, 1.24)
            (6, 1.33)
            (7, 1.34)
            (8, 1.28)
            (9, 1.57)
        };
    
    \addplot[color = blue, mark = square*]
        coordinates {
            (3, 1.07)
            (4, 1.17)
            (5, 1.12)
            (6, 1.10)
            (7, 1.10)
            (8, 1.07)
            (9, 1.24)
        };
    
    \addplot[color = Orchid, mark = triangle*]
        coordinates {
            (3, 1.00)
            (4, 1.09)
            (5, 1.10)
            (6, 1.20)
            (7, 1.31)
            (8, 1.30)
            (9, 1.51)
        };
    \end{axis}
\end{tikzpicture}
\ref{leg:spliddit}
\caption{Average utilitarian and egalitarian prices of Spliddit instances}
\label{fig:spliddit}
\end{figure*}

We present experimental results that are omitted from \cref{sec:experiments}.
In particular, \Cref{fig:exponential} displays the results when the agents' utilities are sampled from the exponential distribution with parameter $\lambda = 1$, while \Cref{fig:lognormal} shows the results produced using the log-normal distribution with parameters $\mu = 0$ and $\sigma = 1$.
Besides the experiments on synthetic instances, we also conduct an experiment on real-world instances from Spliddit, a popular fair division website~\citep{GoldmanPr14}.\footnote{This dataset has been used for empirical research in various fair division papers~\citep{Shah17}.}
The dataset contains a total of $3619$ instances with $n \in \{2, 3\}$ and $m \in \{n, \dots, 9\}$.
The breakdown on the number of instances across different pairs $(n, m)$ is provided in \cref{tab:spliddit}, and the results obtained using these instances are presented in \cref{fig:spliddit}.
While the Spliddit dataset may better reflect real-world behavior, one drawback is the limited sample size, which may affect the accuracy of the results.
Nevertheless, similar patterns can be observed across all tested probability distributions as well as the Spliddit dataset.

Despite the infinite strong egalitarian prices of all proposed heuristics except the max-min matching heuristic, the experiments show that their average egalitarian prices are less than $2$ for every $n \in \{2, 3\}$ and $m \in \{n, n+1, \dots, 9\}$ (where ``average'' is defined using the harmonic mean, as explained in \cref{sec:experiments}).
This illustrates that the loss of egalitarian welfare in an allocation obtained by these heuristics in the average case is not as significant as in the worst case.

As mentioned in \cref{sec:experiments}, the heuristic that chooses an unenvied agent~$i$ and an unallocated good~$g$ with the highest utility~$u_i(g)$ in every round performs better than other heuristics.
This result generally holds even when the probability distibution is exponential or log-normal, or when instances from the Spliddit dataset are used.
However, when there are $n = 3$ agents and $m \leq 5$ goods, under non-uniform distributions, other heuristics may have a lower average egalitarian price than the heuristic that chooses an unenvied agent~$i$ and an unallocated good~$g$ with the highest utility~$u_i(g)$ in every round.

While the heuristic that chooses an agent and a good with the highest utility in the first round (resp.~the max-min matching heuristic) performs much better than the vanilla ECE algorithm in terms of its strong utilitarian price (resp.~strong egalitarian price), the improvement is not as significant in the average-case.
On the other hand, the two heuristics with similar strong utilitarian and egalitarian prices to the vanilla ECE algorithm can perform well on average.
In particular, the heuristic choosing an agent with maximum utility has much better average utilitarian and egalitarian prices than the ECE algorithm with no heuristic for each $n \in \{2, 3\}$ and $m \in \{n+1, \dots, 9\}$.\footnote{The only exception is the case $(n,m) = (2,9)$ with the Spliddit dataset. However, this might be due to the small sample size of $6$ in this case.}
Moreover, the heuristic choosing a good with the maximum utility has a much better average utilitarian price than the vanilla ECE algorithm for each $n \in \{2, 3\}$ and $m \in \{n+1, \dots, 9\}$ on synthetic data.

A possible explanation for the above observation is that both the heuristic that chooses an agent and a good maximizing the utility as well as the max-min matching heuristic optimize the welfare only in the first few rounds.
While doing so is sufficient to significantly improve the worst-case welfare guarantees, on average, the effect of such early optimization diminishes as the number of goods (and thus the number of rounds) increases.
In contrast, despite certain counterexamples where optimizing only the agents or only the goods does not lead to significant welfare improvements compared to arbitrarily choosing the agents and goods in the ECE algorithm, the effect of such optimization is far more significant in the average case.

\end{document}